\documentclass[11pt,reqno]{amsart}
\allowdisplaybreaks[4]
\usepackage{graphicx}
\usepackage{subfigure}
\usepackage{epsfig}
\usepackage{amssymb}
\usepackage{mathtools}
\usepackage{amsmath,xypic}
\usepackage{cite,color}
\usepackage{cite}
\usepackage{amsmath}
\usepackage{tikz}
\usepackage{epic,eepic,amsmath,amsthm,amssymb}
\usepackage{epsfig,cite,indentfirst,oldgerm}
\usepackage{multicol,boxedminipage}
\usepackage{epic,eepic,amsmath,amsthm,amssymb}
\usepackage{epsfig,cite,indentfirst,oldgerm}
\usepackage{multicol,boxedminipage}
\usepackage{array,caption}
\usepackage{enumerate}
\usepackage{booktabs}
\numberwithin{equation}{section}

\newtheorem{ca}{Figure}
\newtheorem{theorem}{Theorem}[section]
\newtheorem{definition}[theorem]{Definition}
\newtheorem{proposition}[theorem]{Proposition}

\newtheorem{lemma}[theorem]{Lemma}
\newtheorem{remark}[theorem]{Remark}

\newcommand{\be}{\begin{equation}}
	\newcommand{\ee}{\end{equation}}
\newcommand{\bea}{\begin{eqnarray}}
	\newcommand{\eea}{\end{eqnarray}}
\newcommand{\ba}{\begin{array}}
	\newcommand{\ea}{\end{array}}
\newcommand{\bean}{\begin{eqnarray*}}
	\newcommand{\eean}{\end{eqnarray*}}

\def\lprod{\mathop{\prod{\mkern-29.5mu}{\mathbf\longleftarrow}}}
\def\rprod{\mathop{\prod{\mkern-28.0mu}{\mathbf\longrightarrow}}}
\begin{document}
	\title{Generalized $i$-boson model and boxed BUC plane partitions}
	\author{Shengyu Zhang, Denghui Li and  Zhaowen Yan$^*$}
	
	\dedicatory {School of Mathematical Sciences, Inner Mongolia University, \\ Inner Mongolia Key Laboratory of Mathematical Modeling and Scientific Computing,
		\\ Hohhot, Inner Mongolia 010021,  P. R. China }
	
	\thanks{*Corresponding author. Email: yanzw@imu.edu.cn (Z.W. Yan)}
	
	\begin{abstract}
This paper is devoted to investigating the relation between the   generalized $i$-boson model and  boxed BUC plane partitions. The representation of  the generalized $i$-boson algebra and the actions of the monodromy matrix operators on  basis vectors  have  been studied. We also consider the actions of neutral fermion vertex operators on state vectors  in terms of the neutral fermionic Fock space. With the help of the scalar product of the   generalized $i$-boson model, the generating function for  boxed BUC plane partitions is derived which can  be represented as the products of Schur $Q$-functions. Moreover, the generating function for BUC plane partitions with the double scaling limit is presented. \\
\textbf{Keywords}: Quantum Inverse Scattering Method; quantum integrable models;   generalized $i$-boson  model;  generating functions of plane partitions\\
\textbf{ Mathematics Subject Classifications (2000)}: 17B80, 35Q55, 37K10
\end{abstract}
\maketitle
\tableofcontents
\section{Introduction}
Great interest has been shown in quantum integrable models in mathematical physics, motivated by close relations with quantum groups\cite{quantum-groups}, conformal field theory\cite{conformal-field},   Yang-Baxter equation\cite{Yang-Baxter-1,Yang-Baxter-2}, especially in symmetric functions\cite{Mac} and classical integrable models\cite{BKP-1,BKP-2,V.G.,Jimbo1981}. The Quantum Inverse Scattering Method (QISM)\cite{QISM-1,QISM-2}  provide a fundamental tool for solving quantum integrable models. The $i$-boson model solved by the QISM is the special case of the $q$-boson model\cite{q-boson} in the limit $q \to i=\sqrt{-1}$.
Wheeler\cite{boshilunwen} found that Bethe eigenvectors of the $i$-boson model are  $\tau$ functions of the BKP hierarchy. Jing et al.\cite{i-boson} explored the actions of  operators formed by neutral fermions on state vectors and the correlation function of the $i$-boson model. The phase model as the limit of $q \to \infty$  in the $q$-boson model was discussed\cite{q-boson,B97}. According to the QISM, Tsilevich\cite{q-boson-hall-little} realized the phase model and the $q$-boson model in terms of the algebra of symmetric functions.  It is showed that the correlation function for the phase model has intimate connection with the  solution of the Toda hierarchy\cite{Q-boson}. The five-vertex model  and the four-vertex model  are special cases of the six-vertex model\cite{six}, in which  two vertices and one of vertices are frozen out, respectively. The relations between the limiting  forms of the scalar product for the five-vertex model and complete symmetric polynomials have been well discussed\cite{five}.

Plane partitions\cite{MacMahon}  have wide application in combinatorics\cite{C1,C2}, statistical mechanics\cite{Statistical-1,Statistical-2}, integrable systems\cite{ferm} and quantum field theory\cite{bosonic}. Based on the correlation functions of Schur process and shifted Schur process, the generating functions for  plane partitions and strict plane partitions  have been  studied\cite{Schur-process,shifted-Schur-process}. Meanwhile, Foda et al.\cite{Foda06,Foda09} presented the generating functions for  plane partitions and strict plane partitions based upon the fermionic description of the KP  and BKP hierarchies.  Vuleti\'{c}\cite{2-parameter} developed $1$-parameter and $2$-parameter extensions of the generating function for plane partitions related to Macdonald's symmetric functions. By means of charged $t$-fermions and  the deformed boson-fermion correspondence\cite{Jing-B,Jing-F}, $1$-parameter extension of the generating function for plane partitions has been derived\cite{1-parameter}. The generating functions for Unicersal Character (UC) and  Unicersal Character of B type (BUC) plane partitions were established by  the fermion calculus approach\cite{UC-BUC}.
	It is worth noting that the connection between the quantum integrable models solved by the QISM and boxed plane partitions  has been explored.
	With the help of the scalar product of the phase model,
	the generating function of boxed plane partitions and boxed skew plane partitions were analyzed\cite{Boxed-q-boson-model,Boxed-skew-phase-model}.
	Bogoliubov \cite{Four-vertex-Model} developed the generating  function for strict boxed plane partitions for the four-vertex model.
Factorization formulas for  partition functions of the rational six vertex model  under trapezoid boundary were introduced\cite{Six-vertex-Model}.
	Recently, the generating function of boxed UC plane partitions has been established\cite{t-g-phase-model-boxed-UC}. The main objective of this paper is to discuss the  relation  between the   generalized $i$-boson model and boxed BUC plane partitions.

	The paper is organized as follows.
	Section $2$ is devoted to the presentation of some fundamental facts about
	plane partitions and neutral fermions.
	The representation of   generalized $i$-boson algebras and the   generalized $i$-boson model  have been constructed in Section $3$.
	In Section $4$, we  define  maps from the  space of quantum
	state vectors  to the neutral fermionic Fock space and study the actions of the monodromy matrix operators on  basis vectors.
	In Section $5$,  the generating function for  boxed BUC plane partitions  be expressed  as the products of Schur $Q$-functions by means of the scalar product of the   generalized $i$-boson model.
	The core of Section $6$  is the investigation of the double scaling limit of the scalar product. The last Section is conclusions and discussions.

	\section{Preliminaries}
	In this section, we mainly retrospect some basic facts about plane partitions, neutral fermions and  state vectors corresponding to  strict $2$-partitions in $\tilde{\mathcal{F}}$ and $\tilde{\mathcal{F}}^*$\cite{Mac,BUC,boshilunwen,UC-BUC}.
	\subsection{Plane partitions}
	
	A (strict) partition  $\mu=(\mu_{1},\mu_{2},\ldots)$ is (strict) decreasing non-negative integers sequence with the weight $|\mu|=\sum\limits_{i\geq1}\mu_i$.
	The partition can be rewritten as
	\begin{align}
		\mu=(1^{m_{1}}2^{m_{2}}...),
	\end{align}
	where $m_{i}=m_{i}(\mu)$ represents the number of  times that the element $i$  appears  in  the partition $\mu$, and $m_{i}$ equals either $0$ or $1$ for strict partitions.

	A pair of partitions $\mu$ and $\nu=(\nu_{1},\nu_{2},\ldots)$ satisfy
	\begin{eqnarray}\
		\mu_{1}\geq\nu_{1}\geq\mu_{2}\geq\nu_{2}\geq\ldots,
	\end{eqnarray}
	we call that $\nu$ interlaces $\mu$ and denote it as $\mu\succ\nu$.
	Meanwhile, for (strict) $2$-partition $(\tilde{\chi}_1)= (\tilde{\mu}^1,\tilde{\nu}^1 )$ and $(\tilde{\chi}_2)= (\tilde{\mu}^2,\tilde{\nu}^2 )$, there is an interlacing relation
	\begin{align}
		(\tilde{\chi}_1)\succ(\tilde{\chi}_2)  \quad\Longleftrightarrow\quad    \tilde{\mu}^1\succ\tilde{\mu}^2 \quad \mathrm{and} \quad \tilde{\nu}^1 \succ \tilde{\nu}^2.
	\end{align}
	
	Plane partition, also known as a $3$-dimensional $(3D)$ Young diagram, involves adding non-negative integers $\pi_{ij}$ to $i$-th row and $j$-th column of the Young diagram, which satisfies
	\begin{align}
		\pi_{ij}\geq\pi_{(i+1)j}, \quad\pi_{ij}\geq\pi_{i(j+1)}, \quad \lim_{i\to\infty}\pi_{ij}=\lim_{j\to\infty}\pi_{ij}=0,\quad \text{for}~i,j\geq1.
	\end{align}
	Slicing diagonally yields a series of interlacing partitions
	\begin{eqnarray}\label{interlacing-partitions}
		 \emptyset=\pi_{-M}\prec\cdots\prec\pi_{-2}\prec\pi_{-1}\prec\pi_{0}\succ\pi_{1}\succ\pi_{2}\succ\cdots\succ\pi_{N}=\emptyset,
	\end{eqnarray}
	where the plane partition $\pi$
	is denoted as  $\pi=(\ldots,\pi_{-1},\pi_0,\pi_1,\ldots)$ with the weight  $|\pi|=\sum\limits_{i=-M}^{N}|\pi_i|$. The partition $\pi_{i}$ is represented as
	\begin{eqnarray}
		 \pi_i=\begin{cases}(\pi_{1(i+1)},\pi_{2(i+2)},\pi_{3(i+3)},\ldots)&\quad\text{for}~i\geq0,\\(\pi_{(-i+1)1},\pi_{(-i+2)2},\pi_{(-i+3)3},\ldots)&\quad\text{for}~i\leq-1.\end{cases}
	\end{eqnarray}
	If $\pi_{i}$ is a strict partition, then $\pi$ is called a strict plane partition.
	\begin{center}
		\setlength{\unitlength}{0.0012cm}
		\renewcommand{\dashlinestretch}{30}
		\hspace{-4.5cm}\begin{picture}(4800, 3600)(-3000, 0)
			\thicklines			
			\path(0000,0600)(0600,0600)
			\path(0000,1200)(1800,1200)
			\path(0000,1800)(2400,1800)
			\path(0000,2400)(3000,2400)
			\path(0000,3000)(0000,0600)
			\path(0000,3000)(3000,3000)
			\path(0600,3000)(0600,0600)
			\path(1200,3000)(1200,1200)
			\path(1800,3000)(1800,1200)			
			\path(2400,3000)(2400,1800)
			\path(3000,3000)(3000,2400)
			\put(0300,0900){1}
			\put(0300,1500){3}
			\put(0300,2100){4}
			\put(0300,2700){5}
			\put(0900,1500){1}
			\put(0900,2100){2}
			\put(0900,2700){4}
			\put(1500,1500){1}
			\put(1500,2100){2}
			\put(1500,2700){3}
			\put(2100,2100){1}
			\put(2100,2700){2}
			\put(2700,2700){1}
			\path(0000,1200)(0900,0300)
			\path(0000,1800)(0900,0900)
			\path(0000,2400)(1500,0900)
			\path(0000,3000)(2100,0900)
			\path(0600,3000)(2100,1500)
			\path(1200,3000)(2700,1500)
			\path(1800,3000)(2700,2100)
			\path(2400,3000)(3300,2100)
			\put(0900,0600){$\pi_{-2}$}
			\put(0900,0000){$\pi_{-3}$}
			\put(1500,0600){$\pi_{-1}$}
			\put(2100,1200){$\pi_{ 1}$}
			\put(2100,0600){$\pi_{ 0}$}
			\put(2700,1800){$\pi_{ 3}$}
			\put(2700,1200){$\pi_{ 2}$}
			\put(3300,1800){$\pi_{ 4}$}
\end{picture}
\begin{ca}\label{strict-2D}
A $2$-dimensional view of the strict plane partition.  The sequence of values covered in the slice is the corresponding partition. In particular, $\pi_{0}=(5,2,1)$ and $|\pi|=30$.
\end{ca}
\end{center}
	
We refer to the (strict) partition $\mu$ confined within rectangle $[N,M]$  with $N$ rows and $M$ columns  as (strict) boxed partitions and record it as $\mu \in [N,M]$. Set $(\tilde{\chi})= (\tilde{\mu},\tilde{\nu} )$, the (strict) $2$-partition $\tilde{\chi}$ is called (strict) boxed $2$-partition if and only if $\tilde{\mu}$ and $\tilde{\nu}$ are (strict) boxed partitions.
The symbol $\pi  \in [N,L,M]$ expresses that the (strict)  plane partition $\pi$ is limited to an $N\times L\times M$ box, which is called as a (strict) boxed plane partition and $N$, $L$ and $M$ represent row, column and height respectively.

The connected regions in plane partition $\tilde{\pi}$ form paths, and the total number of paths is denoted by $2^{p(\tilde{\pi})}$.
	\begin{center}
		\begin{minipage}{3in}	
			\setlength{\unitlength}{0.0012cm}
			\renewcommand{\dashlinestretch}{30}
			\hspace{-2cm}\begin{picture}(4500, 3200)(-3000, 500)
				\thicklines			
				\path(0000,0600)(0600,0600)
				\path(0000,1200)(0600,1200)
				\path(1800,1200)(0600,1200)
				\path(0000,1800)(0600,1800)
				\path(0600,1800)(1800,1800)
				\path(1800,1800)(2400,1800)						
				\path(0000,2400)(3000,2400)
				\path(0000,3000)(0000,0600)
				\path(0000,3000)(3000,3000)
				\path(0600,3000)(0600,0600)
				\path(1200,3000)(1200,2400)
				\path(1800,3000)(1800,1200)			
				\path(2400,3000)(2400,1800)
				\path(3000,3000)(3000,2400)
				\put(0230,0800){1}
				\put(0230,1400){3}
				\put(0230,2000){4}
				\put(0230,2600){5}
				\put(1100,1400){1}
				\put(1100,2000){2}
				\put(0830,2600){4}
				\put(1430,2600){3}
				\put(2030,2100){1}
				\put(2030,2600){2}
				\put(2630,2600){1}
			\end{picture}
			\begin{ca}\label{strict-2D-1}
				All  paths in Figure {\bf \ref{strict-2D}}, $p(\tilde{\pi})=11$.
			\end{ca}
		\end{minipage}
	\end{center}

	\begin{lemma}\label{strict-pi}
		Let the number of non-zero elements in $\tilde{\mu}$ be denoted as $l(\tilde{\mu})$, and the number of elements in $\tilde{\mu}$ but not in $\tilde{\nu}$ be represented as $\#(\tilde{\mu}|\tilde{\nu})$.
		The strict plane partition $\tilde{\pi}=(\ldots,\tilde{\pi}_{-1},\tilde{\pi}_0,\tilde{\pi}_1,\ldots)$	 satisfies
		\begin{align}\label{path}
			2^{-l(\tilde{\pi}_0)} \prod_{i=1}^{N} 2^{\#(\tilde{\pi}_{-i+1} | \tilde{\pi}_{-i})} \prod_{j=1}^{N} 2^{\#(\tilde{\pi}_{j-1} | \tilde{\pi}_j)} = 2^{p(\tilde{\pi})}.
		\end{align}
		
	\end{lemma}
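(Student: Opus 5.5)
The plan is to reduce everything to counting, for each connected region (path) of $\tilde{\pi}$, how many times it is "born" along the diagonal slicing. A path is a maximal connected set of cells of the diagram carrying the same value $k$. Because $\tilde{\pi}$ is a plane partition, each such region is a rim-hook-like strip: it is connected and contains no $2\times 2$ block. The no-$2\times2$-block property holds because the diagonal slices $\tilde{\pi}_i$ are strict, so two diagonally adjacent cells cannot share a value.

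Consequently, each path meets every diagonal in at most one cell. It occupies a contiguous range of diagonals $i=a,a+1,\ldots,b$, and consecutive cells of the path lie on consecutive diagonals. This turns $p(\tilde{\pi})$ into a count of connected strips.

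The key step is to interpret $\#(\tilde{\pi}_{j-1}|\tilde{\pi}_j)$ for $j\ge1$. The interlacing $\tilde{\pi}_{j-1}\succ\tilde{\pi}_j$, applied entry by entry, says $\pi_{r,r+j-1}\ge \pi_{r,r+j}\ge\pi_{r+1,r+j}$. The value $k=\pi_{r,r+j-1}$ of $\tilde{\pi}_{j-1}$ fails to appear in $\tilde{\pi}_j$ exactly when neither neighbouring cell on diagonal $j$ carries $k$. Those neighbours are the right neighbour $(r,r+j)$ and the lower neighbour $(r+1,r+j-1)$. Strictness, together with the fact that values on diagonal $j$ are sandwiched between consecutive entries of diagonal $j-1$, shows that $k$ can only appear in $\tilde{\pi}_j$ at one of those two adjacent positions. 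So an element of $\tilde{\pi}_{j-1}\setminus\tilde{\pi}_j$ is precisely a cell on diagonal $j-1$ that is the outward end of its path on the right side, i.e.\ the path does not continue to diagonal $j$.

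Symmetrically, for $i\ge1$ the factor $\#(\tilde{\pi}_{-i+1}|\tilde{\pi}_{-i})$ counts the path ends on diagonal $-i+1$ on the left side. Nonzero entries of $\tilde{\pi}_{\pm N}$ do not occur, since the partitions are boxed and $\tilde{\pi}_{\pm N}=\emptyset$. Hence every path contributes exactly one endpoint on the right (its maximal diagonal $b$, counted in the $j$-product if $b\ge0$) and exactly one on the left (its minimal diagonal $a$, counted in the $i$-product if $a\le 0$).

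Now split paths into two types. A path crossing diagonal $0$ has $a\le0\le b$ and is counted twice. A path lying entirely in $i>0$ (or in $i<0$) has its left end $a>0$, which is not counted in the $i$-product, so it is counted only once. Each path crossing diagonal $0$ meets $\tilde{\pi}_0$ in exactly one cell, so there are exactly $l(\tilde{\pi}_0)$ such paths. Therefore the total exponent is $2l(\tilde{\pi}_0)+(p-l(\tilde{\pi}_0))=p+l(\tilde{\pi}_0)$, and multiplying by $2^{-l(\tilde{\pi}_0)}$ gives \eqref{path}.

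The main obstacle is the wait: a path lying in $j>0$ has two ends there, and the paragraph above counts only one of them. So the count "one end on each side" must be stated carefully. The $j$-product counts, on each diagonal $j-1\ge0$, the cells whose path does not extend to diagonal $j$. Each path has exactly one such cell among diagonals $\ge0$ provided it reaches $\{\,\ge0\,\}$ at all, namely its cell on diagonal $b$. The $i$-product likewise counts the cell on diagonal $a$ for paths reaching $\{\,\le0\,\}$. Then the total is $\#\{\text{paths meeting }\ge0\}+\#\{\text{paths meeting }\le0\}=p+l(\tilde{\pi}_0)$. The genuinely delicate point is the local lemma that a value of $\tilde{\pi}_{j-1}$ which reappears in $\tilde{\pi}_j$ does so in an adjacent cell, which is what makes "reappears" equivalent to "path continues". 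I would prove it by combining the interlacing inequalities with strictness.
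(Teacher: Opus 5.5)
The paper does not prove Lemma~\ref{strict-pi}. It recalls it as a known fact from the strict/BKP plane-partition literature and defines ``paths'' only through Figure~2. Your argument therefore supplies a self-contained proof where the paper has none, and its structure is correct.

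You read $p(\tilde\pi)$ as the number of maximal edge-connected sets of equal nonzero entries; this matches $p=11$ in the paper's example. On that reading:
\begin{itemize}
\item the $j$-product counts each path that reaches diagonals $\ge 0$ exactly once, on its last diagonal $b$;
\item the $i$-product counts each path that reaches diagonals $\le 0$ exactly once, on its first diagonal $a$;
\item the paths counted twice are exactly those meeting diagonal $0$, and there are $l(\tilde\pi_0)$ of them.
\end{itemize}
This makes transparent why the factor $2^{-l(\tilde\pi_0)}$ appears. The worry in your ``main obstacle'' paragraph is already handled by the qualifiers $b\ge 0$ and $a\le 0$ you attached earlier. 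Simply present the count as $\#\{\text{paths with } b\ge 0\}+\#\{\text{paths with } a\le 0\}=p(\tilde\pi)+l(\tilde\pi_0)$, noting that every path satisfies at least one of the two conditions.

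There is one concrete error in the key local step.
\begin{itemize}
\item The neighbours of $(r,r+j-1)$ on diagonal $j$ are $(r,r+j)$ (to the right) and $(r-1,r+j-1)$ (above). The cell $(r+1,r+j-1)$ that you name lies on diagonal $j-2$. As written, the claim that $k$ can appear in $\tilde\pi_j$ only at one of ``those two positions'' is false.
\item The chain you quote, $\pi_{r,r+j-1}\ge\pi_{r,r+j}\ge\pi_{r+1,r+j}$, only shows $k\ge(\tilde\pi_j)_r$. You also need the interlacing at index $r-1$, which gives
$(\tilde\pi_j)_{r-1}=\pi_{r-1,r+j-1}\ge k=\pi_{r,r+j-1}\ge\pi_{r,r+j}=(\tilde\pi_j)_r$.
\item Strictness of $\tilde\pi_j$ then forces $k$, if it occurs in $\tilde\pi_j$ at all, to sit at position $r-1$ or $r$. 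That is, it sits directly above or directly to the right of $(r,r+j-1)$.
\end{itemize}
With this fix, the equivalence ``$k\in\tilde\pi_j$ if and only if the path continues to diagonal $j$'' holds, and the rest of your argument goes through unchanged.

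A smaller point concerns the claim that each path meets every diagonal at most once. This follows directly from strictness of each slice, since a value occurs at most once per diagonal; the detour through the absence of $2\times 2$ blocks is unnecessary. Because edge-adjacency only links consecutive diagonals, connectivity then forces consecutive cells of a path to be adjacent.
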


	The BUC plane partition $\tilde{\Pi}=(\ldots,\tilde{\chi}_{-1},\tilde{\chi}_0,\tilde{\chi}_1,\ldots)$ is given by
	\begin{eqnarray}\label{BUC-interlacing-partitions} \emptyset=\tilde{\chi}_{-M}\prec\cdots\prec\tilde{\chi}_{-2}\prec\tilde{\chi}_{-1}\prec\tilde{\chi}_{0}\succ\tilde{\chi}_{1}\succ\tilde{\chi}_{2}\succ\cdots\succ\tilde{\chi}_{N}=\emptyset.
	\end{eqnarray}
	The generating function for BUC plane partitions is  written as
	\begin{align}\label{UC-generating-function}
		\sum_{\begin{smallmatrix}\pi^{1}\text{ and }\pi^{2}\text{ are}\\\text{plane partitions}\end{smallmatrix}}
		p^{|\pi^{1}|}q^{|\pi^{2}|}
		=\prod_{n=1}^\infty\left(\frac1{1-p^n}\right)^n\prod_{m=1}^\infty\left(\frac1{1-q^m}\right)^m.
	\end{align}

	\subsection{Neutral fermions and state vectors}
	Consider neutral fermions sets $\{\phi_{n}\}_{n\in\mathbf{Z}}$ and $\{\bar{\phi}_{m}\}_{n\in\mathbf{Z}}$ generated by the relations
	\begin{align}
		\phi_{n}=\psi_{n}+(-1)^{n}\psi_{-n}^{*},\quad
		\bar{\phi}_{m}=\gamma_{m}+(-1)^{m}\gamma_{-m}^{*},
	\end{align}
	where charged fermions $\psi_{n},\psi_{n}^{*},\gamma_{m},\gamma_{m}^{*}$ satisfy
	\begin{align}
		&[\psi_m,\psi_n]_{+} = [\psi_m^*,\psi_n^*]_{+} = 0,[\psi_m,\psi_n^*]_{+} = \delta_{m+n,0},\\
		&[\gamma_m,\gamma_n]_{+} = [\gamma_m^*,\gamma_n^*]_{+} = 0,[\gamma_m,\gamma_n^*]_{+} = \delta_{m+n,0},\\
		&[\psi_{m},\gamma_{n}]=[\psi_{m},\gamma_{n}^{*}]=[\psi_{m}^{*},\gamma_{n}]=[\psi_{m}^{*},\gamma_{n}^{*}]=0,\\
		&\psi_n^2=\psi_n^{*2}=\gamma_m^2=\gamma_m^{*2}=0,
	\end{align}
and
	\begin{align}
		[A, B] = AB - BA, \quad [A, B]_+ = AB + BA, \quad 	 \delta_{m+n,0}=\begin{cases}1,&m+n=0,\\0,&\mathrm{otherwise}.\end{cases}
	\end{align}
	The algebra $\widetilde{\mathcal{A}}$ over
	$\mathbb{C}$ is generated by $\{\phi_{n}\}_{n\in\mathbf{Z}}$ and $\{\bar{\phi}_{m}\}_{m\in\mathbf{Z}}$ satisfying
	\begin{align}\label{commutation relations 1.2}
		 [\phi_{m},\phi_{n}]_{+}=[\bar{\phi}_{m},\bar{\phi}_{n}]_{+}=2(-1)^{m}\delta_{m+n,0},\quad[\phi_{m},\bar{\phi}_{n}]=0,\quad \phi_{m}^{2}=\bar{\phi}_{m}^{2}=\delta_{m,0}.
	\end{align}
	
	Introduce the neutral fermionic Fock space $\mathcal{\widetilde{F}}$ and the dual Fock space $\mathcal{\widetilde{F}}^{*}$,
	\begin{align}
		\mathcal{\widetilde{F}}&\stackrel{\mathrm{def}}{=}\widetilde{\mathcal{A}}\cdot| 0\rangle=\left\{a| 0\rangle|a\in\widetilde{\mathcal{A}}\right\},\notag\\
		\mathcal{\widetilde{F}}^{*}&\stackrel{\mathrm{def}}{=}\langle0|\cdot\widetilde{\mathcal{A}}=\{\langle0|a\mid a\in\widetilde{\mathcal{A}}\},
	\end{align}
	where the vacuum state $|0\rangle$ and the dual vacuum state $\langle0|$ are  denoted as
	\begin{align}
		\phi_n|0\rangle&=\bar{\phi}_n|0\rangle=0\quad\text{for } n<0,\\
		\langle0|\phi_n&=\langle0|\bar{\phi}_n=0\quad\text{for } n>0.
	\end{align}

	Define the  bilinear pairing  $\mathcal{\widetilde{F}}^*\times\mathcal{\widetilde{F}}\longmapsto  \mathbb{C}$ as
	\begin{align}
		(\langle0|a,b|0\rangle)  &\longmapsto  \langle0|a\cdot b|0\rangle=\langle ab\rangle,
	\end{align}
	with $\langle0|0\rangle=1$.
	The operators $\lambda_{m}^{'}$ and $\bar{\lambda}_{m}^{'}$ ($m\in\mathbf{N}_{\mathrm{odd}}$) are expressed as
	\begin{align}
		 \lambda_{m}^{'}=\frac{1}{4}\sum_{j\in\mathbf{Z}}(-1)^{j}\phi_{j}\phi_{-j-m},\quad\bar{\lambda}_{m}^{'}=\frac{1}{4}\sum_{j\in\mathbf{Z}}(-1)^{1}\bar{\phi}_{j}\bar{\phi}_{-j-m}.
	\end{align}
	Then
	\begin{align}\label{commutation relations 2.2}
		[\lambda_{m}^{'},\phi_{n}]=\phi_{n-m},\quad[\bar{\lambda}_{m}^{'},\bar{\phi}_{n}]=\bar{\phi}_{n-m},\quad
		[\lambda_{m}^{'},\lambda^{'}_{n}]=[\bar{\lambda}_{m}^{'},\bar{\lambda}^{'}_{n}]=\frac{m}{2}\delta_{m+n,0},
	\end{align}
	and
	\begin{align}
		\lambda^{'}_m|0\rangle=\bar{\lambda}^{'}_m|0\rangle=0\quad\text{if } m>0.
	\end{align}
	
	For convenience, set
	\begin{align}
		\phi_{n}^{*}&=(-1)^{n}\phi_{-n}=\psi_{n}^{*}+(-1)^{n}\psi_{-n},\\
		\bar{\phi}_{m}^{*}&=(-1)^{m}\bar{\phi}_{-m}
		=\gamma_{m}^{*}+(-1)^{m}\gamma_{-m}.
	\end{align}
	The  state vectors in $\tilde{\mathcal{F}}$ and $\tilde{\mathcal{F}}^*$ identified by $\tilde{\mu}^{j}=(\tilde{\mu}_{1}^{j},\tilde{\mu}_{2}^{j},\cdots,\tilde{\mu}_{2r_{j}}^{j})$ are defined as
	\begin{align}
		\langle\tilde{\mu}^{2},\tilde{\mu}^{1}| &= \langle 0|
		\bar{\phi}_{\tilde{\mu}^{2}_{2r_{2}}}^* \dots \bar{\phi}_{\tilde{\mu}^{2}_1}^*
		\phi_{\tilde{\mu}^{1}_{2r_{1}}}^* \dots \phi_{\tilde{\mu}^{1}_1}^*, \label{o-left-state}\\
		|\tilde{\mu}^{1},\tilde{\mu}^{2}\rangle &=
		\phi_{\tilde{\mu}^{1}_1} \dots  \phi_{\tilde{\mu}^{1}_{2r_{1}}}
		\bar{\phi}_{\tilde{\mu}^{2}_1} \dots  \bar{\phi}_{\tilde{\mu}^{2}_{2r_{2}}}
		|0\rangle,   \label{o-right-state}
	\end{align}
	where
	\begin{align}
		l(\tilde{\mu}^{j})=
		\left\{\begin{array}{ll}
			2r_{j}-1, &  \tilde{\mu}^{2}_{2r_{2}}=0,
			\\  2r_{j},& \tilde{\mu}^{2}_{2r_{2}}>0,
		\end{array}
		\right. \quad j=1,2.
	\end{align}
	Note that
	$\tilde{\mu}^{j}=(\tilde{\mu}_{1}^{j},\tilde{\mu}_{2}^{j},\cdots,\tilde{\mu}_{2r_{j}}^{1})$ and
	$\tilde{\nu}^{j}=(\tilde{\nu}_{1}^{j},\tilde{\nu}_{2}^{j},\cdots,\tilde{\nu}_{2r_{j}}^{1})$,
	then
	\begin{align}\label{fock-scalar-product}
		\left( \langle \tilde{\mu}^2,\tilde{\mu}^1|,| \tilde{\nu}^{1},\tilde{\nu}^{2} \rangle \right)
		=&\langle 0|
		\bar{\phi}_{\tilde{\mu}^{2}_{2r_{2}}}^* \dots \bar{\phi}_{\tilde{\mu}^{2}_1}^*
		\phi_{\tilde{\mu}^{1}_{2r_{1}}}^* \dots \phi_{\tilde{\mu}^{1}_1}^*
		\phi_{\tilde{\mu}^{1}_1} \dots  \phi_{\tilde{\mu}^{1}_{2r_{1}}}
		\bar{\phi}_{\tilde{\mu}^{2}_1} \dots  \bar{\phi}_{\tilde{\mu}^{2}_{2r_{2}}}
		|0\rangle \notag\\
		=&2^{l(\tilde{\mu}^1)}2^{l(\tilde{\mu}^2)}
		\prod_{i=1}^{2r_{1}} \delta_{\tilde{\mu}_{i}^1,\tilde{\nu}_{i}^1}
		\prod_{j=1}^{2r_{2}} \delta_{\tilde{\mu}_{j}^2,\tilde{\nu}_{j}^2}
		= 2^{l(\tilde{\mu}^1)}2^{l(\tilde{\mu}^2)}\delta_{\tilde{\mu}^1,\tilde{\nu}^1}\delta_{\tilde{\mu}^2,\tilde{\nu}^2},
	\end{align}
	where
\begin{align}
\delta_{\tilde{\mu}^j,\tilde{\nu}^j}=
\left\{\begin{array}{ll}
	1, & \tilde{\mu}^j=\tilde{\nu}^j,
	\\  0,& otherwise,
\end{array}
\right. \quad j=1,2.		
\end{align}	
	
\section{The   generalized $i$-boson model}
In this section, we construct the   generalized $i$-boson model and present
the representation of the  generalized $i$-boson algebra. Meanwhile, the global intertwining equation are established.
	\subsection{The  generalized $i$-boson algebra and the space of states}
Define  the   generalized $i$-boson algebra based on the generators $\{\varphi^{(j)},\varphi^{(j)\dagger}, \mathcal{N}^{(j)}\}$ satisfying
\begin{align}
	[\varphi^{(i)},\varphi^{(j)\dagger}] =\delta_{i,j} (-1)^{\mathcal{N}^{(i)}}, \quad [\mathcal{N}^{(i)},\varphi^{(j)}] = -\delta_{i,j}\varphi^{(i)}, \quad [\mathcal{N}^{(i)},\varphi^{(j)\dagger}] =\delta_{i,j}\varphi^{(i)\dagger},\mathrm{~for~}i,j=1,2.
\end{align}
Introduce $M_{1}+M_{2}+2$ copies of the    generalized $i$-boson algebra generated by $\{\varphi_k^{(j)},\varphi_k^{(j)\dagger}, \mathcal{N}_k^{(j)}\}$, $0\leq k\leq M_j$.
Different copies and types of the    generalized $i$-boson algebra are assumed to commute, giving rise to  equations
\begin{align}\label{generalized-calculate}
	[\varphi_{l_{i}}^{(i)},\varphi_{k_{j}}^{(j)\dagger}]&=
	\left\{\begin{array}{ll}
		\delta_{l_{i},k_{j}}(-1)^{\mathcal{N}_{l_{i}}^{(i)}}, &  i=j,
		\\  0,&  i \neq j,
	\end{array}
	\right.	
	\quad	
	[\mathcal{N}_{l_{i}}^{(i)},\varphi_{k_{j}}^{(j)}]=
	\left\{\begin{array}{ll}
		-\delta_{l_{i},k_{j}}\varphi_{l_{i}}^{(i)}, &  i=j
		\\  0,&  i \neq j,
	\end{array}
	\right.
	\\
	[\mathcal{N}_{l_{i}}^{(i)},\varphi_{k_{j}}^{(j)\dagger}]&=
	\left\{\begin{array}{ll}
		\delta_{l_{i},k_{j}}\varphi_{l_{i}}^{(i)\dagger}, &  i=j,
		\\  0,&  i \neq j,
	\end{array}
	\right.
	\ \  \mathrm{~for~all~} i,j=1,2,  \mathrm{~and~}   0 \leq l_{i},k_{j} \leq M_j.
\end{align}

	Consider integral lattices composed of a pair of lattices
	of $M_1+1$ sites and $M_2+1$ sites. Construct this pair of lattices by placing the integer $n_i^{(j)}$ at the $i$-th position in the $j$-th lattice,  the occupation number $n_i^{(j)}$  satisfies
	\begin{align}
		0 \le n_0^{(j)},  ~    0 \le n_1^{(j)},\cdots,  n_{M_j}^{(j)} \le 1, \mathrm{~for~all~} j=1,2  \mathrm{~and~}   0 \leq i \leq M_j .
	\end{align}
	Define a space of states  $\tilde{\mathcal{V}}=\mathcal{V}^{(1)}\bigotimes	\tilde{\mathcal{V}}^{(2)}$
	as the linear span of all lattice configurations and denote the basis vector in $\tilde{\mathcal{V}}$ as
	\begin{align}
		\text{Basis}(\tilde{\mathcal{V}})
		=\left\{|\tilde{n}^{1}\rangle^{(1)}\bigotimes|\tilde{n}^{2}\rangle^{(2)}=	 \bigotimes_{i=0}^{M_{1}}|\tilde{n}^{1}_{i}\rangle_{i}^{(1)}\bigotimes\bigotimes_{j=0}^{M_{2}}|\tilde{n}^{2}_{j}\rangle_{j}^{(2)}  \right\},
	\end{align}
	where $\tilde{\mathcal{V}}^{(1)}$ and $\tilde{\mathcal{V}}^{(2)}$ are $M_1+1$  and $M_2+1$ copies of the  space of states respectively, and corresponding basis vectors have the following forms
	\begin{align}
		\text{Basis}	(\tilde{\mathcal{V}}^{(j)})&
		=\left\{|\tilde{n}^{j}\rangle^{(j)}=  |\tilde{n}_{0}^{j}\rangle_0^{(j)}\otimes\cdots\otimes|\tilde{n}_{M_j}^{j}\rangle_{M_j}^{(j)}                  =	 \bigotimes_{i=0}^{M_{j}}|\tilde{n}_{i}^{j} \rangle_{i}^{(j)} \right\},\quad j=1,2.
	\end{align}
	We write the number of particles as $N_j=\sum\limits_{i=0}^{M_j} n_i^j$.
	The inner product between two basis vectors $|\tilde{m}^{1}\rangle^{(1)}\bigotimes|\tilde{m}^{2}\rangle^{(2)}$ and $|\tilde{n}^{1}\rangle^{(1)}\bigotimes|\tilde{n}^{2}\rangle^{(2)}$ is given by
	\begin{align}\label{inner-product}
		\tilde{\mathcal{I}}
		\big(
		|\tilde{m}^{1} \rangle^{(1)}\bigotimes|\tilde{m}^{2}\rangle^{(2)}, |\tilde{n}^{1}\rangle^{(1)}\bigotimes|\tilde{n}^{2}\rangle^{(2)}
		\big)
		=\theta_0^{1}\theta_0^{2}\prod_{j=1}^{2}\prod_{i=1}^{M_j} 2^{-\tilde{m}_i^{j}}\delta_{\tilde{m}_i^j, \tilde{n}_i^j},
	\end{align}
	where
	\begin{align}
		\theta_0^{j}= 2^{\tilde{m}_0^{j}}\theta\left(0 \leq m_0^{j} \leq 1\right)\delta_{\tilde{m}_0^{j},\tilde{n}_0^{j}}, \quad
		\theta(x)=
		\left\{\begin{array}{ll}
			1, &  x~\mathrm{is~true},
			\\  0,&  x~\mathrm{is~false}.
		\end{array}
		\right.	
	\end{align}
	
	The dual space of states  $\tilde{\mathcal{V}}^{\ast}=\tilde{\mathcal{V}}^{(1)\ast}\bigotimes\tilde{\mathcal{V}}^{(2)\ast} $ has the basis vector
	\begin{align}
		\text{Basis}	(\tilde{\mathcal{V}}^{\ast})
		=\left\{
		\prescript{(2)}{}{\langle \tilde{m}^{2}|}\bigotimes \prescript{(1)}{}{\langle \tilde{m}^{1}|}=
		\bigotimes_{j=0}^{M_{2}}\prescript{(2)}{j}{\langle \tilde{m}^{2}_j|}\bigotimes
		\bigotimes_{i=0}^{M_{1}}\prescript{(1)}{i}{\langle \tilde{m}^{1}_i|}
		\right\},
	\end{align}
	where
	\begin{align}
		\text{Basis}	(\tilde{\mathcal{V}}^{(j){\ast}})&
		=\left\{  \prescript{(j)}{}{\langle \tilde{m}^{j}|}=  \bigotimes_{i=0}^{M_{j}}\prescript{(j)}{i}{\langle \tilde{m}^{j}_i|} \right\}, \quad j=1,2,		
	\end{align}
	and the integer $m_i^{(j)}$ satisfies
	\begin{align}
		0 \le \tilde{m}_0^{(j)},  ~    0 \le \tilde{m}_1^{(j)},\cdots,  \tilde{m}_{M_j}^{(j)} \le 1, \mathrm{~for~all~} j=1,2 \mathrm{~and~}   0 \leq i \leq M_j .
	\end{align}
	Meanwhile,
	the action of the basis vector $\prescript{(2)}{}{\langle \tilde{m}^{2}|}\bigotimes \prescript{(1)}{}{\langle \tilde{m}^{1}|} \in \tilde{\mathcal{V}}^{\ast}$ on $|\tilde{n}^{1}\rangle^{(1)}\bigotimes|\tilde{n}^{2}\rangle^{(2)} \in \tilde{\mathcal{V}}$ is as follows
	\begin{align}\label{tgibm-scalar-product}
		\prescript{(2)}{}{\langle \tilde{m}^{2}|}\bigotimes \prescript{(1)}{}{\langle \tilde{m}^{1}|}
		\tilde{n}^{1}\rangle^{(1)}\bigotimes|\tilde{n}^{2}\rangle^{(2)}
		=\tilde{\mathcal{I}}
		\big(
		|\tilde{m}^{1} \rangle^{(1)}\bigotimes|\tilde{m}^{2}\rangle^{(2)}, |\tilde{n}^{1}\rangle^{(1)}\bigotimes|\tilde{n}^{2}\rangle^{(2)}
		\big).
	\end{align}

\subsection{The representation of the   generalized $i$-boson algebra}
The action of $\{\varphi_i^{(1)},\varphi_i^{(1)\dagger}, \mathcal{N}_i^{(1)} \}$ $( 1\leq i\leq M_1)$ on the basis vector $\bigotimes\limits_{i=0}^{M_{1}}|\tilde{n}_{i}^{1} \rangle_{i}^{(1)}\bigotimes\bigotimes\limits_{j=0}^{M_{2}}|\tilde{n}_{j}^{2}\rangle_{j}^{(2)} $ in the vector space $\tilde{\mathcal{V}}$ satisfies
	\begin{align}\label{right-ym}
		&\varphi_i^{(1)} \bigotimes_{i=0}^{M_{1}}|\tilde{n}_{i}^{1} \rangle_{i}^{(1)}\bigotimes\bigotimes_{j=0}^{M_{2}}|\tilde{n}_{j}^{2}\rangle_{j}^{(2)}   \notag\\
		=&\left\{
		\begin{array}
			{ll}0, & \tilde{n}_{i}^{1}=0, \\
			\\
			\frac{1}{\sqrt{2}}|\tilde{n}_{0}^{1} \rangle_0^{(1)}\otimes\cdots\otimes|\tilde{n}_{i}^{1}-1\rangle_i^{(1)}\otimes\cdots\otimes|\tilde{n}_{M_1}^{1}\rangle_{M_1}^{(1)}\bigotimes\bigotimes\limits_{j=0}^{M_{2}}|\tilde{n}_{j}^{2}\rangle_{j}^{(2)}, & \tilde{n}_{i}^{1}\geq1,
		\end{array}\right.
	\end{align}	
	where $\varphi_i^{(1)}$   as an annihilator will  reduce particles from the $i$-th lattice site.
Creation operators $\varphi_i^{(1)\dagger}$   increase the $i$-th occupation number of the basis vector by $1$ as follows
\begin{align}\label{right-sc}
&\varphi_i^{(1)\dagger}
\bigotimes_{i=0}^{M_{1}}|\tilde{n}_{i}^{1} \rangle_{i}^{(1)}\bigotimes\bigotimes_{j=0}^{M_{2}}|\tilde{n}_{j}^{2}\rangle_{j}^{(2)} 	\notag\\	
=&\frac{1 + (-1)^{\tilde{n}_i^{1}}}{\sqrt{2}} |\tilde{n}_{0}^{1}\rangle_0^{(1)}\otimes\cdots\otimes|\tilde{n}_{i}^{1}+1\rangle_i^{(1)}\otimes\cdots\otimes|\tilde{n}_{M_1}^{1}\rangle_{M_1}^{(1)}\bigotimes\bigotimes\limits_{j=0}^{M_{2}}|\tilde{n}_{j}^{2}\rangle_{j}^{(2)}.
\end{align}	
However, for $i=0$,
\begin{align}
&\varphi_0^{(1)}
\bigotimes_{i=0}^{M_{1}}|\tilde{n}_{i}^{1} \rangle_{i}^{(1)}\bigotimes\bigotimes_{j=0}^{M_{2}}|\tilde{n}_{j}^{2}\rangle_{j}^{(2)} 		=\frac{1 + (-1)^{\tilde{n}_0^{1}}}{\sqrt{2}} |\tilde{n}_{0}^{1}-1\rangle_0^{(1)}\otimes\cdots\otimes|\tilde{n}_{M_1}^{1}\rangle_{M_1}^{(1)}\bigotimes\bigotimes\limits_{j=0}^{M_{2}}|\tilde{n}_{j}^{2}\rangle_{j}^{(2)}, \label{right-ym0}\\
		&\varphi_0^{(1)\dagger}
		\bigotimes_{i=0}^{M_{1}}|\tilde{n}_{i}^{1} \rangle_{i}^{(1)}\bigotimes\bigotimes_{j=0}^{M_{2}}|\tilde{n}_{j}^{2}\rangle_{j}^{(2)} 	=\frac{1}{\sqrt{2}} |\tilde{n}_{0}^{1}+1\rangle_0^{(1)}\otimes\cdots\otimes|\tilde{n}_{M_1}^{1}\rangle_{M_1}^{(1)}\bigotimes\bigotimes\limits_{j=0}^{M_{2}}|\tilde{n}_{j}^{2}\rangle_{j}^{(2)}.\label{right-sc0}	 \end{align}	
	The action of $\mathcal{N}_i^{(1)}$ $(0\leq i\leq M_j)$ is defined as
	\begin{align}\label{right-eigenvalue}
		\mathcal{N}_i^{(1)}
		\bigotimes_{i=0}^{M_{1}}|n_{i}^{1} \rangle_{i}^{(1)}\bigotimes\bigotimes_{j=0}^{M_{2}}|n_{j}^{2}\rangle_{j}^{(2)}  	
		=n_{i}^{1}
		\bigotimes_{i=0}^{M_{1}}|n_{i}^{1} \rangle_{i}^{(1)}\bigotimes\bigotimes_{j=0}^{M_{2}}|n_{j}^{2}\rangle_{j}^{(2)} .
	\end{align}	
	Meanwhile,  $\{\varphi_i^{(1)},\varphi_i^{(1)\dagger}, \mathcal{N}_i^{(1)} \}$ $( 1\leq i\leq M_1)$  act on the the basis vector in the  $\tilde{\mathcal{V}}^{\ast}$  by
	\begin{align}
		&\bigotimes_{j=0}^{M_{2}}\prescript{(2)}{j}{\langle \tilde{m}^{2}_j|}\bigotimes
		\bigotimes_{i=0}^{M_{1}}\prescript{(1)}{i}{\langle \tilde{m}^{1}_i|}
		\varphi_i^{(1)\dagger}\notag\\
		=&\frac{1}{\sqrt{2}}  \delta_{\tilde{m}^{1}_i -1 \geq0}
		\bigotimes\limits_{j=0}^{M_{2}}\prescript{(2)}{j}{\langle \tilde{m}^{2}_j|}\bigotimes	
		\prescript{(1)}{0}{\langle \tilde{m}^{1}_0|}\otimes\cdots\otimes
		\prescript{(1)}{i}{\langle \tilde{m}^{1}_{i}-1|}
		\otimes\cdots\otimes\prescript{(1)}{M_1}{\langle \tilde{m}^{1}_{M_1}|},~~1\leq i\leq M_1,\label{left-ym}\\
		&\bigotimes_{j=0}^{M_{2}}\prescript{(2)}{j}{\langle \tilde{m}^{2}_j|}\bigotimes
		\bigotimes_{i=0}^{M_{1}}\prescript{(1)}{i}{\langle \tilde{m}^{1}_i|}
		\varphi_i^{(1)}\notag\\
		=&	\frac{1 + (-1)^{\tilde{m}_i^{1}}}{\sqrt{2}}
		\bigotimes\limits_{j=0}^{M_{2}}\prescript{(2)}{j}{\langle \tilde{m}^{2}_j|}\bigotimes	
		\prescript{(1)}{0}{\langle \tilde{m}^{1}_0|}\otimes\cdots\otimes
		\prescript{(1)}{i}{\langle \tilde{m}^{1}_{i}+1|}
		\otimes\cdots\otimes\prescript{(1)}{M_1}{\langle \tilde{m}^{1}_{M_1}|},~~1\leq i\leq M_1,\label{left-sc}\\
		&\bigotimes_{j=0}^{M_{2}}\prescript{(2)}{j}{\langle \tilde{m}^{2}_j|}\bigotimes
		\bigotimes_{i=0}^{M_{1}}\prescript{(1)}{i}{\langle \tilde{m}^{1}_i|}
		\varphi_0^{(1)}
		=\frac{1}{\sqrt{2}}
		\bigotimes\limits_{j=0}^{M_{2}}\prescript{(2)}{j}{\langle \tilde{m}^{2}_j|}\bigotimes
		\prescript{(1)}{0}{\langle \tilde{m}^{1}_0 +1|}\otimes\cdots\otimes\prescript{(1)}{M_1}{\langle \tilde{m}^{1}_{M_1}|},\label{left-sc0} \\
		&\bigotimes_{j=0}^{M_{2}}\prescript{(2)}{j}{\langle \tilde{m}^{2}_j|}\bigotimes
		\bigotimes_{i=0}^{M_{1}}\prescript{(1)}{i}{\langle \tilde{m}^{1}_i|}
		\varphi_0^{(1)\dagger}
		=	\frac{1 - (-1)^{\tilde{m}_0^{1}}}{\sqrt{2}}
		\bigotimes\limits_{j=0}^{M_{2}}\prescript{(2)}{j}{\langle \tilde{m}^{2}_j|}\bigotimes
		\prescript{(1)}{0}{\langle \tilde{m}^{1}_0 -1|}\otimes\cdots\otimes\prescript{(1)}{M_1}{\langle \tilde{m}^{1}_{M_1}|},\label{left-ym0}\\
		&\bigotimes_{j=0}^{M_{2}}\prescript{(2)}{j}{\langle \tilde{m}^{2}_j|}\bigotimes
		\bigotimes_{i=0}^{M_{1}}\prescript{(1)}{i}{\langle \tilde{m}^{1}_i|}
		\mathcal{N}_i^{(1)}
		=\tilde{m}^{1}_i
		\bigotimes\limits_{j=0}^{M_{2}}\prescript{(2)}{j}{\langle \tilde{m}^{2}_j|}\bigotimes
		\prescript{(1)}{0}{\langle \tilde{m}^{1}_0|}\otimes\cdots\otimes\prescript{(1)}{M_1}{\langle \tilde{m}^{1}_{M_1}|},~~ 0\leq i\leq M_1,\label{left-eigenvalue}
	\end{align}
	where 	$ \delta_{\tilde{m}^{1}_i -1 \geq0}=1$ if $\tilde{m}^{1}_i -1 \geq0$.
	The actions of $\{\varphi^{(2)}_{j},\varphi^{(2)\dagger}_{j}, \mathcal{N}^{(2)}_{j} \}$ $( 1\leq j\leq M_2)$ on the basis vectors are similar to  $\{\varphi^{(1)}_{i},\varphi^{(1)\dagger}_{i}, \mathcal{N}^{(1)}_{i}\}$ and it will not be repeated.

	\subsection{The $R$-matrix and the global intertwining equation}
	Introduce the  $R$-matrix for the   generalized $i$-boson  model
	\begin{align}
		\tilde{R}(x, y) =
		\begin{pmatrix}
			x + y & 0 & 0 & 0 \\
			0 & y - x & 2x^{\frac{1}{2}}y^{\frac{1}{2}} & 0 \\
			0 & 2x^{\frac{1}{2}}y^{\frac{1}{2}} & x - y & 0 \\
			0 & 0 & 0 & x + y
		\end{pmatrix}.
	\end{align}
	Define the $L$-matrix
	\begin{align}\label{L-j}
		\tilde{L}_{i}^{(j)}(x)=
		\begin{pmatrix}
			x^{-\frac{1}{2}} &  \sqrt{2}\varphi_i^{(j)\dagger} \\
			\sqrt{2}\varphi_i^{(j)} & x^{\frac{1}{2}}
		\end{pmatrix}, \quad j=1,2\mathrm{~and~}i=0,\cdots,M_j,
	\end{align}
	and the monodromy matrix		
	\begin{align}
		\tilde{T}(x)=\tilde{T}_{2}(x) \cdot \tilde{T}_{1}(x),
	\end{align}
	where	
	\begin{align}\label{T-j}
		\begin{aligned}
			\tilde{T}_{j}(x) =\tilde{L}_{M_{j}}^{(j)}(x)\tilde{L}_{M_{j}-1}^{(j)}(x)\cdots \tilde{L}_{0}^{(j)}(x)
			=
			\begin{pmatrix}
				\tilde{A}_j(x) & \tilde{B}_j(x) \\
				\tilde{C}_j(x) & \tilde{D}_j(x)
			\end{pmatrix}, \mathrm{~for~} j=1,2.
		\end{aligned}
	\end{align}
	The monodromy matrix satisfies the global intertwining equation
	\begin{align}
		\tilde{R}(x,y)(\tilde{T}(x)\otimes \tilde{T}(y))&=(\tilde{T}(y)\otimes \tilde{T}(x))\tilde{R}(x,y),
	\end{align}
	and the following bilinear equations  hold
	\begin{align}
		\tilde{R}(x,y)(\tilde{T}_{j}(x)\otimes \tilde{T}_{j}(y))&=(\tilde{T}_{j}(y)\otimes \tilde{T}_{j}(x))\tilde{R}(x,y),\\
		\tilde{R}(x,y)(\tilde{L}_{i}^{(j)}(x)\otimes \tilde{L}_{i}^{(j)}(y))&=(\tilde{L}_{i}^{(j)}(y)\otimes \tilde{L}_{i}^{(j)}(x))\tilde{R}(x,y). \label{bilinear-3}
	\end{align}
	Particularly,
	\begin{align}\label{B,C,commute}
		[\tilde{B}_i(x),\tilde{B}_j(y)]=[\tilde{C}_i(x),\tilde{C}_j(y)]=0, \mathrm{~for~} i,j=1,2.
	\end{align}
	
\section{The actions of the monodromy matrix operators on basis vectors}
	In this section, we establish the relation between  strict boxed   $2$-partitions and  basis vectors by  maps $\mathcal{M}$ and $\mathcal{M}^*$.
	Moreover, the action of  monodromy matrix operators on basis vectors has been investigated, which generates  interlacing  strict boxed   $2$-partitions.
	
	For basis vectors
	$| \tilde{m}^{1}\rangle^{(1)}\bigotimes|\tilde{m}^{2}\rangle^{(2)}$,
	$| \tilde{n}^{1}\rangle^{(1)}\bigotimes|\tilde{n}^{2}\rangle^{(2)}$ in $\tilde{\mathcal{V}}$ and
	$\prescript{(2)}{}{\langle \tilde{m}^{2}|}\bigotimes \prescript{(1)}{}{\langle \tilde{m}^{1}|}$,
	$\prescript{(2)}{}{\langle \tilde{n}^{2}|}\bigotimes \prescript{(1)}{}{\langle \tilde{n}^{1}|}$ in  $\tilde{\mathcal{V}}^*$, set
	\begin{align}
		\Sigma_i^{\tilde{m}^j} = \sum_{k=i}^{M_j} \tilde{m}_k^j, \quad
		\Sigma_i^{\tilde{n}^j} = \sum_{l=i}^{M_j} \tilde{n}_l^j,\quad j=1,2  \text{~and~} 0\leq i \leq M_{j}.
	\end{align}
	The symbol $| \tilde{m}^{j}\rangle^{(j)} \triangleright   |   \tilde{n}^{j}\rangle^{(j)}$
	denotes that $| \tilde{m}^{j}\rangle^{(j)}$ is admissible to   $ | \tilde{n}^{j}\rangle^{(j)}$, if and only if
	\begin{align}\label{xrxtj}
		(\Sigma_0^{\tilde{m}^j} - \Sigma_0^{\tilde{n}^j}) = 1\quad \text{and} \quad 0 \leq (\Sigma_i^{\tilde{m}^j} - \Sigma_i^{\tilde{n}^j}) \leq 1.
	\end{align}
	Define  $| \tilde{m}^{1}\rangle^{(1)}\bigotimes|\tilde{m}^{2}\rangle^{(2)}$  is admissible to $| \tilde{n}^{1}\rangle^{(1)}\bigotimes|\tilde{n}^{2}\rangle^{(2)}$ giving by
	\begin{align}\label{admiss}
		| \tilde{m}^{1}\rangle^{(1)}\bigotimes|\tilde{m}^{2}\rangle^{(2)} \triangleright  | \tilde{n}^{1}\rangle^{(1)}\bigotimes|\tilde{n}^{2}\rangle^{(2)}
		\Longleftrightarrow
		| \tilde{m}^{1}\rangle^{(1)} \triangleright   | \tilde{n}^{1}\rangle^{(1)} \text{~and~}
		| \tilde{m}^{2}\rangle^{(2)} \triangleright   |   \tilde{n}^{2}\rangle^{(2)}	.
	\end{align}

	Meanwhile, $\prescript{(j)}{}{\langle \tilde{n}^{j}|}  \triangleleft \prescript{(j)}{}{\langle \tilde{m}^{j}|} $  follows that
	\begin{align}\label{admiss-left}
		\prescript{(2)}{}{\langle \tilde{n}^{2}|}\bigotimes\prescript{(1)}{}{\langle \tilde{n}^{1}|}
		\triangleleft
		\prescript{(2)}{}{\langle \tilde{m}^{2}|}\bigotimes\prescript{(1)}{}{\langle \tilde{m}^{1}|}
		\Longleftrightarrow
		\prescript{(1)}{}{\langle \tilde{n}^{1}|}  \triangleleft \prescript{(1)}{}{\langle \tilde{m}^{1}|}
		\text{~and~}
		\prescript{(2)}{}{\langle \tilde{n}^{2}|}  \triangleleft \prescript{(2)}{}{\langle \tilde{m}^{2}|}.
	\end{align}
	
	\begin{definition}\label{M}
		Define  linear maps
		\begin{align}
			\tilde{\mathcal{M}} : \tilde{\mathcal{V}} \to \tilde{\mathcal{F}}_{0,0}, \quad
			\tilde{\mathcal{M}}^* : \tilde{\mathcal{V}}^* \to \tilde{\mathcal{F}}_{0,0}^{*},
		\end{align}
		satisfying
		\begin{align}\label{M-map}
			\tilde{\mathcal{M}}  | \tilde{n}^{1}\rangle^{(1)}\bigotimes|\tilde{n}^{2}\rangle^{(2)}&=
			2^{-l(\tilde{\nu}^{1})-l(\tilde{\nu}^{2})}	
			| \tilde{\nu}^{1},\tilde{\nu}^{2} \rangle =
			2^{-l(\tilde{\nu}^{1})-l(\tilde{\nu}^{2})}	
			| \tilde{\chi} \rangle,\\
			\prescript{(2)}{}{\langle \tilde{n}^{2}|}\bigotimes \prescript{(1)}{}{\langle \tilde{n}^{1}|}  \tilde{\mathcal{M}}^* &=
			2^{-l(\tilde{\nu}^{1})-l(\tilde{\nu}^{2})}	
			\langle \tilde{\nu}^{2},\tilde{\nu}^{1}|
			=2^{-l(\tilde{\nu}^{1})-l(\tilde{\nu}^{2})}	
			\langle \tilde{\chi}|,
		\end{align}
		where the strict partition
		\begin{align}
			\tilde{\nu}^{j}=(1^{\tilde{n}_1^{j}} 2^{\tilde{n}_2^{j}} \ldots M_{j}^{\tilde{n}_{M_{j}}^{j}} ), \quad j=1,2 .
		\end{align}	
	\end{definition}
	It can be inferred from the Lemma $4.2$ in Reference \cite{t-g-phase-model-boxed-UC} that the following lemma holds.
	\begin{lemma}\label{xr-jc-1}
		Assume that   maps   $\tilde{\mathcal{M}} $ and $\tilde{\mathcal{M}}^*$  satisfy
		\begin{align}\label{M-map}
			\tilde{\mathcal{M}}  | \tilde{m}^{1}\rangle^{(1)}\bigotimes|\tilde{m}^{2}\rangle^{(2)}&=
			2^{-l(\tilde{\mu}^{1})-l(\tilde{\mu}^{2})}	
			| \tilde{\mu}^{1},\tilde{\mu}^{2} \rangle , 	\\
			\tilde{\mathcal{M}}  | \tilde{n}^{1}\rangle^{(1)}\bigotimes|\tilde{n}^{2}\rangle^{(2)}&=
			2^{-l(\tilde{\nu}^{1})-l(\tilde{\nu}^{2})}	
			| \tilde{\nu}^{1},\tilde{\nu}^{2} \rangle, \\
			\prescript{(2)}{}{\langle \tilde{m}^{2}|}\bigotimes \prescript{(1)}{}{\langle \tilde{m}^{1}|}  \tilde{\mathcal{M}}^* &=
			2^{-l(\tilde{\mu}^{1})-l(\tilde{\mu}^{2})}	
			\langle \tilde{\mu}^{2},\tilde{\mu}^{1}|, \\	
			\prescript{(2)}{}{\langle \tilde{n}^{2}|}\bigotimes \prescript{(1)}{}{\langle \tilde{n}^{1}|}  \tilde{\mathcal{M}}^* &=
			2^{-l(\tilde{\nu}^{1})-l(\tilde{\nu}^{2})}	
			\langle \tilde{\nu}^{2},\tilde{\nu}^{1}| ,
		\end{align}	
		then
		\begin{align}
			&| \tilde{m}^{1}\rangle^{(1)}\bigotimes|\tilde{m}^{2}\rangle^{(2)} \triangleright  | \tilde{n}^{1}\rangle^{(1)}\bigotimes|\tilde{n}^{2}\rangle^{(2)}
			\implies
			\tilde{\mu}^{1}  \succ  \tilde{\nu}^{1} ,\tilde{\mu}^{2}  \succ  \tilde{\nu}^{2} , \label{r-admiss-interlace}\\
			&\prescript{(2)}{}{\langle \tilde{n}^{2}|}\bigotimes \prescript{(1)}{}{\langle \tilde{n}^{1}|}
			\triangleleft
			\prescript{(2)}{}{\langle \tilde{m}^{2}|}\bigotimes \prescript{(1)}{}{\langle \tilde{m}^{1}|}
			\implies  \tilde{\nu}^{1}  \prec  \tilde{\mu}^{1} , \tilde{\nu}^{2} \prec  \tilde{\mu}^{2}.	 \label{l-admiss-interlace}
		\end{align}	
	\end{lemma}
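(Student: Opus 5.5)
The plan is to reduce the statement to a purely combinatorial fact about a single lattice, since both the admissibility relations \eqref{admiss}, \eqref{admiss-left} and the interlacing of $2$-partitions split componentwise. By Definition \ref{M}, $\tilde{\mu}^{j}=(1^{\tilde{m}^j_1}2^{\tilde{m}^j_2}\cdots M_j^{\tilde{m}^j_{M_j}})$ and $\tilde{\nu}^{j}=(1^{\tilde{n}^j_1}\cdots M_j^{\tilde{n}^j_{M_j}})$. The site $0$ only contributes zero parts and therefore does not affect the nonzero parts. It thus suffices to fix $j$ and show that $|\tilde{m}^j\rangle^{(j)}\triangleright|\tilde{n}^j\rangle^{(j)}$ implies $\tilde{\mu}^j\succ\tilde{\nu}^j$. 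The left statement \eqref{l-admiss-interlace} should then follow verbatim, because $\triangleleft$ is defined by the same inequalities on the partial sums $\Sigma_i$.

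For fixed $j$, I read $\Sigma_i^{\tilde{m}^j}$ for $i\ge1$ as the number of parts of $\tilde{\mu}^j$ that are $\ge i$, i.e. the conjugate partition: $(\tilde{\mu}^j)'_i=\Sigma_i^{\tilde{m}^j}$, and similarly for $\tilde{\nu}^j$. The standard criterion says that $\mu\succ\nu$ holds if and only if $\nu\subseteq\mu$ and $\mu/\nu$ is a horizontal strip, which is equivalent to $0\le \mu'_i-\nu'_i\le1$ for all $i\ge1$. The admissibility condition \eqref{xrxtj} gives exactly $0\le\Sigma_i^{\tilde{m}^j}-\Sigma_i^{\tilde{n}^j}\le1$ for $1\le i\le M_j$. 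Beyond $M_j$ both sums vanish, so the criterion yields $\tilde{\mu}^j\succ\tilde{\nu}^j$.

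I would also verify the criterion directly rather than only cite it. Suppose $\mu_{k+1}>\nu_k$ for some $k$. Taking $i=\mu_{k+1}$ gives $\mu'_i\ge k+1$ and $\nu'_i\le k-1$, which contradicts $\mu'_i-\nu'_i\le1$. Suppose instead $\nu_k>\mu_k$. Taking $i=\nu_k$ gives $\nu'_i\ge k>\mu'_i$, which contradicts $\mu'_i-\nu'_i\ge0$. Strictness of the partitions plays no role in this argument.

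The only delicate point is bookkeeping. The condition at $i=0$, namely $\Sigma_0$ differing by exactly $1$, involves the unbounded site $0$. This condition constrains only the zero-part count (that is, the parity and length convention $2r_j$ in the state vectors), not the interlacing, so I will note explicitly that it is not needed for the implication. I also need to check that the normalizations $2^{-l(\cdot)}$ in $\tilde{\mathcal{M}}$ are irrelevant to the statement, since the lemma concerns only the labels $\tilde{\mu}^j$ and $\tilde{\nu}^j$. Both checks mirror Lemma $4.2$ of \cite{t-g-phase-model-boxed-UC}, applied to each of the two lattices separately.
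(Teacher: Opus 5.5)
Your proof is correct. The paper gives no argument of its own for this lemma. It only says the result can be inferred from Lemma 4.2 of \cite{t-g-phase-model-boxed-UC}, which is the corresponding statement for the two-site generalized phase model (bosonic occupation numbers, ordinary partitions). The present lemma is then its specialization to occupation numbers in $\{0,1\}$ at sites $1,\dots,M_j$, applied to each lattice separately.

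You instead give a self-contained proof. Identifying $\Sigma_i^{\tilde{m}^j}$ for $i\ge 1$ with the conjugate entry $(\tilde{\mu}^j)'_i$ turns admissibility into the horizontal-strip condition $0\le(\tilde{\mu}^j)'_i-(\tilde{\nu}^j)'_i\le 1$. Your two contrapositive checks, at $i=\mu_{k+1}$ and at $i=\nu_k$, establish exactly the direction of the strip criterion that is needed. What your route buys is transparency. It shows that neither strictness nor the $i=0$ condition is used. Also, since $\Sigma_i$ counts parts $\ge i$ with multiplicity, the same argument proves the cited phase-model statement too, so nothing has to be taken from the companion paper.

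Two minor points.
\begin{enumerate}
\item The paper never writes out the single-lattice relation $\triangleleft$. Your reading of it as the same partial-sum condition, with $\tilde{m}^j$ carrying one more particle than $\tilde{n}^j$, is an interpretation. It is, however, the one consistent with how $\tilde{\mathbb{C}}_j$ acts in Lemma~\ref{B-n,C-n,2}.
\item Your parenthetical tying the $i=0$ condition to the $2r_j$ padding convention of the state vectors is loose. Given the conditions for $i\ge 1$, the $i=0$ condition merely forces $\tilde{m}^j_0-\tilde{n}^j_0=1-(\Sigma_1^{\tilde{m}^j}-\Sigma_1^{\tilde{n}^j})\in\{0,1\}$. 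That is, it fixes the occupation of site $0$, which is invisible to $\tilde{\mu}^j$ and $\tilde{\nu}^j$. Nothing in your argument depends on it.
\end{enumerate}
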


	\begin{lemma}\label{B-n,C-n,2}
		Set monodromy matrix operators
		$\tilde{\mathbb{B}}_{j}(x) = x^{\frac{M_{j}}{2}} \tilde{B}_{j}(x)$ and $\tilde{\mathbb{C}}_{j}(x) = x^{\frac{M_{j}}{2}} \tilde{C}(\frac{1}{x})$, $j=1,2$.
		For basis vectors
		$| \tilde{n}^{1}\rangle^{(1)}\bigotimes|\tilde{n}^{2}\rangle^{(2)} \in \tilde{\mathcal{V}}$ and
		$\prescript{(2)}{}{\langle \tilde{n}^{2}|}\bigotimes \prescript{(1)}{}{\langle \tilde{n}^{1}|} \in \tilde{\mathcal{V}}^*$,
		the actions of  monodromy matrix operators on basis vectors are given by
		\begin{align}
			 &\tilde{\mathbb{B}}_{1}(x)\tilde{\mathbb{B}}_{2}(y)|\tilde{n}^{1}\rangle^{(1)}\bigotimes|\tilde{n}^{2}\rangle^{(2)} \notag\\
			=&
			\sum_{{ |\tilde{m}^{1} \rangle^{(1)} \triangleright | \tilde{n}^{1} \rangle^{(1)} }\atop{|\tilde{m}^{2} \rangle^{(2)} \triangleright | \tilde{n}^{2} \rangle^{(2)} }}
			\prod_{i=1}^{M_{1}} 2^{\delta_{(\tilde{m}^{1}_i - \tilde{n}^{1}_i), 1}}
			x^{i(\tilde{m}^{1}_i - \tilde{n}^{1}_i)}
			\prod_{j=1}^{M_{2}} 2^{\delta_{(\tilde{m}^{2}_j - \tilde{n}^{2}_j), 1}}
			y^{j(\tilde{m}^{2}_j - \tilde{n}^{2}_j)}
			| \tilde{m}^{1}\rangle^{(1)}\bigotimes|\tilde{m}^{2}\rangle^{(2)}, \label{1-B-n,C-n,2}  \\
			&\prescript{(2)}{}{\langle \tilde{n}^{2}|}\bigotimes \prescript{(1)}{}{\langle \tilde{n}^{1}|}
			\tilde{\mathbb{C}}_{2}(y)\tilde{\mathbb{C}}_{1}(x)\notag\\
			=&
			\sum_{{ \prescript{(1)}{}{\langle \tilde{n}^{1}|}
					\triangleleft
					\prescript{(1)}{}{\langle \tilde{m}^{1}|} }
				\atop{\prescript{(2)}{}{\langle \tilde{n}^{2}|}
					\triangleleft
					\prescript{(2)}{}{\langle \tilde{m}^{2}|}  }}
			\prod_{i=1}^{M_{1}} 2^{\delta_{(\tilde{m}^{1}_i - \tilde{n}^{1}_i), 1}}
			x^{i(\tilde{m}^{1}_i - \tilde{n}^{1}_i)}
			\prod_{j=1}^{M_{2}} 2^{\delta_{(\tilde{m}^{2}_j - \tilde{n}^{2}_j), 1}}
			y^{j(\tilde{m}^{2}_j - \tilde{n}^{2}_j)}
			\prescript{(2)}{}{\langle \tilde{m}^{2}|}\bigotimes \prescript{(1)}{}{\langle \tilde{m}^{1}|} .\label{2-B-n,C-n,2}
		\end{align}
	\end{lemma}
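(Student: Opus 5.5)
The plan is to reduce the statement to a computation on a single chain, because the two types commute. By (\ref{generalized-calculate}), every operator in $\tilde{T}_1$ acts only on $\tilde{\mathcal{V}}^{(1)}$ and every operator in $\tilde{T}_2$ acts only on $\tilde{\mathcal{V}}^{(2)}$. Hence
\[
\tilde{\mathbb{B}}_1(x)\tilde{\mathbb{B}}_2(y)\,|\tilde{n}^1\rangle^{(1)}\otimes|\tilde{n}^2\rangle^{(2)}
=\bigl(\tilde{\mathbb{B}}_1(x)|\tilde{n}^1\rangle^{(1)}\bigr)\otimes\bigl(\tilde{\mathbb{B}}_2(y)|\tilde{n}^2\rangle^{(2)}\bigr),
\]
and the sum over doubly admissible pairs in (\ref{admiss}) factorizes into a product of two independent sums. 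It therefore suffices to prove the single-type identity
\[
\tilde{\mathbb{B}}_j(x)|\tilde{n}^j\rangle^{(j)}=\sum_{|\tilde{m}^j\rangle\triangleright|\tilde{n}^j\rangle}\prod_i 2^{\delta_{(\tilde{m}^j_i-\tilde{n}^j_i),1}}x^{i(\tilde{m}^j_i-\tilde{n}^j_i)}|\tilde{m}^j\rangle^{(j)},
\]
together with its dual for $\tilde{\mathbb{C}}_j$ acting from the right. The statement (\ref{2-B-n,C-n,2}) follows in the same way.

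For the single chain, I would expand $\tilde{B}_j(x)$ as the $(1,2)$ entry of the ordered product $\tilde{L}^{(j)}_{M_j}(x)\cdots\tilde{L}^{(j)}_0(x)$. This is a sum over paths $s_{M_j+1}=1,\dots,s_0=2$ in $\{1,2\}$, and each factor contributes one of four entries:
\begin{itemize}
\item $(1,1)\mapsto x^{-1/2}$,
\item $(2,2)\mapsto x^{1/2}$,
\item $(1,2)\mapsto\sqrt2\,\varphi^{\dagger}_i$,
\item $(2,1)\mapsto\sqrt2\,\varphi_i$.
\end{itemize}
Since the path starts in state $2$ at site $0$ and ends in state $1$ after site $M_j$, it alternates between creators and annihilators and has exactly one more creation than annihilation. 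Reading the path from the right, the partial sums $\Sigma_i^{\tilde m}-\Sigma_i^{\tilde n}$ record whether the path is in state $1$ or $2$ at height $i$. This yields exactly the admissibility constraint (\ref{xrxtj}): the total difference is $1$, and every intermediate difference lies in $\{0,1\}$. Sites with $n_i\le1$ for $i\ge1$ together with (\ref{right-ym}) and (\ref{right-sc}) guarantee that each admissible $\tilde m$ arises from exactly one path.

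The weights then need to be checked. A creation at an empty site $i\ge1$ gives $\sqrt2\cdot\frac{2}{\sqrt2}=2$, which is the factor $2^{\delta_{(\tilde m_i-\tilde n_i),1}}$. An annihilation gives $\sqrt2\cdot\frac1{\sqrt2}=1$. Site $0$ must be handled separately using (\ref{right-ym0}) and (\ref{right-sc0}): there the creator contributes $1$, consistent with the absence of a factor for $i=0$ in the product. For the powers of $x$, a site in state $1$ (respectively $2$) between steps contributes $x^{-1/2}$ (respectively $x^{1/2}$). Multiplying by the normalization $x^{M_j/2}$, a telescoping count shows that the exponent equals $\sum_{i}(\Sigma_i^{\tilde m}-\Sigma_i^{\tilde n})$ summed appropriately. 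This should equal $\sum_i i(\tilde m_i-\tilde n_i)$ by Abel summation, since $\sum_{i\ge1}\Sigma_i=\sum_i i\,m_i$.

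The $\tilde{\mathbb{C}}_j(x)=x^{M_j/2}\tilde{C}_j(1/x)$ case is the mirror computation. It uses the left actions (\ref{left-ym})--(\ref{left-ym0}), where $\varphi$ now creates and $\varphi^\dagger$ annihilates, and the substitution $x\to1/x$ flips the diagonal weights. I expect the main obstacle to be the bookkeeping of the $x$-exponent and of the site-$0$ factors, in particular confirming that the sign conventions at site $0$ yield weight exactly $1$ and that the diagonal contributions telescope to $i(\tilde m_i-\tilde n_i)$ with no leftover. I would verify this first on $M_j=1$ before doing the general telescoping.
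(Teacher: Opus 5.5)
Your proposal is correct and follows essentially the paper's argument. The paper also splits the product into the two chains, multiplies out $\uparrow^*\tilde L_{M_j}\cdots\tilde L_0\downarrow$, and uses $E_{12}^2=E_{21}^2=0$ to force alternating creations and annihilations. The one difference is that it sandwiches between basis bras and kets (carrying the norms $2^{-\tilde m^j_i}$ and $\theta^j_0$) and divides these out at the end, whereas you act directly on kets, which also avoids dividing by a vanishing norm when $\tilde m^j_0=2$.

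The exponent bookkeeping you left open does close. In your path notation, let $s_i$ be the column index of $\tilde L^{(j)}_i$, and set $f(1)=-1$, $f(2)=+1$. Then site $i$ contributes $x^{\frac14(f(s_i)+f(s_{i+1}))}$ whether or not it carries an event. Since $f(s_0)+f(s_{M_j+1})=0$, the total exponent of $\tilde{\mathbb B}_j(x)$ is $\frac12\sum_{i=1}^{M_j}(1+f(s_i))=\#\{1\le i\le M_j:\ s_i=2\}=\sum_{i\ge1}(\Sigma_i^{\tilde m^j}-\Sigma_i^{\tilde n^j})=\sum_i i(\tilde m^j_i-\tilde n^j_i)$.
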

	\begin{proof}
		 Eq.$(\ref{T-j})$ leads to
		\begin{align}
			\tilde{B}_1 (x) \tilde{B}_2 (y)=&( 1\ 0 )
			\begin{pmatrix}
				\tilde{A}_1(x) & \tilde{B}_1(x) \\
				\tilde{C}_1(x) & \tilde{D}_1(x)
			\end{pmatrix}
			\begin{pmatrix}
				0 \\
				1
			\end{pmatrix}
			( 1\ 0 )
			\begin{pmatrix}
				\tilde{A}_1(x) & \tilde{B}_1(x) \\
				\tilde{C}_1(x) & \tilde{D}_1(x)
			\end{pmatrix}
			\begin{pmatrix}
				0 \\
				1
			\end{pmatrix}\notag\\
			=& \uparrow^*  \tilde{L}_{M_1}^{(1)}(x)\cdots \tilde{L}_0^{(1)}(x)\downarrow \uparrow^* \tilde{L}_{M_2}^{(2)}(y)\cdots \tilde{L}_0^{(2)}(y) \downarrow,
		\end{align}	
		where
		\begin{align}
			\begin{pmatrix}
				1 \\
				0
			\end{pmatrix}= \uparrow, \quad
			\begin{pmatrix}
				0 \\
				1
			\end{pmatrix}= \downarrow, \quad ( 1\ 0 )=\uparrow^*, \quad
			( 0\ 1 )=\downarrow^*.
		\end{align}
		Then
		\begin{align}\label{B1-B2}
			&\prescript{(2)}{}{\langle \tilde{m}^{2}|}\bigotimes \prescript{(1)}{}{\langle \tilde{m}^{1}|}
			\tilde{B}_1 (x) \tilde{B}_2 (y)
			|\tilde{n}^{1}\rangle^{(1)}\bigotimes|\tilde{n}^{2}\rangle^{(2)}\notag\\
			=&\prescript{(1)}{}{\langle \tilde{m}^{1}|}
			\tilde{B}_1 (x)
			|\tilde{n}^{1}\rangle^{(1)}
			\prescript{(2)}{}{\langle \tilde{m}^{2}|}
			\tilde{B}_2 (y)
			|\tilde{n}^{2}\rangle^{(2)}\notag\\
			=&\uparrow^*
			\mathbb{L}_{M_1}^{(1)}(x)\cdots \mathbb{L}_0^{(1)}(x)\downarrow \uparrow^* \mathbb{L}_{M_2}^{(2)}(y)\cdots \mathbb{L}_0^{(2)}(y)
			\downarrow,
		\end{align}
		where
		\begin{align}
			\mathbb{L}_{i}^{(j)}(x) =
			\begin{pmatrix}
				\prescript{(j)}{i}{\langle \tilde{m}^{j}_i|}  x^{- \frac{1}{2}} | \tilde{n}_i^{j} \rangle_i^{(j)} & \sqrt{2}\prescript{(j)}{i}{\langle \tilde{m}^{j}_i|}\varphi_i^{(j)\dagger} | \tilde{n}_i^{j} \rangle_i^{(j)} \\
				\sqrt{2}\prescript{(j)}{i}{\langle \tilde{m}^{j}_i|}\varphi_i^{(j)} | \tilde{n}_i^{j} \rangle_i^{(j)} &\prescript{(j)}{i}{\langle \tilde{m}^{j}_i|} x^{\frac{1}{2}}
				| \tilde{n}_i^{j} \rangle_i^{(j)}
			\end{pmatrix} , \quad j=1,2  \text{~and~} 0\leq i \leq M_{j}.
		\end{align}
		According to Eq.$(\ref{tgibm-scalar-product})$ and Eqs.$(\ref{right-ym})$-$(\ref{left-ym0})$,  we have
		\begin{align}
			\mathbb{L}_{i}^{(j)}(x) =
			\begin{cases}
				2^{-\tilde{m}_{i}^{j}}
				\begin{pmatrix}
					x^{-\frac{1}{2}} & 0 \\
					0 & x^{\frac{1}{2}}
				\end{pmatrix}, & \tilde{m}_i^{j} = \tilde{n}_i^{j}, \\
				2^{-\tilde{m}_{i}^{j}} (1 + (-1)^{\tilde{m}_i^{j}-1})
				\begin{pmatrix}
					0 & 1 \\
					0 & 0
				\end{pmatrix}, & \tilde{m}^{j}_i = \tilde{n}_i^{j} + 1, \\
				2^{-\tilde{m}_{i}^{j}}
				\begin{pmatrix}
					0 & 0 \\
					1 & 0
				\end{pmatrix}, & \tilde{m}_i^{j} = \tilde{n}_i^{j} - 1, \\
				2^{-\tilde{m}_{i}^{j}}
				\begin{pmatrix}
					0 & 0 \\
					0 & 0
				\end{pmatrix}, & \text{otherwise},
			\end{cases}
		\end{align}
		for all $j=1,2$ and $1\leq i \leq M_{j}$, as well as
		\begin{align}
			\mathbb{L}_{0}^{(j)}(x) =
			\begin{cases}
				\theta_0^{j}
				\begin{pmatrix}
					x^{-\frac{1}{2}} & 0 \\
					0 & x^{\frac{1}{2}}
				\end{pmatrix}, & \tilde{m}_0^{j} = \tilde{n}_0^{j}, \\
				\theta_0^{j}
				\begin{pmatrix}
					0 & 1 \\
					0 & 0
				\end{pmatrix}, & \tilde{m}^{j}_0 = \tilde{n}_0^{j} + 1, \\
				\theta_0^{j}(1 - (-1)^{\tilde{m}_0^{j}+1})
				\begin{pmatrix}
					0 & 0 \\
					1 & 0
				\end{pmatrix}, & \tilde{m}_0^{j} = \tilde{n}_0^{j} - 1, \\
				\theta_0^{j}
				\begin{pmatrix}
					0 & 0 \\
					0 & 0
				\end{pmatrix}, & \text{otherwise}.
			\end{cases}
		\end{align}
		Setting
		\begin{align}
			E_{12}=\begin{pmatrix}
				0 & 1 \\
				0 & 0
			\end{pmatrix}, \quad
			E_{21}=\begin{pmatrix}
				0 & 0 \\
				1 & 0
			\end{pmatrix}.
		\end{align}
		When $|\tilde{m}^{j}\rangle^{(j)} \not\triangleright |\tilde{n}^j \rangle^{(j)} $, $\tilde{m}^{j}_i = \tilde{n}_i^{j} + 1$ or $\tilde{m}_i^{j} = \tilde{n}_i^{j}-1$ will occur consecutively.
		Since $E_{12}^2=E_{21}^2=0$, then
		\begin{align}
			\uparrow^*
			\mathbb{L}_{M_1}^{(1)}(x)\cdots \mathbb{L}_0^{(1)}(x)\downarrow \uparrow^* \mathbb{L}_{M_2}^{(2)}(y)\cdots \mathbb{L}_0^{(2)}(y)
			\downarrow=0.
		\end{align}
		If $|\tilde{m}^{j}\rangle^{(j)} \triangleright |n^j \rangle^{(j)}$, $\tilde{m}^{j}_i = \tilde{n}_i^{j} + 1$ and $\tilde{m}_i^{j} = \tilde{n}_i^{j}-1$ appear alternately.
		Let the sets consisting of the subscripts satisfying $\tilde{m}^{j}_i = \tilde{n}_i^{j} + 1$ and $\tilde{m}_i^{j} = \tilde{n}_i^{j}-1$, which be denoted as  $\{p_1^{j},\cdots,p_{r_j}^{j} \}$ and $\{q_1^{j},\cdots,q_{s_j}^{j} \}$, respectively.
		Meanwhile, it follows from Eq.(\ref{xrxtj}) that the subscripts satisfy  $r_j -1 = s_j$ and
		\begin{align}
			p_i^{j} < q_i^{j} < p_{i+1}^{j}, \quad \text{for all } j=1,2, \text{ and }1 \leq i \leq r_j - 1.
		\end{align}
	Eq.$(\ref{B1-B2})$ is rewritten as
		\begin{align}
			&\prescript{(2)}{}{\langle \tilde{m}^{2}|}\bigotimes \prescript{(1)}{}{\langle \tilde{m}^{1}|}
			\tilde{B}_1 (x) \tilde{B}_2 (y)
			|\tilde{n}^{1}\rangle^{(1)}\bigotimes|\tilde{n}^{2}\rangle^{(2)}\notag\\
			=&\uparrow^*
			\mathbb{L}_{M_1}^{(1)}(x)\cdots \mathbb{L}_0^{(1)}(x)A_{21}\uparrow
			\uparrow^* \mathbb{L}_{M_2}^{(2)}(y)\cdots \mathbb{L}_0^{(2)}(y) A_{21}
			\uparrow    \notag\\
			=&\frac{\theta_0^{1}}{\prod\limits_{i=1}^{M_1} 2^{\tilde{m}_{i}^{1}}}
			\uparrow^* \rprod_{i=1}^{r_1}
			\left[
			\begin{pmatrix}
				x^{-\frac{1}{2}} & 0 \\
				0 & x^{\frac{1}{2}}
			\end{pmatrix}^{q_i^1 - p_{i-1}^1 - 1}
			\begin{pmatrix}
				0 & (1 + (-1)^{\tilde{m}_i^{1}-1}) \\
				0 & 0
			\end{pmatrix}
			\begin{pmatrix}
				x^{-\frac{1}{2}} & 0 \\
				0 & x^{\frac{1}{2}}
			\end{pmatrix}^{p_i^1 - q_{i-1}^1 - 1}
			\begin{pmatrix}
				0 & 0 \\
				1 & 0
			\end{pmatrix}
			\right] \uparrow  \notag\\
			&\frac{	\theta_0^{2}}{\prod\limits_{l=1}^{M_2} 2^{\tilde{m}_{l}^{1}}}
			\uparrow^* \rprod_{l=1}^{r_2}
			\left[
			\begin{pmatrix}
				y^{-\frac{1}{2}} & 0 \\
				0 & y^{\frac{1}{2}}
			\end{pmatrix}^{q_l^2 - p_{l-1}^2 - 1}
			\begin{pmatrix}
				0 & (1 + (-1)^{\tilde{m}_l^{2}-1}) \\
				0 & 0
			\end{pmatrix}
			\begin{pmatrix}
				y^{-\frac{1}{2}} & 0 \\
				0 & y^{\frac{1}{2}}
			\end{pmatrix}^{p_l^2 - q_{l-1}^2 - 1}
			\begin{pmatrix}
				0 & 0 \\
				1 & 0
			\end{pmatrix}
			\right] \uparrow  \notag\\
			=&\theta_0^{1}\theta_0^{2}\prod_{j=1}^{2}\prod_{k=1}^{M_j} 2^{-\tilde{m}_k^{j}}
			x^{-\frac{M_1}{2}}
			\prod_{i=1}^{M_1}
			2^{\delta_{(\tilde{m}_i^1 - \tilde{n}_i^1), 1}}
			x^{i(\tilde{m}_i^1 - \tilde{n}_i^1)}
			\cdot
			y^{-\frac{M_2}{2}} \prod_{l=1}^{M_2}
			2^{\delta_{(\tilde{m}_l^2 - \tilde{n}_l^2), 1}}
			y^{l(\tilde{m}_l^2 - \tilde{n}_l^2)},
		\end{align}
		where $q_0^{j}=-1$, $q_{r_j}^{j}=M_j +1$ and
		$\rprod\limits_{l=1}^{r_2}$  means that the subscripts decrease in the direction of the arrow.
		Therefore,
		\begin{align}
			&\prescript{(2)}{}{\langle \tilde{m}^{2}|}\bigotimes \prescript{(1)}{}{\langle \tilde{m}^{1}|}
			\tilde{\mathbb{B}}_{1}(x)\tilde{\mathbb{B}}_{2}(y)
			|\tilde{n}^{1}\rangle^{(1)}\bigotimes|\tilde{n}^{2}\rangle^{(2)}\notag\\
			=&
			\theta_0^{1}\theta_0^{2}\prod_{j=1}^{2}\prod_{k=1}^{M_j} 2^{-\tilde{m}_k^{j}}
			\prod_{i=1}^{M_1}
			2^{\delta_{(\tilde{m}_i^1 - \tilde{n}_i^1), 1}}
			x^{i(\tilde{m}_i^1 - \tilde{n}_i^1)}
			\cdot
			\prod_{l=1}^{M_2}
			2^{\delta_{(\tilde{m}_l^2 - \tilde{n}_l^2), 1}}
			y^{l(\tilde{m}_l^2 - \tilde{n}_l^2)}.
		\end{align}
		In view of  Eq.$(\ref{tgibm-scalar-product})$ and Eq.(\ref{xrxtj}), Eq.$(\ref{1-B-n,C-n,2})$ can be derived.
		Analogous to the proof of Eq.$(\ref{1-B-n,C-n,2})$,   we can prove Eq.$(\ref{2-B-n,C-n,2})$.
	\end{proof}
	Similarly,  the following equations  hold
	\begin{align}
		\tilde{\mathbb{B}}_{1}(x)|\tilde{n}^{1}\rangle^{(1)}&=\sum_{|\tilde{m}^{1} \rangle^{(1)} \triangleright | \tilde{n}^{1} \rangle^{(1)} }\prod_{i=1}^{M_1}
		2^{\delta_{(\tilde{m}_i^1 - \tilde{n}_i^1), 1}}
		x^{i(\tilde{m}^{1}_i - \tilde{n}^{1}_i)} | \tilde{m}^{1} \rangle ^{(1)}, \label{B1-1}
		\\
		\tilde{\mathbb{B}}_{2}(y)|\tilde{n}^{2}\rangle^{(2)}&=\sum_{|\tilde{m}^{2} \rangle^{(2)} \triangleright | \tilde{n}^{2} \rangle^{(2)} }
		\prod_{j=1}^{M_2} 2^{\delta_{(\tilde{m}_j^2 - \tilde{n}_j^2), 1}}
		y^{j(\tilde{m}^{2}_j - \tilde{n}^{2}_j)} | \tilde{m}^{2} \rangle^{(2)},\label{B2-1}\\
		\prescript{(1)}{}{\langle \tilde{n}^{1}|}\tilde{\mathbb{C}}_{1}(x)
		&=\sum_{\prescript{(1)}{}{\langle \tilde{n}^{1}|}
			\triangleleft
			\prescript{(1)}{}{\langle \tilde{m}^{1}|} }
		\prod_{i=1}^{M_1} 2^{\delta_{(\tilde{m}_i^1 - \tilde{n}_i^1), 1}}x^{i(\tilde{m}^{1}_i - \tilde{n}^{1}_i)}\prescript{(1)}{}{\langle \tilde{m}^{1}|},\label{C1-1}\\
		\prescript{(2)}{}{\langle \tilde{n}^{2}|}\tilde{\mathbb{C}}_{2}(y)
		&=\sum_{\prescript{(2)}{}{\langle \tilde{n}^{2}|}
			\triangleleft
			\prescript{(2)}{}{\langle \tilde{m}^{2}|} }
		\prod_{j=1}^{M_2} 2^{\delta_{(\tilde{m}_j^2 - \tilde{n}_j^2), 1}}
		y^{j(m^{2}_j - \tilde{n}^{2}_j)}  \prescript{(2)}{}{\langle \tilde{m}^{2}|} \label{C2-1}.
	\end{align}
	
	For an arbitrary pair of strict partitions $\tilde{\mu}$ and $\tilde{\nu}$,
	the skew Schur $Q$-function is defined as
	\begin{align}\label{skew-schur}
		Q_{\tilde{\mu}/\tilde{\nu}}(x) =
		\begin{cases}
			2^{\#(\tilde{\mu}|\tilde{\nu})} x^{|\tilde{\mu}| - |\tilde{\nu}|}, & \tilde{\mu} \succ \tilde{\nu}, \\
			0, & \text{otherwise}.
		\end{cases}
	\end{align}
	In the case of $\tilde{\nu}=\emptyset$, the skew Schur $Q$-function $Q_{\tilde{\mu}/\tilde{\nu}}(x) $ is reduced to Schur $Q$-function $Q_{\tilde{\mu}}(x) $ that is given by
	\begin{align}
		Q_{\tilde{\mu}}\{\mathbf{x}\} = \operatorname{Pf}\left( q_{\tilde{\mu}_i}\{\mathbf{x}\} q_{\tilde{\mu}_j}\{\mathbf{x}\} + 2 \sum_{k=1}^{\tilde{\mu}_j} (-)^k q_{(\tilde{\mu}_i + k)}\{\mathbf{x}\} q_{(\tilde{\mu}_j - k)}\{\mathbf{x}\} \right)_{\substack{1 \leq i < j \leq 2r}},
	\end{align}
	where $\{\mathbf{x}\}=\{x_1,\cdots,x_{n}\}$,
	\begin{align}
		q_{m}\{\mathbf{x}\} = \operatorname{Coeff}_{k^m}\left[ \prod_{i=1}^{\infty} \frac{1 + x_i k}{1 - x_i k} \right]
	\end{align}
	and $Pf(A)$ denotes a Pfaffian of skew symmetric matrix $A=(a_{i,j})_{2r \times 2r}$ and is  of the form
	\begin{align}
		Pf(A) = \sum_{\omega \in \mathcal{S}_{2n}} sgn(\omega) a_{\omega(1)\omega(2)} \cdots a_{\omega(2n-1)\omega(2n)},
	\end{align}
	which summed over $2r$-order permutation group $\omega \in S_{2r}$ such that $\omega(2i-1) < \omega(2i)$ for $1 \leq i \leq r$, and $\omega(2i-1) < \omega(2i+1)$ for $1 \leq i \leq r-1.$
	The following  identity holds
	\begin{align}\label{schur-skew-schurQ}
		Q_{\tilde{\mu}}\{x_1, \ldots, x_n\} = \sum_{\tilde{\nu} \subset [n-1, \infty]} Q_{\tilde{\mu}/\tilde{\nu}}(x_n) Q_{\tilde{\nu}}\{x_1, \ldots, x_{n-1}\}.	
	\end{align}

	\begin{proposition}
		Set 
		\begin{align}\label{M-map}
			\tilde{\mathcal{M}}  | \tilde{n}^{1}\rangle^{(1)}\bigotimes|\tilde{n}^{2}\rangle^{(2)}&=
			2^{-l(\tilde{\nu}^{1})-l(\tilde{\nu}^{2})}	| \tilde{\nu}^{1},\tilde{\nu}^{2} \rangle ,
			\quad 	\tilde{\mathcal{M}}  | m^{1}\rangle^{(1)}\bigotimes|m^{2}\rangle^{(2)}=
			2^{-l(\tilde{\nu}^{1})-l(\tilde{\nu}^{2})}
			| \tilde{\mu}^{1},\tilde{\mu}^{2} \rangle,
			\\
			\prescript{(2)}{}{\langle \tilde{n}^{2}|}\bigotimes \prescript{(1)}{}{\langle \tilde{n}^{1}|}  \tilde{\mathcal{M}}^* &= 2^{-l(\tilde{\nu}^{1})-l(\tilde{\nu}^{2})}
			\langle \tilde{\nu}^{2},\tilde{\nu}^{1}|, \quad
			\prescript{(2)}{}{\langle \tilde{m}^{2}|}\bigotimes \prescript{(1)}{}{\langle \tilde{m}^{1}|}  \tilde{\mathcal{M}}^* = 2^{-l(\tilde{\nu}^{1})-l(\tilde{\nu}^{2})}
			\langle \tilde{\mu}^{2},\tilde{\mu}^{1}| ,
		\end{align}
		we have
		\begin{align}
			&\tilde{\mathcal{M}} \tilde{\mathbb{B}}_{1}(x)\tilde{\mathbb{B}}_{2}(y)|\tilde{n}^{1}\rangle^{(1)}\bigotimes|\tilde{n}^{2}\rangle^{(2)} \notag\\
			=& \sum_{{ \tilde{\nu}^{1} \prec \tilde{\mu}^{1} \subseteq [l_{1}+1,M_{1}] }\atop{\tilde{\nu}^{2} \prec \tilde{\mu}^{2} \subseteq [l_{2}+1,M_{2}]}}
			2^{\#(\tilde{\mu}^{1}|\tilde{\nu}^{1}) - l(\tilde{\mu}^{1})}
			2^{\#(\tilde{\mu}^{2}\tilde{\nu}^{2}) - l(\tilde{\mu}^{2})}
			x^{|\tilde{\mu}^{1}| - |\tilde{\nu}^{1}|} y^{|\tilde{\mu}^{2}| - |\tilde{\nu}^{2}|}
			| \tilde{\mu}^{1},\tilde{\mu}^{2} \rangle	\notag\\
			= &\sum_{{ \tilde{\nu}^{1} \prec \tilde{\mu}^{1} \subseteq [l_{1}+1,M_{1}] }\atop{\tilde{\nu}^{2} \prec \tilde{\mu}^{2} \subseteq [l_{2}+1,M_{2}]}}
			2^{- l(\tilde{\mu}^{1})}	Q_{\tilde{\mu}^{1}/\tilde{\nu}^{1}}(x) 2^{- l(\tilde{\mu}^{2})}Q_{\tilde{\mu}^{2}/\tilde{\nu}^{2}}(y)
			| \tilde{\mu}^{1},\tilde{\mu}^{2} \rangle,	\label{M-B1-B2-N}
			\\
			&\prescript{(2)}{}{\langle \tilde{n}^{2}|}\bigotimes \prescript{(1)}{}{\langle \tilde{n}^{1}|}
			\tilde{\mathbb{C}}_{2}(y)\tilde{\mathbb{C}}_{1}(x)
			\tilde{\mathcal{M}}^* \notag\\
			=&
			\sum_{{ \tilde{\nu}^{1} \prec \tilde{\mu}^{1} \subseteq [l_{1}+1,M_{1}] }\atop{\tilde{\nu}^{2} \prec \tilde{\mu}^{2} \subseteq [l_{2}+1,M_{2}]}}
			2^{\#(\tilde{\mu}^{1}|\tilde{\nu}^{1}) - l(\tilde{\mu}^{1})}
			2^{\#(\tilde{\mu}^{2}\tilde{\nu}^{2}) - l(\tilde{\mu}^{2})}
			x^{|\tilde{\mu}^{1}| - |\tilde{\nu}^{1}|} y^{|\tilde{\mu}^{2}| - |\tilde{\nu}^{2}|}
			\langle \tilde{\mu}^{2},\tilde{\mu}^{1} |\notag\\
			=&	\sum_{{ \tilde{\nu}^{1} \prec \tilde{\mu}^{1} \subseteq [l_{1}+1,M_{1}] }\atop{\tilde{\nu}^{2} \prec \tilde{\mu}^{2} \subseteq [l_{2}+1,M_{2}]}}
			2^{- l(\tilde{\mu}^{1})}	Q_{\tilde{\mu}^{1}/\tilde{\nu}^{1}}(x) 2^{- l(\tilde{\mu}^{2})}Q_{\tilde{\mu}^{2}/\tilde{\nu}^{2}}(y)
			\langle \tilde{\mu}^{2},\tilde{\mu}^{1} |,\label{M-C1-C2-N}
		\end{align}	
		which are summed over interlacing  strict boxed   $2$-partitions.
	\end{proposition}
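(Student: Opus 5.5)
The plan is to apply the linear map $\tilde{\mathcal{M}}$ to both sides of Eq.~(\ref{1-B-n,C-n,2}) from Lemma \ref{B-n,C-n,2}, and then rewrite each summand in terms of strict $2$-partitions. By linearity,
\begin{align*}
\tilde{\mathcal{M}}\tilde{\mathbb{B}}_{1}(x)\tilde{\mathbb{B}}_{2}(y)|\tilde{n}^{1}\rangle^{(1)}\bigotimes|\tilde{n}^{2}\rangle^{(2)}
=\sum_{{ |\tilde{m}^{1} \rangle^{(1)} \triangleright | \tilde{n}^{1} \rangle^{(1)} }\atop{|\tilde{m}^{2} \rangle^{(2)} \triangleright | \tilde{n}^{2} \rangle^{(2)} }}
\prod_{j=1}^{2}\prod_{i=1}^{M_{j}} 2^{\delta_{(\tilde{m}^{j}_i - \tilde{n}^{j}_i), 1}} x_j^{i(\tilde{m}^{j}_i - \tilde{n}^{j}_i)}\;
2^{-l(\tilde{\mu}^{1})-l(\tilde{\mu}^{2})}|\tilde{\mu}^{1},\tilde{\mu}^{2}\rangle ,
\end{align*}
with $x_1=x$, $x_2=y$. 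Here $\tilde{\mu}^{j}=(1^{\tilde{m}^j_1}\cdots M_j^{\tilde{m}^j_{M_j}})$ is given by Definition \ref{M}. Three identifications remain to be made.

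\textbf{Step 1 (exponents).} Since $\tilde{\mu}^j$ contains the part $i$ exactly when $\tilde{m}^j_i=1$, we have $\sum_{i}i(\tilde{m}^j_i-\tilde{n}^j_i)=|\tilde{\mu}^j|-|\tilde{\nu}^j|$. This gives the monomials $x^{|\tilde{\mu}^{1}|-|\tilde{\nu}^{1}|}y^{|\tilde{\mu}^{2}|-|\tilde{\nu}^{2}|}$.

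\textbf{Step 2 (powers of $2$).} The sites $i\ge1$ with $\tilde{m}^j_i-\tilde{n}^j_i=1$ are exactly the parts of $\tilde{\mu}^j$ that are absent from $\tilde{\nu}^j$. Hence $\prod_i 2^{\delta_{(\tilde{m}^j_i-\tilde{n}^j_i),1}}=2^{\#(\tilde{\mu}^j|\tilde{\nu}^j)}$. Site $0$ carries no part, so it contributes nothing here; it only absorbs the particle count in the admissibility condition $\Sigma_0^{\tilde m}-\Sigma_0^{\tilde n}=1$.

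\textbf{Step 3 (index set).} By Lemma \ref{xr-jc-1}, admissibility implies $\tilde{\nu}^j\prec\tilde{\mu}^j$. I would also check the converse, so that the sum over admissible $|\tilde{m}\rangle$ becomes exactly the sum over interlacing strict boxed $2$-partitions. Given a strict $\tilde{\mu}^j\succ\tilde{\nu}^j$ with parts at most $M_j$, the occupations $\tilde m^j_i$ for $i\ge1$ are determined. The conditions $0\le\Sigma_i^{\tilde m}-\Sigma_i^{\tilde n}\le1$ then encode the interlacing, and $\tilde m^j_0$ is fixed uniquely by the condition at $i=0$. Interlacing also forces $l(\tilde{\mu}^j)\le l(\tilde{\nu}^j)+1=l_j+1$, which gives the box $[l_j+1,M_j]$. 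Collecting everything yields the first equality in (\ref{M-B1-B2-N}). The second equality follows immediately from the definition (\ref{skew-schur}) of $Q_{\tilde{\mu}/\tilde{\nu}}$.

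The dual identity (\ref{M-C1-C2-N}) is obtained in the same way from Eq.~(\ref{2-B-n,C-n,2}), using $\tilde{\mathcal{M}}^*$ and the second implication (\ref{l-admiss-interlace}) of Lemma \ref{xr-jc-1}.

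I expect the main obstacle to be the bijectivity in Step 3. The subtleties are site $0$, whose occupation is unbounded, and the normalization $2^{-l}$ in the statement. The statement writes $2^{-l(\tilde{\nu}^1)-l(\tilde{\nu}^2)}$ for $|\tilde{m}\rangle$, but this must be read as $2^{-l(\tilde{\mu}^1)-l(\tilde{\mu}^2)}$ for the factor $2^{-l(\tilde{\mu}^j)}$ in the result to appear. The prefactor $2^{-l(\tilde\nu)}$ from $\tilde{\mathcal M}|\tilde n\rangle$ does not enter the right-hand side, since $\tilde{\mathcal M}$ is applied after the operators act.
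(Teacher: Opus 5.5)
Your proposal is correct and follows the same route as the paper. You apply $\tilde{\mathcal{M}}$ to the action in Lemma~\ref{B-n,C-n,2}, use that occupations at sites $i\ge 1$ are $0$ or $1$ so that $\prod_i 2^{\delta_{(\tilde m^j_i-\tilde n^j_i),1}}=2^{\#(\tilde\mu^j|\tilde\nu^j)}$, and then invoke the definition (\ref{skew-schur}) of $Q_{\tilde\mu/\tilde\nu}$. Your additional points are also correct and make explicit what the paper's short proof leaves implicit: admissible states correspond bijectively to interlacing strict partitions because $\tilde m^j_0$ is forced, and the normalization of $\tilde{\mathcal{M}}|\tilde m\rangle$ should be read as $2^{-l(\tilde\mu^1)-l(\tilde\mu^2)}$.
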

	\begin{proof}
		Since the occupation number takes a value of $0$ or $1$ in positions $1$ to
		$M_j$,  $\tilde{m}_i^{j}= \tilde{n}_i^{j} + 1$
		implies that $\tilde{m}_i^{j}= 1$ and $n_i^{j} = 0$.
		It can be seen from the Definition $\ref{M}$ that
		the element $i^{\tilde{m}_i^{j}}$  exists in strict partition $\tilde{\mu}^{j}$ but not in $\tilde{\nu}^{j}$ when  $\tilde{m}_i^{j}= \tilde{n}_i^{j} + 1$  is satisfied.
		Combining the Lemma $\ref{B-n,C-n,2}$ and Eq.$(\ref{skew-schur})$
		leads to Eq.$(\ref{M-B1-B2-N})$.
		The proof of Eq.$(\ref{M-C1-C2-N})$  is similar and will not be repeated here.
	\end{proof}

	\section{The scalar product and boxed BUC plane partitions}
	The aim of this section is
	to investigate the  generating function for boxed BUC plane partitions by means of the scalar product for the   generalized $i$-boson model.
	It is found that the generating function can be rewritten as  products of   Schur $Q$-functions.

	\begin{definition}
		The boxed BUC plane partition refers to the BUC plane partition placed into a box of a finite size.
		Set the   strict boxed   $2$-partition $(\tilde{\chi}_{i})=(\tilde{\alpha}^i,\tilde{\beta}^i)$, $i\in\mathbf{Z}$.	 Note the boxed BUC plane partition as $\tilde{\Pi}^{'}=(\ldots,\tilde{\chi}_{-1},\tilde{\chi}_{0},\tilde{\chi}_{1},\ldots)$, which  satisfies Eq.$(\ref{BUC-interlacing-partitions})$,
		and  denotes a pair of strict boxed  plane partitions
		$\tilde{\pi}^{1}=(\ldots,\tilde{\alpha}^{-1},\tilde{\alpha}^{0},\tilde{\alpha}^{1},\ldots)$ and $\tilde{\pi}^{2}=(\ldots,\tilde{\beta}^{-1},\tilde{\beta}^{0},\tilde{\beta}^{1},\ldots)$.
	\end{definition}
	
	\begin{lemma}\label{M-n-0}
		The following equations hold
		\begin{align}
			&\tilde{\mathcal{M}}
			\lprod_{i=1}^{N_1} \tilde{\mathbb{B}}_{1}(x_i)
			\lprod_{j=1}^{N_2} \tilde{\mathbb{B}}_{2}(y_j)
			|0\rangle^{(1)}\bigotimes|0\rangle^{(2)}\notag\\
			=& \sum_{{ [N_{1},M_{1}] \ni \tilde{\pi}^{1}_0 \succ \cdots \succ \tilde{\pi}^{1}_{N_{1}} = \emptyset}
				\atop{[N_{2},M_{2}] \ni \tilde{\pi}^{2}_0 \succ \cdots \succ \tilde{\pi}^{2}_{N_{2}} = \emptyset}}
			2^{-l(\tilde{\pi}^{1}_{0})}
			\prod_{i=1}^{N_1}
			2^{\#(\tilde{\pi}^{1}_{i-1}|\tilde{\pi}^{1}_{i}) }		
			x_i^{|\tilde{\pi}^{1}_{i-1}| - |\tilde{\pi}^{1}_i|}
			2^{-l(\tilde{\pi}^{2}_{0})}	
			\prod_{j=1}^{N_2}
			2^{\#(\tilde{\pi}^{2}_{j-1}|\tilde{\pi}^{2}_{j}) }		
			y_j^{|\tilde{\pi}^{2}_{j-1}| - |\tilde{\pi}^{2}_j|}
			| \tilde{\pi}^{1}_0,\tilde{\pi}^{2}_0 \rangle,
			\label{M-B1-BN-x}\\
			&\prescript{(2)}{}{\langle 0|}\bigotimes \prescript{(1)}{}{\langle 0|}
			\rprod_{j=1}^{N_2} \tilde{\mathbb{C}}_{2}(y_j)
			\rprod_{i=1}^{N_1} \tilde{\mathbb{C}}_{1}(x_i) \tilde{\mathcal{M}}^* \notag\\
			=&
			\sum_{{ \emptyset=\tilde{\pi}^{1}_{-N_{1}} \prec \cdots \prec \tilde{\pi}^{1}_0  \in [N_{1}, M_{1}]}
				\atop{\emptyset=\tilde{\pi}^{2}_{-N_{2}} \prec \cdots \prec \tilde{\pi}^{2}_0  \in [N_{2}, M_{2}]}} 		
			2^{-l(\tilde{\pi}^{1}_{0})}
			\prod_{i=1}^{N_1}
			2^{\#(\tilde{\pi}^{1}_{-i+1}|\tilde{\pi}^{1}_{-i}) }		
			x_i^{|\tilde{\pi}^{1}_{-i+1}| - |\tilde{\pi}^{1}_{-i}|}
			2^{-l(\tilde{\pi}^{2}_{0})}	
			\prod_{j=1}^{N_2}
			2^{\#(\tilde{\pi}^{2}_{-j+1}|\tilde{\pi}^{2}_{-j}) }		
			y_j^{|\tilde{\pi}^{2}_{-j+1}| - |\tilde{\pi}^{2}_{-j}|}
			\langle \tilde{\pi}^{2}_0,\tilde{\pi}^{1}_0|,\label{M-C1-CN-x}
		\end{align}	
		which are summed over $[N_{j},M_{j}] \ni \tilde{\pi}^{j}_0 \succ \cdots \succ \tilde{\pi}^{j}_{N_{j}} = \emptyset$ and $\emptyset=\tilde{\pi}^{j}_{-N_{j}} \prec \cdots \prec\tilde{\pi}^{j}_0  \in [N_{j}, M_{j}]$ respectively.
	\end{lemma}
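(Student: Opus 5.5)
We prove Eq.~$(\ref{M-B1-BN-x})$ by induction on the number of operators, iterating the one-step actions $(\ref{B1-1})$ and $(\ref{B2-1})$. Eq.~$(\ref{M-C1-CN-x})$ then follows by the same argument using $(\ref{C1-1})$, $(\ref{C2-1})$ and the dual maps.

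By $(\ref{B,C,commute})$ the operators $\tilde{\mathbb{B}}_1(x_i)$ and $\tilde{\mathbb{B}}_2(y_j)$ commute. Also, $\tilde{\mathbb{B}}_1$ acts only on $\tilde{\mathcal{V}}^{(1)}$ and $\tilde{\mathbb{B}}_2$ only on $\tilde{\mathcal{V}}^{(2)}$. So the state factorizes as
\begin{align*}
\lprod_{i=1}^{N_1}\tilde{\mathbb{B}}_1(x_i)|0\rangle^{(1)}\bigotimes\lprod_{j=1}^{N_2}\tilde{\mathbb{B}}_2(y_j)|0\rangle^{(2)},
\end{align*}
and we treat each factor separately. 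Write $|\tilde{n}^{j}(k)\rangle^{(j)}$ for the basis vector after $k$ applications, starting from the vacuum $\tilde{n}^{j}(0)=0$. Applying $(\ref{B1-1})$ $N_1$ times gives a sum over chains
\begin{align*}
|\tilde{n}^1(N_1)\rangle\triangleright\cdots\triangleright|\tilde{n}^1(1)\rangle\triangleright|0\rangle,
\end{align*}
with weight $\prod_{k}\prod_{i}2^{\delta_{(\tilde{n}^1_i(k)-\tilde{n}^1_i(k-1)),1}}x_k^{i(\tilde{n}^1_i(k)-\tilde{n}^1_i(k-1))}$. The ordering of the product has to be matched so that the operator applied first carries $x_{N_1}$.

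Next we relabel the chain as strict partitions via Definition \ref{M}. Set $\tilde{\pi}^1_{N_1-k}$ to be the partition attached to $\tilde{n}^1(k)$. Lemma \ref{xr-jc-1} turns admissibility into the interlacing $\tilde{\pi}^1_{i-1}\succ\tilde{\pi}^1_i$, with $\tilde{\pi}^1_{N_1}=\emptyset$.

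Each step adds exactly one to $\Sigma_0$, so after $N_1$ steps the total particle number is $N_1$. Hence the number of nonzero parts is at most $N_1$, possibly fewer when site $0$ is occupied. Together with the site labels $\le M_1$, this gives $\tilde{\pi}^1_0\in[N_1,M_1]$. Conversely, every such interlacing chain arises from a unique admissible chain, because occupations on sites $1,\dots,M_1$ are $0/1$ and site $0$ absorbs the remaining count. This bijectivity is the point I expect to need the most care, especially the role of the zero site: a part equal to $0$ should not contribute to $l$ or to $\#$. I would check that $\delta_{\cdot,1}$ is only counted for $i\ge1$ in $(\ref{B1-1})$, which it is.

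Under this correspondence the single-step weight becomes $2^{\#(\tilde{\pi}^1_{i-1}|\tilde{\pi}^1_i)}x^{|\tilde{\pi}^1_{i-1}|-|\tilde{\pi}^1_i|}$, exactly as in the proof of the preceding Proposition. This uses that $\tilde{m}_i=\tilde{n}_i+1$ means the part $i$ is in $\tilde{\mu}$ but not in $\tilde{\nu}$, and that $\sum_i i(\tilde{m}_i-\tilde{n}_i)=|\tilde{\mu}|-|\tilde{\nu}|$.

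Finally we apply $\tilde{\mathcal{M}}$ only to the final vector, which by $(\ref{M-map})$ contributes $2^{-l(\tilde{\pi}^1_0)-l(\tilde{\pi}^2_0)}|\tilde{\pi}^1_0,\tilde{\pi}^2_0\rangle$. The intermediate normalizations never enter, since $\tilde{\mathcal{M}}$ is applied once at the end. Multiplying the two factorized sums gives $(\ref{M-B1-BN-x})$. For $(\ref{M-C1-CN-x})$ we repeat the argument with $(\ref{C1-1})$ and $(\ref{C2-1})$, using the right action on $\tilde{\mathcal{V}}^*$ and the second implication $(\ref{l-admiss-interlace})$ of Lemma \ref{xr-jc-1}. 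We index the chain by negative slices $\tilde{\pi}^j_{-k}$, which produces the increasing chain $\emptyset=\tilde{\pi}^j_{-N_j}\prec\cdots\prec\tilde{\pi}^j_0$.
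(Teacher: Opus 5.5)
Your proposal is correct and follows the paper's own proof. Like the paper, you iterate $(\ref{B1-1})$ and $(\ref{B2-1})$ from the vacuum, relabel the resulting admissible chains as interlacing chains of strict boxed partitions via Definition~\ref{M} and Lemma~\ref{xr-jc-1}, apply $\tilde{\mathcal{M}}$ once at the end, and handle the dual statement the same way. You are more careful than the paper about the bijection between admissible and interlacing chains, including the role of site $0$. The step-ordering of the $x_k$ that you flag does not matter, since the $\tilde{\mathbb{B}}_1$'s commute among themselves by $(\ref{B,C,commute})$.
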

	\begin{proof}
		Setting  $|\tilde{n}^{1}\rangle^{(1)}\bigotimes|\tilde{n}^{2}\rangle^{(2)}=|0\rangle^{(1)}\bigotimes|0\rangle^{(2)}$, by continuously applying Eq.$(\ref{B1-1})$ $N_1$ times and Eq.$(\ref{B2-1})$ $N_2$ times respectively, one can get
		\begin{align}
			&\lprod_{i=1}^{N_1} \tilde{\mathbb{B}}_{1}(x_i)
			\lprod_{j=1}^{N_2} \tilde{\mathbb{B}}_{2}(y_j)
			|0\rangle^{(1)}\bigotimes|0\rangle^{(2)}\notag\\
			=&\sum_{{|m^{a}\rangle^{(1)} \triangleright |m^{a+1}\rangle^{(1)}}
				\atop{a=1,\cdots,N_1}}
			\sum_{{|n^{b}\rangle^{(2)} \triangleright |n^{b+1}\rangle^{(2)}}
				\atop{b=1,\cdots,N_2}}
			\prod_{l=1}^{N_1}\prod_{i=1}^{M_1}
			2^{\delta_{(\tilde{m}^{l}_i - \tilde{m}^{l+1}_i), 1}}
			x^{i(\tilde{m}^{l}_i - \tilde{m}^{l+1}_i)}\notag\\
			&	\prod_{k=1}^{N_2}\prod_{j=1}^{M_2}
			2^{\delta_{(\tilde{n}^{k}_j - \tilde{n}^{k+1}_j), 1}}
			y^{j(\tilde{n}^{k}_j - \tilde{n}^{k+1}_j)}
			|m^{1}\rangle^{(1)}\bigotimes|\tilde{n}^{1}\rangle^{(2)},
		\end{align}
		where $|\tilde{m}^{N_1+1}\rangle^{(1)}=|0\rangle^{(1)}$ and  $|\tilde{n}^{N_2+1}\rangle^{(2)}=|0\rangle^{(2)}$.
		By using the assumption
		\begin{align}
			\tilde{\mathcal{M}} |\tilde{m}^{a}\rangle^{(1)}\bigotimes|\tilde{n}^{b}\rangle^{(2)}
			=|\tilde{\pi}_{a-1}^{1},\tilde{\pi}_{b-1}^{2}\rangle,
		\end{align}
		it is simple to obtain that
		\begin{align}
			&\tilde{\mathcal{M}}
			\lprod_{i=1}^{N_1} \tilde{\mathbb{B}}_{1}(x_i)
			\lprod_{j=1}^{N_2} \tilde{\mathbb{B}}_{2}(y_j)
			|0\rangle^{(1)}\bigotimes|0\rangle^{(2)}\notag\\
			= &\sum_{{ [N_{1},M_{1}] \ni \tilde{\pi}^{1}_0 \succ \cdots \succ \tilde{\pi}^{1}_{N_{1}} = \emptyset}
				\atop{[N_{2},M_{2}] \ni \tilde{\pi}^{2}_0 \succ \cdots \succ \tilde{\pi}^{2}_{N_{2}} = \emptyset}} 	
			2^{-l(\tilde{\pi}^{1}_{0})}
			\prod_{i=1}^{N_1}
			2^{\#(\tilde{\pi}^{1}_{i-1}|\tilde{\pi}^{1}_{i}) }		
			x_i^{|\tilde{\pi}^{1}_{i-1}| - |\tilde{\pi}^{1}_i|}
			2^{-l(\tilde{\pi}^{2}_{0})}	
			\prod_{j=1}^{N_2}
			2^{\#(\tilde{\pi}^{2}_{j-1}|\tilde{\pi}^{2}_{j}) }		
			y_j^{|\tilde{\pi}^{2}_{j-1}| - |\tilde{\pi}^{2}_j|}
			| \tilde{\pi}^{1}_0,\tilde{\pi}^{2}_0 \rangle.
		\end{align}	
		According to the similar approach, Eq.$(\ref{M-C1-C2-N})$ can also be proved.
	\end{proof}

	\begin{lemma}\label{M-n-0-schur}
		Let
		$\{\mathbf{a^j}\}=\{a_1,\cdots,a_{N_j}\}$  and $\{\bar{\mathbf{a}}^j\}=\{a_1,\cdots,a_{N_j-1}\}$, where $\mathbf{a}$ can be assigned as $\mathbf{x},\mathbf{y},\mathbf{z},\mathbf{v}$.
		The following equations hold
		\begin{align}
			\tilde{\mathcal{M}}
			\lprod_{i=1}^{N_1} \tilde{\mathbb{B}}_{1}(x_i)
			\lprod_{j=1}^{N_2} \tilde{\mathbb{B}}_{2}(y_j)
			|0\rangle^{(1)}\bigotimes|0\rangle^{(2)}
			= &\sum_{{ \tilde{\mu}^{1} \subseteq [N_1,M_1]}\atop{\tilde{\mu}^{2} \subseteq [N_2,M_2]}}
			2^{-l(\tilde{\mu}^{1})}	2^{-l(\tilde{\mu}^{2})}
			Q_{\tilde{\mu}^{1}}\{\mathbf{x^1}\}  Q_{\tilde{\mu}^{2}}\{\mathbf{y^2}\}
			| \tilde{\mu}^{1},\tilde{\mu}^{2} \rangle,\label{M-B1-BN-schur}	\\	
			\prescript{(2)}{}{\langle 0|}\bigotimes \prescript{(1)}{}{\langle 0|}
			\rprod_{j=1}^{N_2} \tilde{\mathbb{C}}_{2}(y_j)
			\rprod_{i=1}^{N_1} \tilde{\mathbb{C}}_{1}(x_i)\tilde{\mathcal{M}}^*
			= &\sum_{{ \tilde{\mu}^{1} \subseteq [N_1,M_1]}\atop{\tilde{\mu}^{2} \subseteq [N_2,M_2]}}
			2^{-l(\tilde{\mu}^{1})}	2^{-l(\tilde{\mu}^{2})}
			Q_{\tilde{\mu}^{1}}\{\mathbf{x^1}\}  Q_{\tilde{\mu}^{2}}\{\mathbf{y^2}\}
			\langle \tilde{\mu}^{2},\tilde{\mu}^{1}|.\label{M-C1-CN-schur}
		\end{align}	
	\end{lemma}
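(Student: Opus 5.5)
We start from Lemma \ref{M-n-0} and regroup its sum according to the top partitions $\tilde{\pi}^1_0=\tilde{\mu}^1$ and $\tilde{\pi}^2_0=\tilde{\mu}^2$. The factor $2^{-l(\tilde{\mu}^1)}2^{-l(\tilde{\mu}^2)}|\tilde{\mu}^1,\tilde{\mu}^2\rangle$ depends only on $(\tilde{\mu}^1,\tilde{\mu}^2)$, and the remaining weight factorizes into a sum over the chain $\tilde{\pi}^1_1,\dots,\tilde{\pi}^1_{N_1-1}$ times a sum over the chain $\tilde{\pi}^2_1,\dots,\tilde{\pi}^2_{N_2-1}$. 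By the definition (\ref{skew-schur}), each factor $2^{\#(\tilde{\pi}^{1}_{i-1}|\tilde{\pi}^{1}_{i})}x_i^{|\tilde{\pi}^{1}_{i-1}|-|\tilde{\pi}^{1}_i|}$, restricted to interlacing pairs, is exactly $Q_{\tilde{\pi}^1_{i-1}/\tilde{\pi}^1_i}(x_i)$. Since $Q_{\tilde{\pi}/\tilde{\rho}}$ vanishes when $\tilde{\pi}\not\succ\tilde{\rho}$, we may drop the interlacing constraint from the summation range. The coefficient of $|\tilde{\mu}^1,\tilde{\mu}^2\rangle$ is therefore
\begin{align*}
2^{-l(\tilde{\mu}^1)}2^{-l(\tilde{\mu}^2)}\sum_{\tilde{\pi}^1_1,\dots,\tilde{\pi}^1_{N_1-1}}\prod_{i=1}^{N_1}Q_{\tilde{\pi}^1_{i-1}/\tilde{\pi}^1_i}(x_i)\sum_{\tilde{\pi}^2_1,\dots,\tilde{\pi}^2_{N_2-1}}\prod_{j=1}^{N_2}Q_{\tilde{\pi}^2_{j-1}/\tilde{\pi}^2_j}(y_j),
\end{align*}
with $\tilde{\pi}^1_{N_1}=\tilde{\pi}^2_{N_2}=\emptyset$.

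Next we collapse each chain by induction on its length, using the branching identity (\ref{schur-skew-schurQ}). We peel off variables from the $\emptyset$ end, which is the direction in which (\ref{schur-skew-schurQ}) adds the last variable. First, $Q_{\tilde{\pi}_{N-1}/\emptyset}(x_N)=Q_{\tilde{\pi}_{N-1}}\{x_N\}$. Then $\sum_{\tilde{\pi}_{N-1}}Q_{\tilde{\pi}_{N-2}/\tilde{\pi}_{N-1}}(x_{N-1})Q_{\tilde{\pi}_{N-1}}\{x_N\}=Q_{\tilde{\pi}_{N-2}}\{x_{N-1},x_N\}$, and so on up to $Q_{\tilde{\mu}^1}\{\mathbf{x^1}\}$. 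Here we use that $Q_{\tilde{\mu}}$ is symmetric, so relabelling the order of the variables is harmless.

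To apply (\ref{schur-skew-schurQ}) as stated, the intermediate summation range $\tilde{\nu}\subset[n-1,\infty]$ has to match the range of the chain sum. A strict partition interlacing a chain ending at $\emptyset$ after $k$ steps has length at most $k$. Each part is bounded by $M_1$ automatically, because interlacing forces $\tilde{\pi}_i\subseteq\tilde{\pi}_0$ elementwise in the first part and $\tilde{\mu}^1\subseteq[N_1,M_1]$. So the constraints in Lemma \ref{M-n-0} impose nothing beyond nonvanishing. The same argument gives $Q_{\tilde{\mu}^2}\{\mathbf{y^2}\}$, and this yields (\ref{M-B1-BN-schur}). The dual identity (\ref{M-C1-CN-schur}) follows identically from (\ref{M-C1-CN-x}), with the chain $\emptyset=\tilde{\pi}_{-N}\prec\cdots\prec\tilde{\pi}_0$.

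The main obstacle I expect is this bookkeeping of the summation ranges, together with the convention for strict partitions with a trailing zero part (the $l(\tilde{\mu})$ parity convention for odd lengths). We need the box constraints and the length bounds to coincide exactly with the support of the products of skew $Q$-functions, so that (\ref{schur-skew-schurQ}) can be invoked verbatim. The sum over $\tilde{\mu}$ then ranges over all strict $\tilde{\mu}\subseteq[N,M]$, since $Q_{\tilde{\mu}}$ in $N$ variables vanishes when $l(\tilde{\mu})>N$.
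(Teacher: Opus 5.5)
Your argument is correct, but it is organized differently from the paper's proof.

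The paper inducts directly on the operator product:
\begin{enumerate}
\item The base case is the one-step action (\ref{M-B1-B2-N}) on $|0\rangle^{(1)}\bigotimes|0\rangle^{(2)}$.
\item The inductive hypothesis for $N_1-1,N_2-1$ is pulled back through $\tilde{\mathcal{M}}$ to a sum over basis vectors with $\Sigma_0^{\tilde{n}^j}=N_j-1$.
\item The commutativity (\ref{B,C,commute}) puts $\tilde{\mathbb{B}}_1(x_{N_1})\tilde{\mathbb{B}}_2(y_{N_2})$ outermost.
\item The branching identity (\ref{schur-skew-schurQ}) is applied with the new variable on the top step $\tilde{\mu}^j/\tilde{\nu}^j$, which matches the identity verbatim.
\end{enumerate}

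You instead start from the fully expanded chain sum of Lemma \ref{M-n-0}, read its weights as one-variable skew $Q$-functions, and collapse each chain from the $\emptyset$ end. In that direction the variable added at each step is $x_k$, joined to $\{x_{k+1},\dots,x_{N_1}\}$. So you need the symmetry of $Q_{\tilde{\mu}}$, which is clear from its definition through the symmetric $q_m$. This symmetry plays exactly the role that (\ref{B,C,commute}) plays in the paper.

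What your route buys:
\begin{itemize}
\item It is purely combinatorial once Lemma \ref{M-n-0} is available.
\item It avoids inverting $\tilde{\mathcal{M}}$ on a fixed particle-number sector.
\item It makes explicit that the chain and box constraints coincide with the support of the products of skew $Q$-functions, which the paper leaves implicit.
\item It handles the two chains independently, so $N_1\neq N_2$ needs no special care. The paper's induction decrements $N_1$ and $N_2$ simultaneously from the base case $N_1=N_2=1$, so it tacitly needs the one-sided cases as well.
\end{itemize}

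What the paper's route buys: it is self-contained from the one-step action. It also obtains the symmetry in the spectral parameters from the Yang--Baxter algebra, rather than importing it from symmetric function theory.
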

	\begin{proof}
		
		We prove Eq.$(\ref{M-B1-BN-schur})$ by the   induction, Eq.$(\ref{M-C1-CN-schur})$  can be proved similarly.	
		Based on the special case of $|\tilde{n}^{1}\rangle^{(1)}\bigotimes|\tilde{n}^{2}\rangle^{(2)}=|0\rangle^{(1)}\bigotimes|0\rangle^{(2)}$ in Eq.$(\ref{M-C1-C2-N})$, one yields
		\begin{align}
			\tilde{\mathcal{M}} \tilde{\mathbb{B}}_{1}(x)\tilde{\mathbb{B}}_{2}(y) |0\rangle^{(1)}\bigotimes|0\rangle^{(2)}
			=& \sum_{{ \emptyset \prec \tilde{\mu}^{1} \subseteq [1,M_{1}]}\atop{\emptyset \prec \tilde{\mu}^{2} \subseteq [1,M_{2}]}}
			2^{- l(\tilde{\mu}^{1})}	2^{- l(\tilde{\mu}^{2})}	
			Q_{{\tilde{\mu}^{1}}/\emptyset}(x) 	Q_{{\tilde{\mu}^{2}}/\emptyset}(y)
			|\tilde{\mu}^{1},\tilde{\mu}^{2}\rangle \notag\\
			=&\sum_{{ \emptyset \prec \tilde{\mu}^{1} \subseteq [1,M_{1}]}\atop{\emptyset \prec \tilde{\mu}^{2} \subseteq [1,M_{2}]}}
			2^{- l(\tilde{\mu}^{1})}		2^{-l(\tilde{\mu}^{2})}	
			Q_{\tilde{\mu}^{1}}(x)  	Q_{\tilde{\mu}^{2}}(y)  |\tilde{\mu}^{1},\tilde{\mu}^{2}\rangle .
		\end{align}
		For all $N_{1}, N_{2}\geq 2$, assume that
		\begin{align}\label{assumption-N-1}
			\tilde{\mathcal{M}}
			\prod_{i=1}^{N_{1}-1} \tilde{\mathbb{B}}_{1}(x_i)
			\prod_{j=1}^{N_{2}-1} \tilde{\mathbb{B}}_{2}(y_i)
			|0\rangle^{(1)}\bigotimes|0\rangle^{(2)}
			= &\sum_{{ \tilde{\nu}^{1} \subseteq [N_{1}-1,M_1]}\atop{\tilde{\nu}^{2} \subseteq [N_{2}-1,M_2]}}
			2^{- l(\tilde{\nu}^{1})}	2^{- l(\tilde{\nu}^{2})}			
			Q_{\tilde{\nu}^{1}}\{\bar{\mathbf{x}}^1\}  Q_{\tilde{\nu}^{2}}\{ \bar{\mathbf{y}}^2 \}
			| \tilde{\nu}^{1},\tilde{\nu}^{2} \rangle.
		\end{align}	
		By using Eqs.$(\ref{M})$ and $(\ref{assumption-N-1})$, it is easy to show that
		\begin{align}
			\prod_{i=1}^{N_{1}-1} \tilde{\mathbb{B}}_{1}(x_i)
			\prod_{j=1}^{N_{2}-1} \tilde{\mathbb{B}}_{2}(y_i)
			|0\rangle^{(1)}\bigotimes|0\rangle^{(2)}
			= \sum_{{ 	| \tilde{n}^{1}\rangle^{(1)}|\Sigma_0^{\tilde{n}^{1}} =N_{1}-1}\atop{| \tilde{n}^{2}\rangle^{(2)}|\Sigma_0^{\tilde{n}^{2}}=N_{2}-1} }
			Q_{\tilde{\nu}^{1}}\{\bar{\mathbf{x}}^1\}  Q_{\tilde{\nu}^{2}}\{ \bar{\mathbf{y}}^2 \}
			| \tilde{n}^{1}\rangle^{(1)}\bigotimes|\tilde{n}^{2}\rangle^{(2)},
		\end{align}	
		where  $\Sigma_0^{n^j} = \sum\limits_{k=0}^{M_j} n_k^j=N_{j}-1$ $(j=1,2)$.
		According to   Eq.$(\ref{B,C,commute})$  and
		applying $\tilde{\mathcal{M}} \tilde{\mathbb{B}}_{1}(x_{N_1})\tilde{\mathbb{B}}_{2}(y_{N_2})$ to the above equation, one gets
		\begin{align}
			&\tilde{\mathcal{M}} \tilde{\mathbb{B}}_{1}(x_{N_1})\tilde{\mathbb{B}}_{2}(y_{N_2})
			\prod_{i=1}^{N_{1}-1} \tilde{\mathbb{B}}_{1}(x_i)
			\prod_{j=1}^{N_{2}-1} \tilde{\mathbb{B}}_{2}(y_i)	
			|0\rangle^{(1)}\bigotimes|0\rangle^{(2)}\notag\\
			=&\sum_{{ \tilde{\nu}^{1} \subseteq [N_{1}-1,M_1]}\atop{\tilde{\nu}^{2} \subseteq [N_{2}-1,M_2]}}
			Q_{\tilde{\nu}^{1}}\{\bar{\mathbf{x}}^1\}  Q_{\tilde{\nu}^{2}}\{ \bar{\mathbf{y}}^2 \} 	
			\sum_{{\tilde{\nu}^{1} \prec \tilde{\mu}^{1} \subseteq [N_1,M_1]}\atop{\tilde{\nu}^{2} \prec \tilde{\mu}^{2} \subseteq [N_2,M_2]}}
			2^{- l(\tilde{\mu}^{1})}	2^{- l(\tilde{\mu}^{2})}	
			Q_{\tilde{\mu}^{1}/\tilde{\nu}^{1}}(x_{N_1}) Q_{\tilde{\mu}^{2}/\tilde{\nu}^{2}}(y_{N_2}) 	
			| \tilde{\mu}^{1},\tilde{\mu}^{2} \rangle.
		\end{align}	
		It follows from the interlacing relation $\tilde{\nu}^{j} \prec \tilde{\mu}^{j}$ and Eq.$(\ref{schur-skew-schurQ})$ that
		\begin{align}
			&\tilde{\mathcal{M}}
			\lprod_{i=1}^{N_1} \tilde{\mathbb{B}}_{1}(x_i)
			\lprod_{j=1}^{N_2} \tilde{\mathbb{B}}_{2}(y_j)
			|0\rangle^{(1)}\bigotimes|0\rangle^{(2)}\notag\\
			= &	\sum_{{ \tilde{\mu}^{1} \subseteq [N_1,M_1]}\atop{\tilde{\mu}^{2} \subseteq [N_2,M_2]}}
			\sum_{{ \tilde{\nu}^{1} \subseteq [N_{1}-1,\infty]}\atop{\tilde{\nu}^{2} \subseteq [N_{2}-1,\infty]}}
			2^{- l(\tilde{\mu}^{1})}	2^{- l(\tilde{\mu}^{2})}	
			Q_{\tilde{\mu}^{1}/\tilde{\nu}^{1}}(x_{N_1}) 	Q_{\tilde{\nu}^{1}}\{\bar{\mathbf{x}}^1\}
			Q_{\tilde{\mu}^{2}/\tilde{\nu}^{2}}(y_{N_2}) 	Q_{\tilde{\nu}^{2}}\{ \bar{\mathbf{y}}^2 \}
			| \tilde{\mu}^{1},\tilde{\mu}^{2} \rangle \notag\\	
			= &\sum_{{ \tilde{\mu}^{1} \subseteq [N_1,M_1]}\atop{\tilde{\mu}^{2} \subseteq [N_2,M_2]}}
			2^{- l(\tilde{\mu}^{1})}	2^{- l(\tilde{\mu}^{2})}	
			Q_{\tilde{\mu}^{1}}\{\mathbf{x^1}\}  Q_{\tilde{\mu}^{2}}\{\mathbf{y^2}\}
			| \tilde{\mu}^{1},\tilde{\mu}^{2} \rangle.	
		\end{align}	
		Thus, Eq.$(\ref{M-B1-BN-schur})$ holds for any $N_{1},N_{2} \geq 1$.
	\end{proof}

Consider the scalar product of the   generalized $i$-boson model
\begin{align}
	&\tilde{S}(\{\mathbf{x^1}\}, \{\mathbf{y^2}\}, \{\mathbf{z^1}\}, \{\mathbf{v^2}\})\notag\\
	=&\bigg \langle \prescript{(2)}{}{\langle 0|}\bigotimes \prescript{(1)}{}{\langle 0|}
	\rprod_{j=1}^{N_2} \tilde{\mathbb{C}}_{2}(y_j)
	\prod_{i=1}^{N_1} \tilde{\mathbb{C}}_{1}(x_i) \tilde{\mathcal{M}}^*
	,\tilde{\mathcal{M}}\prod_{l=1}^{N_1} \tilde{\mathbb{B}}_{1}(z_i)
	\prod_{k=1}^{N_2} \tilde{\mathbb{B}}_{2}(v_k)
	|0\rangle^{(1)}\bigotimes|0\rangle^{(2)} \bigg\rangle.
\end{align}

	\begin{proposition}
		The generating function of boxed BUC plane partitions can expressed as
		\begin{align}\label{schs}
			\tilde{S}(\{\mathbf{x^1}\}, \{\mathbf{y^2}\}, \{\mathbf{z^1}\}, \{\mathbf{v^2}\})
			=&\sum_{{\tilde{\pi}^{1} \subseteq [N_{1}, N_{1}, M_{1}]}
				\atop{\tilde{\pi}^{2} \subseteq [N_{2}, N_{2}, M_{2}]}}
			B_{\tilde{\pi}^{1} }(\{\mathbf{x^1}\}, \{\mathbf{z^1}\}) B_{\tilde{\pi}^{2} }(\{\mathbf{y^2}\}, \{\mathbf{v^2}\})\notag\\
			=&\sum_{{ \tilde{\mu}^{1} \subseteq [N_1,M_1]}\atop{\tilde{\mu}^{2} \subseteq [N_2,M_2]}}
			2^{- l(\tilde{\mu}^{1})}2^{- l(\tilde{\mu}^{2})}
			Q_{\tilde{\mu}^{1}}\{\mathbf{x^1}\}  Q_{\tilde{\mu}^{1}}\{\mathbf{z^1}\}
			Q_{\tilde{\mu}^{2}}\{\mathbf{y^2}\}  Q_{\tilde{\mu}^{2}}\{\mathbf{v^2}\},
		\end{align}
		where
		\begin{align}
			B_{\tilde{\pi}^{1} }(\{\mathbf{x^1}\}, \{\mathbf{z^1}\})
			&=2^{p(\tilde{\pi}^1)}
			\prod_{i=1}^{N_1}
			x_i^{|\tilde{\pi}^{1}_{-i+1}| - |\tilde{\pi}^{1}_{-i}|}
			z_i^{|\tilde{\pi}^{1}_{i-1}| - |\tilde{\pi}^{1}_i|},   \\
			B_{\tilde{\pi}^{2} }(\{\mathbf{y^2}\}, \{\mathbf{v^2}\})
			&=2^{p(\tilde{\pi}^2)}
			\prod_{j=1}^{N_2}
			y_j^{|\tilde{\pi}^{2}_{-j+1}| - |\tilde{\pi}^{2}_{-j}|}
			v_j^{|\tilde{\pi}^{2}_{j-1}| - |\tilde{\pi}^{2}_j|}.
		\end{align}
	\end{proposition}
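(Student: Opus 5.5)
The plan is to compute the scalar product $\tilde{S}$ in two ways, both starting from the bilinear pairing on $\tilde{\mathcal{F}}^*\times\tilde{\mathcal{F}}$ and its orthogonality relation (\ref{fock-scalar-product}). Since $[\tilde{B}_i(x),\tilde{B}_j(y)]=[\tilde{C}_i(x),\tilde{C}_j(y)]=0$ by Eq.$(\ref{B,C,commute})$, the ordering of the products in the definition of $\tilde{S}$ is immaterial. We may therefore replace them by the ordered products $\lprod$, $\rprod$ appearing in Lemma \ref{M-n-0} and Lemma \ref{M-n-0-schur}.

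\emph{Second equality (Schur $Q$ form).} Substitute Eq.$(\ref{M-C1-CN-schur})$ for the left (dual) vector, with spectral parameters $\{\mathbf{x^1}\},\{\mathbf{y^2}\}$. Substitute Eq.$(\ref{M-B1-BN-schur})$ for the right vector, with $\{\mathbf{z^1}\},\{\mathbf{v^2}\}$ in place of $\{\mathbf{x^1}\},\{\mathbf{y^2}\}$. This gives a double sum over strict boxed $2$-partitions $(\tilde{\mu}^1,\tilde{\mu}^2)$ and $(\tilde{\nu}^1,\tilde{\nu}^2)$ of terms
\begin{align*}
2^{-l(\tilde{\mu}^1)-l(\tilde{\mu}^2)-l(\tilde{\nu}^1)-l(\tilde{\nu}^2)}\,Q_{\tilde{\mu}^1}\{\mathbf{x^1}\}Q_{\tilde{\mu}^2}\{\mathbf{y^2}\}Q_{\tilde{\nu}^1}\{\mathbf{z^1}\}Q_{\tilde{\nu}^2}\{\mathbf{v^2}\}\,\big(\langle\tilde{\mu}^2,\tilde{\mu}^1|,|\tilde{\nu}^1,\tilde{\nu}^2\rangle\big).
\end{align*}
By (\ref{fock-scalar-product}), the pairing equals $2^{l(\tilde{\mu}^1)+l(\tilde{\mu}^2)}\delta_{\tilde{\mu}^1,\tilde{\nu}^1}\delta_{\tilde{\mu}^2,\tilde{\nu}^2}$. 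This collapses the double sum to a single sum over $\tilde{\mu}^j\subseteq[N_j,M_j]$ with net prefactor $2^{-l(\tilde{\mu}^1)-l(\tilde{\mu}^2)}$, which is exactly the right-hand side of (\ref{schs}).

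\emph{First equality (plane partition form).} Instead use Lemma \ref{M-n-0}. Eq.$(\ref{M-C1-CN-x})$ writes the left vector as a sum over increasing chains $\emptyset=\tilde{\pi}^j_{-N_j}\prec\cdots\prec\tilde{\pi}^j_0$. Eq.$(\ref{M-B1-BN-x})$, with $x_i\to z_i$ and $y_j\to v_j$, writes the right vector as a sum over decreasing chains $\tilde{\pi}^j_0\succ\cdots\succ\tilde{\pi}^j_{N_j}=\emptyset$. Pairing with (\ref{fock-scalar-product}) forces the two middle slices to coincide, and gluing the chains gives precisely the sequences (\ref{BUC-interlacing-partitions}) defining a pair of strict plane partitions $\tilde{\pi}^1,\tilde{\pi}^2$.

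It remains to check the containment and the weight. Each slice lies in $[N_j,M_j]$: the length is bounded by the number of $\tilde{\mathbb{B}}$'s applied, and the parts are bounded by $M_j$. So $\tilde{\pi}^j\subseteq[N_j,N_j,M_j]$, and conversely every such boxed pair arises exactly once. The powers of $2$ combine as
\[
2^{-2l(\tilde{\pi}^j_0)}\cdot 2^{l(\tilde{\pi}^j_0)}\prod_{i}2^{\#(\tilde{\pi}^j_{-i+1}|\tilde{\pi}^j_{-i})}\prod_i 2^{\#(\tilde{\pi}^j_{i-1}|\tilde{\pi}^j_{i})}.
\]
By Lemma \ref{strict-pi}, this equals $2^{p(\tilde{\pi}^j)}$. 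The monomials in $x_i,z_i$ (resp.\ $y_j,v_j$) are exactly those recorded in $B_{\tilde{\pi}^1}$ (resp.\ $B_{\tilde{\pi}^2}$).

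The main obstacle is the bookkeeping of the normalizations in the first computation. The factors $2^{-l}$ enter through both $\tilde{\mathcal{M}}$ and $\tilde{\mathcal{M}}^*$, and they must be reconciled with the $2^{l}$ from the Fock pairing so that Lemma \ref{strict-pi} applies on the nose. I would first verify this on a one-row example, e.g.\ $N_j=1$, to fix the exponent of $2$ before writing the general argument. A secondary point is to confirm that the summation ranges in Lemma \ref{M-n-0} really produce every boxed strict plane partition in $[N_j,N_j,M_j]$, which follows from Lemma \ref{xr-jc-1}.
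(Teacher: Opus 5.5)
Your proposal is correct and follows the paper's proof. The plane-partition form comes from pairing the two sides of Lemma \ref{M-n-0} and applying Lemma \ref{strict-pi}, and the Schur $Q$ form comes from pairing the two sides of Lemma \ref{M-n-0-schur}; your explicit use of the Fock pairing (\ref{fock-scalar-product}) to cancel one factor $2^{-l(\tilde{\pi}^j_0)}$, and of Eq.~(\ref{B,C,commute}) to match the operator orderings, spells out normalization and ordering steps that the paper leaves implicit.
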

	
	\begin{proof}
		Combining the Lemma $\ref{strict-pi}$, Eq.$(\ref{tgibm-scalar-product})$ and the Lemma $\ref{M-n-0}$ yields
		\begin{align}\label{scalar-product-A1-A2}
			&S(\{\mathbf{x^1}\}, \{\mathbf{y^2}\}, \{\mathbf{z^1}\}, \{\mathbf{v^2}\})\notag\\
			=&
			\sum_{{\tilde{\pi}^{1} \subseteq [N_{1}, N_{1}, M_{1}]}
				\atop{\tilde{\pi}^{2} \subseteq [N_{2}, N_{2}, M_{2}]}}
			\prod_{i=1}^{N_1}
			2^{\#(\tilde{\pi}^{1}_{-i+1}|\tilde{\pi}^{1}_{-i}) }		
			x_i^{|\tilde{\pi}^{1}_{-i+1}| - |\tilde{\pi}^{1}_{-i}|}
			\prod_{j=1}^{N_2}
			2^{\#(\tilde{\pi}^{2}_{-j+1}|\tilde{\pi}^{2}_{-j}) }		
			y_j^{|\tilde{\pi}^{2}_{-j+1}| - |\tilde{\pi}^{2}_{-j}|}
			\notag\\
			&
			\prod_{l=1}^{N_1}
			2^{\#(\tilde{\pi}^{1}_{l-1}|\tilde{\pi}^{1}_{l}) }		
			z_l^{|\tilde{\pi}^{1}_{l-1}| - |\tilde{\pi}^{1}_l|}
			\prod_{k=1}^{N_2}
			2^{\#(\tilde{\pi}^{2}_{k-1}|\tilde{\pi}^{2}_{k}) }		
			v_k^{|\tilde{\pi}^{2}_{k-1}| - |\tilde{\pi}^{2}_k|}
			\cdot 2^{-l(\tilde{\pi}^{1}_{0})} 2^{-l(\tilde{\pi}^{2}_{0})}
			\notag\\
			=&2^{p(\tilde{\pi}^1)}2^{p(\tilde{\pi}^2)}
			\sum_{{\tilde{\pi}^{1} \subseteq [N_{1}, N_{1}, M_{1}]}
				\atop{\tilde{\pi}^{2} \subseteq [N_{2}, N_{2}, M_{2}]}}
			\prod_{i=1}^{N_1}
			x_i^{|\tilde{\pi}^{1}_{-i+1}| - |\tilde{\pi}^{1}_{-i}|}
			z_i^{|\tilde{\pi}^{2}_{i-1}| - |\tilde{\pi}^{2}_i|}
			\prod_{j=1}^{N_2}
			y_j^{|\tilde{\pi}^{1}_{-j+1}| - |\tilde{\pi}^{1}_{-j}|}
			v_j^{|\tilde{\pi}^{2}_{j-1}| - |\tilde{\pi}^{2}_j|}	\notag\\
			=&
			\sum_{{\tilde{\pi}^{1} \subseteq [N_{1}, N_{1}, M_{1}]}
				\atop{\tilde{\pi}^{2} \subseteq [N_{2}, N_{2}, M_{2}]}}
			B_{\tilde{\pi}^{1} }(\{\mathbf{x^1}\}, \{\mathbf{z^1}\}) B_{\tilde{\pi}^{2} }(\{\mathbf{y^2}\}, \{\mathbf{v^2}\}),
		\end{align}
		where a sequence of interlacing strict boxed partitions in the summation symbol of Lemma $\ref{M-n-0}$  constitutes  strict boxed  plane partitions $\tilde{\pi}^{1}=(\ldots,\tilde{\pi}_{-1}^{1},\tilde{\pi}_{0}^{1},\tilde{\pi}_{1}^{1},\ldots)$ and  $\tilde{\pi}^{2}=(\ldots,\tilde{\pi}_{-1}^{2},\tilde{\pi}_{0}^{2},\tilde{\pi}_{1}^{2},\ldots)$  satisfying Eq.$(\ref{interlacing-partitions})$
		and
		\begin{align}
			B_{\tilde{\pi}^{1} }(\{\mathbf{x^1}\}, \{\mathbf{z^1}\})
			&=2^{p(\tilde{\pi}^1)}
			\prod_{i,l=1}^{N_1}
			x_i^{|\tilde{\pi}^{1}_{i-1}| - |\tilde{\pi}^{1}_i|}
			z_l^{|\tilde{\pi}^{2}_{l-1}| - |\tilde{\pi}^{2}_l|},\\
			B_{\tilde{\pi}^{2} }(\{\mathbf{y^2}\}, \{\mathbf{v^2}\})
			&=2^{p(\tilde{\pi}^2)}
			\prod_{j,k=1}^{N_2}
			y_j^{|\tilde{\pi}^{1}_{j-1}| - |\tilde{\pi}^{1}_j|}
			v_k^{|\tilde{\pi}^{2}_{k-1}| - |\tilde{\pi}^{2}_k|}	.
		\end{align}
		Note that $(\tilde{\chi}_{i})=(\tilde{\pi}_{i}^{1},\tilde{\pi}_{i}^{2})$,
		the scalar product generates   boxed BUC plane partition	 $\tilde{\Pi}^{'}=(\ldots,\tilde{\chi}_{-1},\tilde{\chi}_{0},\\ \tilde{\chi}_{1}, \ldots)$, which denotes a pair of strict boxed plane partitions
		${\tilde{\pi}^{1} \subseteq [N_{1}, N_{1}, M_{1}]}$ and
		$\tilde{\pi}^{2} \subseteq [N_{2}, N_{2}, M_{2}]$.
		
		Based on the Lemma $\ref{M-n-0-schur}$,  the scalar product can be expressed as
		\begin{align}\label{scalar-product-S1-S2}
			S(\{\mathbf{x^1}\}, \{\mathbf{y^2}\}, \{\mathbf{z^1}\}, \{\mathbf{v^2}\})	
			=&
			\sum_{{\tilde{\pi}^{1} \subseteq [N_{1}, N_{1}, M_{1}]}
				\atop{\tilde{\pi}^{2} \subseteq [N_{2}, N_{2}, M_{2}]}}
			B_{\tilde{\pi}^{1} }(\{\mathbf{x^1}\}, \{\mathbf{z^1}\}) B_{\tilde{\pi}^{2} }(\{\mathbf{y^2}\}, \{\mathbf{v^2}\})\notag\\
			=&\sum_{{ \tilde{\mu}^{1} \subseteq [N_1,M_1]}\atop{\tilde{\mu}^{2} \subseteq [N_2,M_2]}}
			2^{- l(\tilde{\mu}^{1})}	2^{- l(\tilde{\mu}^{2})}	
			Q_{\tilde{\mu}^{1}}\{\mathbf{x^1}\}  Q_{\tilde{\mu}^{1}}\{\mathbf{z^1}\}
			Q_{\tilde{\mu}^{2}}\{\mathbf{y^2}\}  Q_{\tilde{\mu}^{2}}\{\mathbf{v^2}\} ,
		\end{align}	
		which implies
		\begin{align}
			\sum_{\tilde{\pi}^{1} \subseteq [N_{1}, N_{1}, M_{1}]}		
			B_{\tilde{\pi}^{1} }(\{\mathbf{x^1}\}, \{\mathbf{z^1}\})
			&=
			\sum_{ \tilde{\mu}^{1} \subseteq [N_1,M_1]}
			2^{- l(\tilde{\mu}^{1})}
			Q_{\tilde{\mu}^{1}}\{\mathbf{x^1}\}  Q_{\tilde{\mu}^{1}}\{\mathbf{z^1}\}, \\
			\sum_{\tilde{\pi}^{2} \subseteq [N_{2}, N_{2}, M_{2}]}
			B_{\tilde{\pi}^{2} }(\{\mathbf{y^2}\}, \{\mathbf{v^2}\})
			&=\sum_{\tilde{\mu}^{2} \subseteq [N_2,M_2]}
			2^{- l(\tilde{\mu}^{2})}	
			Q_{\tilde{\mu}^{2}}\{\mathbf{y^2}\}
			Q_{\tilde{\mu}^{2}}\{\mathbf{v^2}\}.
		\end{align}
	\end{proof}
	It can be seen that the scalar product of the   generalized $i$-boson model leads to the generating function for  boxed BUC plane partitions which can
	be  represented as products of Schur $Q$-functions.

	\section{The generating function for boxed BUC plane partitions with the double scaling limit}
	
	This section discusses the generating function of boxed BUC plane partitions
	in certain limits.
	With the help of the neutral fermionic Fock derivation, the action of the monodromy matrix operators has been studied
	when the number of lattice sites $M_{j}\to \infty$.
	Furthermore, we obtain the generating function for BUC plane partitions with the double scaling limit, which  indicates
	that the number of lattice sites $M_{1},M_{2} \to \infty$ and the number of particles $N_{1},N_{2} \to \infty$.
	\subsection{Generating interlacing $2$-partitions}
	Define neutral fermion vertex operators
	\begin{align}\label{UCH}
		\tilde{\Gamma}_+(z,v) &= e^{\tilde{H}_{+}(z,v)}=
		\exp\left( \sum\limits_{n\in\mathbf{N}_{odd}}
		\left(\frac{2}{n}z^n \lambda_{n}^{'} + \frac{2}{n} v^n \tilde{\lambda}^{'}_{n} \right)\right), \\
		\tilde{\Gamma}_-(z,v) &=  e^{\tilde{H}_{-}(z,v)}=
		\exp\left(  \sum\limits_{n\in\mathbf{N}_{odd}}  \left(\frac{2}{n}z^n \lambda_{-n}^{'} + \frac{2}{n} v^n\tilde{\lambda}^{'}_{-n} \right) \right),
	\end{align}
	where
	\begin{align}
		\tilde{H}_{+}(z,v)&=\sum\limits_{n\in\mathbf{N}_{odd}}
		\left(\frac{2}{n}z^n \lambda_{n}^{'} + \frac{2}{n} v^n \tilde{\lambda}^{'}_{n} \right),\label{BUCH+}\\
		\tilde{H}_{-}(z,v)&= \sum\limits_{n\in\mathbf{N}_{odd}}  \left(\frac{2}{n}z^n \lambda_{-n}^{'} + \frac{2}{n} v^n\tilde{\lambda}^{'}_{-n} \right).\label{BUCH-}
	\end{align}
	Since
	\begin{align}
		[\tilde{H}_{+}(z,v),\tilde{H}_{-}(x,y)]
		&=4
		\sum\limits_{m,n\in\mathbf{N}_{odd}}   \left(\frac{z^{m}x^{n}}{mn}[\lambda_{m}^{'},\lambda_{-n}^{'}]
		+  \frac{v^{m}y^{n}}{mn}[\tilde{\lambda}^{'}_{m},\tilde{\lambda}^{'}_{-n}]\right) \notag\\
		&=4
		\sum\limits_{m,n\in\mathbf{N}_{odd}}
		\left(\frac{z^{m}x^{n}}{mn} \frac{m}{2} \delta_{m-n,0}
		+  \frac{v^{m}y^{n}}{mn}\frac{m}{2} \delta_{m-n,0}\right)
		\notag\\
		&=2\sum\limits_{m\in\mathbf{N}_{odd}}
		\left(\frac{(zx)^{m}}{m} + \frac{(vy)^{m}}{m}\right),
	\end{align}
	then
	\begin{align}\label{Gamme-commutation-relations}
		\tilde{\Gamma}_+(z,v) \tilde{\Gamma}_-(x,y)&=e^{[\tilde{H}_{+}(z,v),\tilde{H}_{-}(x,y)]} \tilde{\Gamma}_-(x,y)\tilde{\Gamma}_+(z,v)\notag\\
		&=\frac{1+xz}{1-xz} \frac{1+yv}{1-yv}   \tilde{\Gamma}_-(x,y)\tilde{\Gamma}_+(z,v).
	\end{align}

	\begin{lemma}\label{H1}
		For generating functions of neutral fermions
		\begin{align}\label{BUC-feimisc}	
			\phi(k){=}\sum_{n\in\mathbf{Z}}\phi_{n}k^{n},\quad \bar{\phi}(k){=}\sum_{n\in\mathbf{Z}}\bar{\phi}_{n}k^{n},
		\end{align}
		we have
		\begin{align}\label{commutation relations 3.1}
			[\tilde{H}_{+}(z,v),\phi(k)]&=\sum\limits_{n\in\mathbf{N}_{odd}}  \frac{2}{n} (zk)^n\phi(k),\quad
			[\tilde{H}_{-}(z,v), \phi(k)]
			=\sum\limits_{n\in\mathbf{N}_{odd}}  \frac{2}{n}  \left(\frac{z}{k}\right)^n \phi(k),\\
			[\tilde{H}_{+}(z,v),\bar{\phi}(k)]&=\sum\limits_{n\in\mathbf{N}_{odd}}  \frac{2}{n} (vk)^n \bar{\phi}(k),\quad
			[\tilde{H}_{-}(z,v),\bar{\phi}(k)]=\sum\limits_{n\in\mathbf{N}_{odd}} \frac{2}{n}   \left(\frac{v}{k}\right)^n \bar{\phi}(k),
		\end{align}
		and	the following equations hold
		\begin{align}\label{commutation relations 4.1}	
			\tilde{\Gamma}_+(z,v)\phi(k)\tilde{\Gamma}_+^{-1}(z,v)
			&=\frac{1+zk}{1-zk}\phi(k),\\		\tilde{\Gamma}_-(z,v)\phi(k)\tilde{\Gamma}_-^{-1}(z,v)
			&=\frac{1+\frac{z}{k}}{1-\frac{z}{k}}  \phi(k),\\ \tilde{\Gamma}_+(z,v)\bar{\phi}(k)\tilde{\Gamma}_+^{-1}(z,v)
			&=\frac{1+vk}{1-vk}\bar{\phi}(k),\\
			\tilde{\Gamma}_-(z,v)\bar{\phi}(k)\tilde{\Gamma}_-^{-1}(z,v)
			&=\frac{1+\frac{v}{k}}{1-\frac{v}{k}} \bar{\phi}(k).
		\end{align}		
	\end{lemma}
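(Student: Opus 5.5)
The plan is to prove the commutators first and then exponentiate them. Write $\tilde{H}_{+}(z,v)=\sum_{n\in\mathbf{N}_{odd}}\frac{2}{n}\big(z^n\lambda'_n+v^n\bar{\lambda}'_n\big)$. By (\ref{commutation relations 1.2}), the $\phi$'s and $\bar\phi$'s commute. Since $\bar{\lambda}'_n$ is quadratic in the $\bar\phi$'s, it commutes with every $\phi_m$, and likewise $\lambda'_n$ commutes with every $\bar\phi_m$. So only the $\lambda'$-part of $\tilde H_\pm$ acts on $\phi(k)$, and only the $\bar\lambda'$-part acts on $\bar\phi(k)$.

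Using (\ref{commutation relations 2.2}), $[\lambda'_n,\phi_m]=\phi_{m-n}$, so
\begin{align*}
[\lambda'_n,\phi(k)]=\sum_{m\in\mathbf{Z}}\phi_{m-n}k^m=k^n\sum_{m\in\mathbf{Z}}\phi_{m-n}k^{m-n}=k^n\phi(k).
\end{align*}
Multiplying by $\frac{2}{n}z^n$ and summing over odd $n$ gives $[\tilde H_+(z,v),\phi(k)]=\sum_{n}\frac{2}{n}(zk)^n\phi(k)$. For $\tilde H_-$, the same reindexing with $[\lambda'_{-n},\phi_m]=\phi_{m+n}$ (the relation (\ref{commutation relations 2.2}) read with $m\to -n$, valid since it is stated for the full family of modes) gives $[\lambda'_{-n},\phi(k)]=k^{-n}\phi(k)$, hence the factor $(z/k)^n$. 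The $\bar\phi$ formulas are identical with $v$ in place of $z$. Here I would silently correct the evident typo $(-1)^1\to(-1)^j$ in the definition of $\bar\lambda'_m$, so that the barred relations in (\ref{commutation relations 2.2}) genuinely hold.

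For the conjugation formulas I would use $e^{A}Be^{-A}=\sum_{r\ge0}\frac{1}{r!}\mathrm{ad}_A^r B$. Because $[\tilde H_\pm,\phi(k)]=c_\pm\,\phi(k)$ with $c_\pm$ a scalar series, we get $\mathrm{ad}^r\phi(k)=c_\pm^r\phi(k)$, and hence $\tilde\Gamma_\pm\phi(k)\tilde\Gamma_\pm^{-1}=e^{c_\pm}\phi(k)$. It remains to sum the scalar: $\sum_{n\ \mathrm{odd}}\frac{2}{n}w^n=\log(1+w)-\log(1-w)$, so $e^{c}=\frac{1+w}{1-w}$ with $w=zk$, $z/k$, $vk$ or $v/k$ respectively. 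This yields the four stated identities.

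The main obstacle is only formal. These are identities of formal series in $k,k^{-1}$ with operator coefficients, so the exponential and the infinite sums must be interpreted in a suitable completion: power series in $z$ (respectively $v$), with coefficients Laurent series in $k$. The bookkeeping here is that each coefficient of $z^n$ is a finite expression. Apart from this, the argument is a direct computation.
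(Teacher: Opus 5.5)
Your proposal is correct and follows essentially the paper's proof. Both compute $[\tilde{H}_{\pm}(z,v),\phi(k)]$ mode by mode from $[\lambda'_m,\phi_n]=\phi_{n-m}$, exponentiate via $e^{A}Be^{-A}=\sum_{r\geq 0}\frac{1}{r!}\mathrm{ad}_A^r B$, and sum $\sum_{n\in\mathbf{N}_{odd}}\frac{2}{n}w^n=\log\frac{1+w}{1-w}$. Your form of this last identity is the right one; the paper's version, summed over all $n\geq 1$ and with a minus sign, is a slip.

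One small correction to your side remark: with $[\phi_m,\phi_n]_+=2(-1)^m\delta_{m+n,0}$, the weight $(-1)^j$ in $\lambda'_m$ actually gives $[\lambda'_m,\phi_n]=\frac{1}{2}\bigl((-1)^m-1\bigr)\phi_{n-m}=-\phi_{n-m}$ for odd $m$. So the normalization that makes the stated relations hold is $(-1)^{j+1}$, for both $\lambda'_m$ and $\bar{\lambda}'_m$. Since your argument only uses those relations as stated, nothing else changes.
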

	\begin{proof}
		Combining Eqs.(\ref{commutation relations 1.2}) and (\ref{BUCH+}) yields
		\begin{align}
			[\tilde{H}_{+}(z,v),\phi(k)]
			&=\sum_{n=1}^{\infty}\sum_{m\in\mathbf{Z}+1/2}
			\frac{2}{n}(zk)^n  \phi_{n-m} k^{n-m}		\notag\\
			&= \sum_{n=1}^{\infty} \frac{2}{n}(zk)^n   \phi(k).
		\end{align}
		It can be derived	from Eq.(\ref{commutation relations 3.1}) that
		\begin{align}\label{UC-TD}
			\tilde{\Gamma}_+(z,v)\phi(k)\tilde{\Gamma}_+^{-1}(z,v)
			&=e^{\tilde{H}_{+}(z,v)}	\phi(k) e^{-\tilde{H}_{+}(z,v)}
			\notag\\	
			 &=\phi(k)+[\tilde{H}_{+}(z,v),\phi(k)]+\frac{1}{2!}[\tilde{H}_{+}(z,v),[\tilde{H}_{+}(z,v),\phi(k)]]+\ldots
			\notag\\	
			&=\exp{\left(\sum\limits_{n=1}^{\infty} \frac{2}{n}(zk)^n \right) }\phi(k).
		\end{align}
		Due to $\sum\limits_{n=1}^{\infty} \frac{2}{n}(zk)^n=-ln(\frac{1+zk}{1-zk})$, we obtain
		\begin{align}
			\tilde{\Gamma}_+(z,v)\phi(k)\tilde{\Gamma}_+^{-1}(z,v)
			=\frac{1+zk}{1-zk}	\phi(k).
		\end{align}
		
		Using the similar process, other equations can be proved.		
	\end{proof}
	
	\begin{proposition}
		The vertex operators and neutral fermions satisfy
		\begin{align}
			&\tilde{\Gamma}_+(z,v)\phi_{i}
			=\left( \phi_i + 2\sum_{n=1}^{\infty} \phi_{(i-n)} z^n \right)
			\tilde{\Gamma}_+(z,v),\\		
			&\tilde{\Gamma}_+(z,v)\bar{\phi}_{i}
			=\left( \bar{\phi}_i + 2\sum_{n=1}^{\infty} \bar{\phi}_{(i-n)} v^n \right)
			\tilde{\Gamma}_+(z,v),	\\
			&\tilde{\Gamma}_-(z,v)\left( \phi_i^* + 2\sum_{n=1}^{\infty}  \phi_{(i-n)}^*  z^n \right)	
			=\phi_{i}^*\tilde{\Gamma}_-(z,v),\label{Gamma-phi}	\\
			&\tilde{\Gamma}_-(z,v)\left( \bar{\phi}_i^* + 2\sum_{n=1}^{\infty}  \bar{\phi}_{(i-n)}^*  v^n \right)	
			=\bar{\phi}_{i}^*\tilde{\Gamma}_-(z,v).
		\end{align}
	\end{proposition}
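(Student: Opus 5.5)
The plan is to extract all four identities from Lemma \ref{H1} by comparing coefficients of $k^i$. The generating-function conjugation formulas are already established there, and the bar-versions are identical after the substitution $\phi\to\bar\phi$, $z\to v$. This works because $\tilde H_\pm(z,v)$ splits into a $\lambda'$-part and a $\bar\lambda'$-part that commute with the other species. So the work reduces to the two unbarred identities.

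For the first identity, I take Lemma \ref{H1} in the form $\tilde\Gamma_+(z,v)\phi(k)=\frac{1+zk}{1-zk}\phi(k)\tilde\Gamma_+(z,v)$. I expand
\begin{align*}
\frac{1+zk}{1-zk}=1+2\sum_{n\ge1}z^nk^n
\end{align*}
as a formal power series; this is legitimate since $\tilde\Gamma_+$ involves only annihilation-type modes, so the series acts in the completed sense. The right-hand side then becomes $\sum_{m}\phi_m k^m+2\sum_{n\ge1}\sum_m\phi_m z^n k^{m+n}$. Extracting the coefficient of $k^i$ gives $\phi_i+2\sum_{n\ge1}\phi_{i-n}z^n$, while the left-hand side gives $\tilde\Gamma_+\phi_i$. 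This is exactly the first displayed formula.

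For the third identity (\ref{Gamma-phi}), I rewrite everything in terms of $\phi^*_n=(-1)^n\phi_{-n}$. Define $\phi^*(k)=\sum_n\phi^*_nk^n$, so that $\phi^*(k)=\sum_n(-1)^n\phi_{-n}k^n=\phi(-1/k)$. Substituting $k\to-1/k$ into the $\tilde\Gamma_-$ formula of Lemma \ref{H1} gives
\begin{align*}
\tilde\Gamma_-(z,v)\phi^*(k)\tilde\Gamma_-^{-1}(z,v)=\frac{1-zk}{1+zk}\phi^*(k),
\end{align*}
equivalently $\phi^*(k)\tilde\Gamma_-=\tilde\Gamma_-\frac{1+zk}{1-zk}\phi^*(k)$. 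Expanding $\frac{1+zk}{1-zk}=1+2\sum_{n\ge1}z^nk^n$ again and taking the coefficient of $k^i$ yields $\phi^*_i\tilde\Gamma_-=\tilde\Gamma_-\bigl(\phi^*_i+2\sum_{n\ge1}\phi^*_{i-n}z^n\bigr)$, which is (\ref{Gamma-phi}). The bar version follows in the same way.

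The main obstacle is sign bookkeeping, in two places. First, I must check that the conjugation factors in Lemma \ref{H1} really are $\frac{1+zk}{1-zk}$ and not their inverses; the lemma's proof writes $\sum\frac{2}{n}(zk)^n$ over odd $n$ equal to $-\ln$, which I would recheck. Using $[\lambda'_m,\phi_n]=\phi_{n-m}$, one gets $[\tilde H_+,\phi(k)]=\sum_{n\,\mathrm{odd}}\frac2n(zk)^n\phi(k)$, and this sum is $\ln\frac{1+zk}{1-zk}$. Similarly, the $\tilde H_-$ conjugation must be checked to produce $\frac{1+z/k}{1-z/k}$. Second, the substitution $k\to-1/k$ must be verified to turn that factor into $\frac{1-zk}{1+zk}$. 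Once these two points are confirmed, the rest is routine coefficient extraction.
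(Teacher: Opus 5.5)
Your proposal is correct and follows essentially the same route as the paper: expand $\frac{1\pm t}{1\mp t}$ in the conjugation formulas of Lemma \ref{H1} and compare coefficients of $k^i$. The one cosmetic difference is that you pass to $\phi^*$ at the generating-function level via $\phi^*(k)=\phi(-1/k)$, whereas the paper first extracts coefficients of the $\phi$-relation and then sends $i\to -i$ and multiplies by $(-1)^i$. Your sign check is also right: $\sum_{n\ \mathrm{odd}}\frac{2}{n}t^n=\ln\frac{1+t}{1-t}$, so the $-\ln$ in the paper's proof of Lemma \ref{H1} is a typo that does not affect the stated formulas.
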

	\begin{proof}
		By using the Lemma $\ref{H1}$ and the identity
		\begin{align}
			\frac{1+t}{1-t}=1+2(\frac{1}{1-t}-1)
			=1+2\sum_{n=1}^{\infty}t^n,
		\end{align}	
		one gets	
		\begin{align}
			\tilde{\Gamma}_+(z,v)
			\sum_{i\in\mathbf{Z}}\phi_{i}k^{i}&=
			\left(1+2\sum_{n=1}^{\infty}(zk)^n\right)
			\sum_{j\in\mathbf{Z}}\phi_{j}k^{j} \tilde{\Gamma}_+(z,v)\notag\\
			&=\left(\sum_{j\in\mathbf{Z}}\phi_{j}k^{j}
			+2\sum_{n=1}^{\infty}\sum_{j\in\mathbf{Z}}  z^n k^{j+n} \phi_{j}
			\right)
			\tilde{\Gamma}_+(z,v).
		\end{align}
		Comparing the orders of $k$ on both sides, it is obvious that
		\begin{align}
			\tilde{\Gamma}_+(z,v)\phi_{i}
			=\left( \phi_i + 2\sum_{n=1}^{\infty} \phi_{(i-n)} z^n \right)
			\tilde{\Gamma}_+(z,v).
		\end{align}
		
		Similarly, based on the identity
		\begin{align}
			\frac{1-t}{1+t}=1+2\left(\frac{1}{1+t}-1\right)
			=1+2\sum_{n=1}^{\infty} (-1)^n t^n,
		\end{align}
		Eq.($\ref{Gamma-phi}$)  is rewritten as
		\begin{align}
			\tilde{\Gamma}_-(z,v)
			\left(\sum_{j\in\mathbf{Z}}\phi_{j}k^{j}
			+2\sum_{n=1}^{\infty}\sum_{j\in\mathbf{Z}}(-1)^n  z^n k^{j-n} \phi_{j}
			\right)
			=\sum_{i\in\mathbf{Z}}\phi_{i}k^{i}  \tilde{\Gamma}_-(z,v),
		\end{align}
		which means that
		\begin{align}
			\tilde{\Gamma}_-(z,v)\left( \phi_i + 2\sum_{n=1}^{\infty}  \phi_{(n+i)} (-1)^{n} z^n \right)	
			=\phi_{i}\tilde{\Gamma}_-(z,v).
		\end{align}
		Set $i\rightarrow -i$ and multiply both sides of the equation by $(-1)^i$ to deduce
		\begin{align}
			\tilde{\Gamma}_-(z,v)\left((-1)^i \phi_{-i} + 2\sum_{n=1}^{\infty}  (-1)^{i-n} \phi_{(n-i)}  z^n \right)	
			=(-1)^i\phi_{-i}\tilde{\Gamma}_-(z,v).
		\end{align}
		Thus
		\begin{align}
			\tilde{\Gamma}_-(z,v)\left( \phi_i^* + 2\sum_{n=1}^{\infty}  \phi_{(i-n)}^*  z^n \right)	
			=\phi_{i}^*\tilde{\Gamma}_-(z,v).
		\end{align}
		Other formulas can be proved by the same method.
	\end{proof}

	\begin{lemma}\label{Gamma-state}
		The actions of vertex operators $\tilde{\Gamma}_+(z,v)$  and $\tilde{\Gamma}_-(z,v)$ on state vectors satisfy
		\begin{align}
			\langle \tilde{\mu}^2,\tilde{\mu}^1|\tilde{\Gamma}_-(z,v) &=
			\sum_{{\tilde{\nu}^1 \prec \tilde{\mu}^1}	\atop{\tilde{\nu}^2 \prec \tilde{\mu}^2}}
			2^{\#(\tilde{\mu}^1|\tilde{\nu}^1) }2^{\#(\tilde{\mu}^2|\tilde{\nu}^2) }
			z^{|\tilde{\mu}^1| - |\tilde{\nu}^1|}    v^{|\tilde{\mu}^2| - |\tilde{\nu}^2|}     \langle \tilde{\nu}^2,\tilde{\nu}^1|,\\
			\tilde{\Gamma}_+(z,v)|\tilde{\mu}^1,\tilde{\mu}^2 \rangle &= 	
			\sum_{{\tilde{\nu}^1 \prec \tilde{\mu}^1}	\atop{\tilde{\nu}^2 \prec \tilde{\mu}^2}}
			2^{\#(\tilde{\mu}^1|\tilde{\nu}^1) }2^{\#(\tilde{\mu}^2|\tilde{\nu}^2) }
			z^{|\tilde{\mu}^1| - |\tilde{\nu}^1|}    v^{|\tilde{\mu}^2| - |\tilde{\nu}^2|}         |\tilde{\nu}^1,\tilde{\nu}^2 \rangle.
		\end{align}
	\end{lemma}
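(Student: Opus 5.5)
Here $\tilde{\Gamma}_\pm$ denote the vertex operators $e^{\tilde H_\pm}$, whose conjugation action on the fermion fields is given by Lemma \ref{H1}. I would prove the second identity, for $\tilde{\Gamma}_+(z,v)|\tilde{\mu}^1,\tilde{\mu}^2\rangle$, and obtain the first by the mirror argument using Eq.$(\ref{Gamma-phi})$ for the starred fermions.

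\textbf{Step 1 (reduction to two single-fermion problems).} Since $\lambda'_n$ commutes with $\bar{\phi}_m$ and $\bar{\lambda}'_n$ commutes with $\phi_m$, the operator factorizes as $\tilde{\Gamma}_+(z,v)=\Gamma_+(z)\bar{\Gamma}_+(v)$, with $\Gamma_+(z)$ built only from the $\phi$'s and $\bar{\Gamma}_+(v)$ only from the $\bar\phi$'s. By the preceding Proposition, conjugation by $\tilde{\Gamma}_+$ acts on each $\phi_i$ and $\bar\phi_i$ separately. Moreover $\tilde{\Gamma}_+(z,v)|0\rangle=|0\rangle$, because $\lambda'_m|0\rangle=\bar\lambda'_m|0\rangle=0$ for $m>0$. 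Commuting $\tilde{\Gamma}_+$ to the right through every fermion in $(\ref{o-right-state})$ therefore gives
\begin{align*}
\tilde{\Gamma}_+(z,v)|\tilde{\mu}^1,\tilde{\mu}^2\rangle
=\prod_{a}\Big(\phi_{\tilde{\mu}^1_a}+2\sum_{n\ge1}\phi_{\tilde{\mu}^1_a-n}z^n\Big)
\prod_{b}\Big(\bar\phi_{\tilde{\mu}^2_b}+2\sum_{n\ge1}\bar\phi_{\tilde{\mu}^2_b-n}v^n\Big)|0\rangle ,
\end{align*}
where both products keep the original order. The problem thus splits into two independent copies of the one-component statement $\Gamma_+(z)\phi_{\mu_1}\cdots\phi_{\mu_{2r}}|0\rangle=\sum_{\nu\prec\mu}2^{\#(\mu|\nu)}z^{|\mu|-|\nu|}|\nu\rangle$. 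This is the neutral-fermion analogue of the standard shifted Schur process computation in \cite{shifted-Schur-process,Foda09}.

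\textbf{Step 2 (the one-component computation).} Expand the product and write each factor as $\phi_{\mu_a-k_a}$ with weight $c_{k_a}z^{k_a}$, where $c_0=1$ and $c_k=2$ for $k\ge1$. I then reduce the resulting monomials using the anticommutation relations $(\ref{commutation relations 1.2})$, namely $\phi_j\phi_k=-\phi_k\phi_j$ for $j+k\neq0$ and $\phi_j^2=\delta_{j,0}$, together with $\phi_n|0\rangle=0$ for $n<0$. Terms containing a repeated index vanish. A term containing both $\phi_j$ and $\phi_{-j}$ can be contracted, but a negative index can only arise after $\mu_a-k_a$ has passed through $0$.

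I would organize the cancellation as a sign-reversing involution, exactly as in the proof for the phase and $i$-boson models \cite{i-boson,boshilunwen}. Whenever $\mu_a-k_a\le\mu_{a+1}$, so that the shifted index of one fermion overshoots the next part, pair the term with the one obtained by exchanging the roles of the two fermions. The two terms carry the same $z$-weight but opposite sign, so they cancel. The survivors are exactly the $\nu_a=\mu_a-k_a$ with $\mu_a\ge\nu_a\ge\mu_{a+1}$, that is, $\nu\prec\mu$.

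For a survivor, the weight $\prod_a c_{k_a}$ contributes a factor $2$ for each $a$ with $\nu_a<\mu_a$. Edge cases where $\nu_a=\mu_{a+1}$ collide are killed by $\phi^2=0$. The count of strict moves is then $\#(\mu|\nu)$, and the total degree in $z$ is $|\mu|-|\nu|$. Parity of the length is handled by the convention that a trailing index $0$ is appended; since $\phi_0|0\rangle=\tfrac{1}{\sqrt2}\cdot$ (the appropriate normalized vector) via $\phi_0^2=1$, this is compatible with the definition of $|\tilde\nu^1,\tilde\nu^2\rangle$.

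\textbf{Main obstacle.} The hard part is Step 2: making the cancellation rigorous, including the terms with indices that have passed below $0$, and verifying that the power of $2$ is exactly $2^{\#(\mu|\nu)}$ with no stray factors coming from $\phi_0$ or from the $(-1)^j$ in $[\phi_j,\phi_{-j}]_+$. I would first check the cases $2r=2$ by hand, then induct on $r$ by peeling off $\phi_{\mu_1}$. The dual formula follows identically, using Eq.$(\ref{Gamma-phi})$ with $\langle0|\tilde{\Gamma}_-=\langle0|$.
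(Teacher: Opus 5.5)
\textbf{The gap is in Step 2.} Your Step 1 follows the paper's opening move: the paper treats the bra, pulling $\tilde{\Gamma}_-$ leftward with Eq.~(\ref{Gamma-phi}) and $\langle 0|\tilde{\Gamma}_-=\langle 0|$, and the ket is the mirror case. Splitting into the $\phi$ and $\bar{\phi}$ sectors is also fine. But the cancellation mechanism you propose in Step 2 is false and gives the wrong power of $2$.

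Write $\Phi_i=\sum_{k\ge 0}c_kz^k\phi_{i-k}$ with $c_0=1$ and $c_k=2$ for $k\ge1$. Exchanging two adjacent fermions preserves the $z$-degree but not the weight $c_{k_a}c_{k_{a+1}}$, because one of the two terms may have $k=0$. Already for $\mu=(2,1)$ a direct expansion gives
\begin{align*}
\Phi_2\Phi_1|0\rangle=\phi_2\phi_1|0\rangle+2z\,\phi_2\phi_0|0\rangle+(4-2)z^2\,\phi_1\phi_0|0\rangle+(4-4)z^3|0\rangle .
\end{align*}
\begin{itemize}
\item The $z^2$ term comes from $(k_1,k_2)=(1,1)$ with weight $4$, and from its swap partner $(2,0)$ with weight $2$ and sign $-1$, since $\phi_0\phi_1=-\phi_1\phi_0$. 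They do not cancel; they combine to $2=2^{\#((2,1)|(1))}$.
\item Your rule assigns $4$ here. The number of $a$ with $\nu_a<\mu_a$ overcounts $\#(\mu|\nu)$ at every slot with $\nu_a=\mu_{a+1}>0$.
\item The survivor $(1,1)$ lies in your involution's domain $\mu_a-k_a\le\mu_{a+1}$ and is the partner of $(2,0)$, so the pairing is not even consistent.
\item The $z^3$ term shows a second failure. $(3,0)$ gives $2\phi_{-1}\phi_1|0\rangle=-4|0\rangle$. It is cancelled not by its swap partner, since $\phi_1\phi_{-1}|0\rangle=0$, but by $(2,1)$ through $\phi_0^2=1$. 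So ``repeated indices vanish'' fails at index $0$, and swaps do not dispose of contraction terms.
\end{itemize}

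\textbf{The missing idea} is the telescoping identity encoded in the paper's $T^1_l$. For $i>j$,
\begin{align*}
\Phi_i=\sum_{k=0}^{i-j-1}c_kz^k\phi_{i-k}+z^{i-j}\phi_j+z^{i-j}\Phi_j,\qquad \Phi_j^2=\tilde{\Gamma}_+\phi_j^2\tilde{\Gamma}_+^{-1}=\delta_{j,0}.
\end{align*}
Process the factors from left to right with $j=\mu_{a+1}$. This removes every tail, including all negative-index and contraction terms. Slot $a$ is left with $\mu_{a+1}\le\nu_a\le\mu_a$, weight $1$ at both endpoints and $2$ strictly inside, i.e.\ the factor $2-\delta_{\tilde{\mu}_l,\tilde{\nu}_l}-\delta_{\tilde{\mu}_{l+1},\tilde{\nu}_l}$. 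The last factor has weight $1$ at $\nu_{2r}=\mu_{2r}$ and $2$ below.

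You then need the combinatorial fact that the number of weight-$2$ slots equals $\#(\mu|\nu)$. Strictness forces every maximal run of slots with $\nu_a=\mu_{a+1}$ to be followed by a weight-$2$ slot. Each weight-$2$ slot, together with the run preceding it, accounts for exactly one part of $\mu$ missing from $\nu$.

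Finally, if $\mu_{2r}=0$, the term $z^{\mu_{2r-1}}\Phi_0^2=z^{\mu_{2r-1}}$ survives. It supplies the second unit of weight at $\nu_{2r-1}=0$; for example, $\Phi_1\Phi_0|0\rangle=\phi_1\phi_0|0\rangle+2z|0\rangle$. The paper's truncation also passes over this term. This, not a $1/\sqrt{2}$ normalisation, is how $\phi_0$ enters, since here $\phi_0^2=1$.
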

	\begin{proof}
		Combining the  fact
		$\langle 0| \tilde{\Gamma}_-(z,v)=\langle 0| $
		and Eq.$(\ref{o-left-state})$, one can get
		\begin{align}\label{Gamma_+-1}
			\langle \tilde{\mu}^2,\tilde{\mu}^1|\tilde{\Gamma}_-(z,v)
			= \langle 0|
			&\bar{\phi}_{\tilde{\mu}^{2}_{2r_{2}}}^* \dots \bar{\phi}_{\tilde{\mu}^{2}_1}^*
			\phi_{\tilde{\mu}^{1}_{2r_{1}}}^* \dots \phi_{\tilde{\mu}^{1}_1}^*
			\tilde{\Gamma}_-(z,v)\notag\\
			=\langle 0|
			&\left( \bar{\phi}_{\tilde{\mu}^{2}_{2r_{2}}}^* + 2\sum_{j_{2r_{2}}=1}^{\infty}
			\bar{\phi}_{(\tilde{\mu}^{2}_{2r_{2}}-j_{2r_{2}})}^*  v^{j_{2r_{2}}} \right)
			\ldots
			\left( \bar{\phi}_{\tilde{\mu}^{2}_1}^* + 2\sum_{j_1=1}^{\infty}
			\bar{\phi}_{(\tilde{\mu}^{2}_1-j_1)}^*  v^{j_1} \right)\notag\\
			&\left( \phi_{\tilde{\mu}^{1}_{2r_{1}}}^* + 2\sum_{i_{2r_{1}}=1}^{\infty}
			\phi_{(\tilde{\mu}^{1}_{2r_{1}}-i_{2r_{1}})}^*  z^{i_{2r_{1}}} \right)
			\ldots
			\left( \phi_{\tilde{\mu}^{1}_1}^* + 2\sum_{i_1=1}^{\infty}
			\phi_{(\tilde{\mu}^{1}_1-i_1)}^*  z^{i_1} \right)	\notag\\
			=\langle 0|
			&\left( \sum_{j_{2r_{2}}=0}^{\infty}  (2 - \delta_{j_{2r_{2}},0})
			\bar{\phi}_{(\tilde{\mu}^{2}_{2r_{2}}-j_{2r_{2}})}^*  v^{j_{2r_{2}}} \right)
			\ldots
			\left( \sum_{j_1=0}^{\infty}  (2 - \delta_{j_{1},0})
			\bar{\phi}_{(\tilde{\mu}^{2}_1-j_1)}^*  v^{j_1} \right)\notag\\
			&\left( \sum_{i_{2r_{1}}=0}^{\infty}  (2 - \delta_{i_{2r_{1}},0})
			\phi_{(\tilde{\mu}^{1}_{2r_{1}}-i_{2r_{1}})}^*  z^{i_{2r_{1}}} \right)
			\ldots
			\left( \sum_{i_1=0}^{\infty}  (2 - \delta_{i_{1},0})
			\phi_{(\tilde{\mu}^{1}_1-i_1)}^*  z^{i_1} \right)	.
		\end{align}
		By using  identities
		\begin{align}
			\left( \sum_{n=0}^{\infty} (2 - \delta_{i,0}) z^n \phi_{(i-n)}^* \right)
			\left( \sum_{m=0}^{\infty} (2 - \delta_{j,0}) z^n \phi_{(j-m)}^* \right) = 0,\\
			\left( \sum_{n=0}^{\infty} (2 - \delta_{i,0}) v^n \bar{\phi}_{(i-n)}^* \right)
			\left( \sum_{m=0}^{\infty} (2 - \delta_{j,0}) v^n \bar{\phi}_{(j-m)}^* \right) = 0,
		\end{align}
		Eq.$(\ref{Gamma_+-1})$ is rewritten as
		\begin{align}
			\langle \tilde{\mu}^2,\tilde{\mu}^1|\tilde{\Gamma}_-(z,v)
			=\langle 0|T_{2r_{2}}^{2}(v)T_{2r_{2}-1}^{2}(v) \cdots T_{1}^{2}(v)
			\cdot	T_{2r_{1}}^{1}(z)T_{2r_{1}-1}^{1}(z) \cdots T_{1}^{1}(z),
		\end{align}
		where
		\begin{align}
			T_{l}^{1}(z)=&\left\{
			\begin{array}
				{ll}\sum\limits_{i_l=0}^{\tilde{\mu}^{1}_l-\tilde{\mu}^{1}_{l+1}}
				(2 - \delta_{i_{l},0}-\delta_{i_{l},\tilde{\mu}^{1}_{l}-\tilde{\mu}^{1}_{l+1}})
				\phi_{(\tilde{\mu}^{1}_{l}-i_{l})}^*  z^{i_{l}} , & 1 \leq l \leq 2r_{1}-1, \\
				\\
				\sum\limits_{i_{2r_{1}}=0}^{\tilde{\mu}^{1}_{2r_{1}}}  (2 - \delta_{i_{2r_{1}},0})
				\phi_{(\tilde{\mu}^{1}_{2r_{1}}-i_{2r_{1}})}^*  z^{i_{2r_{1}}},
				& l=2r_{1},
			\end{array}\right.\\
			T_{k}^{2}(v)=&\left\{
			\begin{array}
				{ll}\sum\limits_{i_l=0}^{\tilde{\mu}^{2}_k-\tilde{\mu}^{2}_{k+1}}
				(2 - \delta_{i_{k},0}-\delta_{i_{k},\tilde{\mu}^{2}_{k}-\tilde{\mu}^{2}_{k+1}})
				\phi_{(\tilde{\mu}^{2}_{k}-i_{k})}^*  v^{i_{k}} , & 1 \leq k \leq 2r_{2}-1, \\
				\\
				\sum\limits_{i_{2r_{2}}=0}^{\tilde{\mu}^{2}_{2r_{2}}}  (2 - \delta_{i_{2r_{2}},0})
				\phi_{(\tilde{\mu}^{2}_{2r_{2}}-i_{2r_{2}})}^*  v^{i_{2r_{2}}},
				& k=2r_{2}.
			\end{array}\right.
		\end{align}

		For all $j=1,2$ and $1 \leq t \leq 2r_{j}$, setting $\tilde{\mu}^{j}_{t}-i_{t}=\tilde{\nu}^{j}_{t}$,
		then
		\begin{align}
			T_{l}^{1}(z)=&\left\{
			\begin{array}
				{ll}\sum\limits_{\tilde{\mu}^{1}_{l} \geq  \tilde{\nu}^{1}_{l} \geq \tilde{\mu}^{1}_{l+1}}
				(2 - \delta_{\tilde{\mu}^{1}_{l},\tilde{\nu}^{1}_{l}}
				-\delta_{\tilde{\mu}^{1}_{l+1},\tilde{\nu}^{1}_{l}}) 	
				\phi_{\tilde{\nu}^{1}_{l}}^*  z^{\tilde{\mu}^{1}_{l}-\tilde{\nu}^{2}_{l}} , & 1 \leq l \leq 2r_{1}-1, \\
				\\
				\sum\limits_{\tilde{\mu}^{1}_{2r_{1}} \geq  \tilde{\nu}^{1}_{2r_{1}} \geq 0}  (2 - \delta_{\tilde{\mu}^{1}_{2r_{1}},\tilde{\nu}^{1}_{2r_{1}}})
				\phi_{\tilde{\nu}^{1}_{2r_{1}}}^*  z^{\tilde{\mu}^{1}_{2r_{1}}-\tilde{\nu}^{1}_{2r_{1}}},
				& l=2r_{1},
			\end{array}\right.\\
			T_{k}^{2}(v)=&\left\{
			\begin{array}
				{ll}\sum\limits_{\tilde{\mu}^{2}_{k} \geq  \tilde{\nu}^{2}_{k} \geq \tilde{\mu}^{2}_{k+1}}
				(2 - \delta_{\tilde{\mu}^{2}_{k},\tilde{\nu}^{2}_{k}}
				-\delta_{\tilde{\mu}^{2}_{k+1},\tilde{\nu}^{2}_{k}}) 	
				\bar{\phi}_{\tilde{\nu}^{2}_{k}}^*  v^{\tilde{\mu}^{2}_{k}-\tilde{\nu}^{2}_{k}}, & 1 \leq k \leq 2r_{2}-1, \\
				\\
				\sum\limits_{\tilde{\mu}^{2}_{2r_{2}} \geq  \tilde{\nu}^{2}_{2r_{2}} \geq 0}  (2 - \delta_{\tilde{\mu}^{2}_{2r_{2}},\tilde{\nu}^{2}_{2r_{2}}})
				\bar{\phi}_{\tilde{\nu}^{2}_{2r_{2}}}^*  v^{\tilde{\mu}^{2}_{2r_{2}}-\tilde{\nu}^{2}_{2r_{2}}},
				& k=2r_{2}.
			\end{array}\right.
		\end{align}
		Note
		\begin{align}
			\langle \tilde{\nu}^2,\tilde{\nu}^1|=\langle 0|\bar{\phi}_{\tilde{\nu}^{2}_{2r_{2}}}^* \dots \bar{\phi}_{\tilde{\nu}^{2}_1}^*
			\phi_{\tilde{\nu}^{1}_{2r_{1}}}^* \dots \phi_{\tilde{\nu}^{1}_1}^*.
		\end{align}
From the following subscript relations,
		\begin{align}
			&m_{i}^{j} \geq n^j_{i} \geq m_{i+1}^{j}  \implies  \tilde{\mu}_i^{j}\geq \tilde{\nu}_i^{j} \geq \tilde{\mu}_{i+1}^{j},\quad
			\text{ for all } 1 \leq i \leq 2r_{j}-1, \notag\\
			&m_{i}^{j} \geq n^j_{i} \geq 0
			\implies  \tilde{\mu}_i^{j} \geq \tilde{\nu}_i^{j} \geq 0
			\text{ for } i=2r_{j}, \;  j=1,2,
		\end{align}
we have
		\begin{align}
			\langle \tilde{\mu}^2,\tilde{\mu}^1|\tilde{\Gamma}_-(z,v)
			=\sum_{{\tilde{\nu}^1 \prec \tilde{\mu}^1}	\atop{\tilde{\nu}^2 \prec \tilde{\mu}^2}}
			2^{\#(\tilde{\mu}^1|\tilde{\nu}^1) }2^{\#(\tilde{\mu}^2|\tilde{\nu}^2) }
			z^{|\tilde{\mu}^1| - |\tilde{\nu}^1|}    v^{|\tilde{\mu}^2| - |\tilde{\nu}^2|}     \langle \tilde{\nu}^2,\tilde{\nu}^1|.
		\end{align}

		Analogous to the above derivation, we obtain
		\begin{align}
			\tilde{\Gamma}_+(z,v)|\tilde{\mu}^1,\tilde{\mu}^2 \rangle &= 	
			\sum_{{\tilde{\nu}^1 \prec \tilde{\mu}^1}	\atop{\tilde{\nu}^2 \prec \tilde{\mu}^2}}
			2^{\#(\tilde{\mu}^1|\tilde{\nu}^1) }2^{\#(\tilde{\mu}^2|\tilde{\nu}^2) }
			z^{|\tilde{\mu}^1| - |\tilde{\nu}^1|}    v^{|\tilde{\mu}^2| - |\tilde{\nu}^2|}         |\tilde{\nu}^1,\tilde{\nu}^2 \rangle.
		\end{align}
	\end{proof}
	\begin{remark}\label{Gamma+mu1,mu1-Gamma}
		For $j=1,2$, the case of  one of the partition being $\emptyset$ in Lemma $\ref{Gamma-state}$ can be inferred from the above derivation that
		\begin{align}
			\langle \tilde{\mu}^j|\tilde{\Gamma}_-(z_1,z_2) &=
			\sum_{\tilde{\nu}^j \prec \tilde{\mu}^j}
			2^{\#(\tilde{\mu}^j|\tilde{\nu}^j) } z_{j}^{|\tilde{\mu}^j| - |\tilde{\nu}^j|}    \langle \tilde{\nu}^j|,\\
			\tilde{\Gamma}_+(z_1,z_2)|\tilde{\mu}^j \rangle &=
			\sum_{\tilde{\nu}^j \prec \tilde{\mu}^j}
			2^{\#(\tilde{\mu}^j|\tilde{\nu}^j) } z_{j}^{|\tilde{\mu}^j| - |\tilde{\nu}^j|}           |\tilde{\nu}^j \rangle	.
		\end{align}
		
	\end{remark}

	\subsection{The limiting forms of the scalar product of the   generalized $i$-boson model}
	Consider the infinite lattice limit $M_{1},M_{2} \to \infty$ for the   generalized $i$-boson model.
	\begin{lemma}\label{M-wq-n}
		For basis vectors $|\tilde{n}^{1}\rangle^{(1)}\bigotimes|\tilde{n}^{2}\rangle^{(2)}$ and $	 \prescript{(2)}{}{\langle \tilde{n}^{2}|}\bigotimes \prescript{(1)}{}{\langle \tilde{n}^{1}|} $,
		set
		\begin{align}
			\tilde{\mathcal{M}}  | \tilde{n}^{1}\rangle^{(1)}\bigotimes|\tilde{n}^{2}\rangle^{(2)}&=
			2^{-l(\tilde{\nu}^{1})-l(\tilde{\nu}^{2})}	
			| \tilde{\nu}^{1},\tilde{\nu}^{2} \rangle, \\	
			\prescript{(2)}{}{\langle \tilde{n}^{2}|}\bigotimes \prescript{(1)}{}{\langle \tilde{n}^{1}|}  \tilde{\mathcal{M}}^* &=
			2^{-l(\tilde{\nu}^{1})-l(\tilde{\nu}^{2})}	
			\langle \tilde{\nu}^{2},\tilde{\nu}^{1}|.
		\end{align}	
		The following equations hold
		\begin{align}
			\tilde{\mathcal{M}} \left[ \lim_{M_{1},M_{2} \to \infty} \tilde{\mathbb{B}}_{1}(z) \tilde{\mathbb{B}}_{2}(v)|\tilde{n}^{1}\rangle^{(1)}\bigotimes|\tilde{n}^{2}\rangle^{(2)}\right] &=
			2^{- l(\tilde{\nu}^{1})}	2^{- l(\tilde{\nu}^{2})}	
			\tilde{\Gamma}_-(z,v) | \tilde{\nu}^1,\tilde{\nu}^2 \rangle, \\
			\quad \left[ \lim_{M_{1},M_{2}  \to \infty}
			\prescript{(2)}{}{\langle \tilde{n}^{2}|}\bigotimes \prescript{(1)}{}{\langle \tilde{n}^{1}|}  \tilde{\mathbb{C}}_{2}(v)\tilde{\mathbb{C}}_{1}(z) \right] \tilde{\mathcal{M}}^* &=
			2^{- l(\tilde{\nu}^{1})}	2^{- l(\tilde{\nu}^{2})}	
			\langle \tilde{\nu}^2,\tilde{\nu}^1| \tilde{\Gamma}_+(z,v).
		\end{align}
	\end{lemma}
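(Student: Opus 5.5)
We compare the two sides term by term, starting from the finite-lattice formula and using the vertex-operator action established in Lemma \ref{Gamma-state}.

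\textbf{Left-hand side.} By the Proposition containing Eq.$(\ref{M-B1-B2-N})$, for finite $M_1,M_2$ we have
\begin{align*}
\tilde{\mathcal{M}}\tilde{\mathbb{B}}_{1}(z)\tilde{\mathbb{B}}_{2}(v)|\tilde{n}^{1}\rangle^{(1)}\bigotimes|\tilde{n}^{2}\rangle^{(2)}
=\sum_{{\tilde{\nu}^{1} \prec \tilde{\mu}^{1} \subseteq [l_{1}+1,M_{1}]}\atop{\tilde{\nu}^{2} \prec \tilde{\mu}^{2} \subseteq [l_{2}+1,M_{2}]}}
2^{\#(\tilde{\mu}^{1}|\tilde{\nu}^{1})-l(\tilde{\mu}^{1})}2^{\#(\tilde{\mu}^{2}|\tilde{\nu}^{2})-l(\tilde{\mu}^{2})}
z^{|\tilde{\mu}^{1}|-|\tilde{\nu}^{1}|}v^{|\tilde{\mu}^{2}|-|\tilde{\nu}^{2}|}|\tilde{\mu}^{1},\tilde{\mu}^{2}\rangle .
\end{align*}
The only dependence on $M_j$ is the constraint $\tilde{\mu}^j_1\le M_j$. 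Letting $M_1,M_2\to\infty$ removes this constraint. The result is a formal power series in $z,v$, summed over all strict $\tilde{\mu}^j\succ\tilde{\nu}^j$. Each fixed power of $z,v$ receives only finitely many terms, so the limit is well defined coefficientwise.

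\textbf{Right-hand side.} Expand $\tilde{\Gamma}_-(z,v)|\tilde{\nu}^1,\tilde{\nu}^2\rangle$. This is the "adding" counterpart of Lemma \ref{Gamma-state}, which itself treats $\tilde{\Gamma}_+$ on kets and $\tilde{\Gamma}_-$ on bras. The cleanest route is duality with the pairing $(\ref{fock-scalar-product})$. The coefficient of $|\tilde{\mu}^1,\tilde{\mu}^2\rangle$ in $\tilde{\Gamma}_-|\tilde{\nu}^1,\tilde{\nu}^2\rangle$ equals
\[
2^{-l(\tilde{\mu}^1)-l(\tilde{\mu}^2)}\langle\tilde{\mu}^2,\tilde{\mu}^1|\tilde{\Gamma}_-(z,v)|\tilde{\nu}^1,\tilde{\nu}^2\rangle .
\]
By the first formula of Lemma \ref{Gamma-state} and orthogonality, this becomes
\[
2^{-l(\tilde{\mu}^1)-l(\tilde{\mu}^2)}\,2^{\#(\tilde{\mu}^1|\tilde{\nu}^1)+\#(\tilde{\mu}^2|\tilde{\nu}^2)}\,z^{|\tilde{\mu}^1|-|\tilde{\nu}^1|}v^{|\tilde{\mu}^2|-|\tilde{\nu}^2|}\,2^{l(\tilde{\nu}^1)+l(\tilde{\nu}^2)} .
\]
Multiplying by the prefactor $2^{-l(\tilde{\nu}^1)-l(\tilde{\nu}^2)}$ cancels the last factor. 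What remains is exactly the limiting summand above.

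\textbf{Bra statement.} The second statement follows symmetrically. Use Eq.$(\ref{M-C1-C2-N})$ in the limit, then pair with kets and apply the second formula of Lemma \ref{Gamma-state} for $\tilde{\Gamma}_+$.

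\textbf{Main obstacle.} The difficulty is bookkeeping of the normalizations $2^{\pm l}$. The pairing in $(\ref{fock-scalar-product})$ produces $2^{l(\tilde{\mu})}$, while the map $\tilde{\mathcal{M}}$ carries $2^{-l}$; both must be matched consistently. There is also an edge case: a zero part (odd length, where $\tilde{\nu}_{2r}=0$) enters the $l$ and $\#$ counts. I would check this carefully on small examples, e.g. $\tilde{\nu}=\emptyset$ and $\tilde{\mu}=(k)$, to confirm that the powers of $2$ agree on both sides.
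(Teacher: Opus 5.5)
Your proposal is correct and follows essentially the same route as the paper. Both arguments take the $M_1,M_2\to\infty$ limit of Eqs.~$(\ref{M-B1-B2-N})$ and $(\ref{M-C1-C2-N})$, then identify the result with $2^{-l(\tilde{\nu}^1)-l(\tilde{\nu}^2)}\tilde{\Gamma}_-(z,v)|\tilde{\nu}^1,\tilde{\nu}^2\rangle$ (respectively $\langle\tilde{\nu}^2,\tilde{\nu}^1|\tilde{\Gamma}_+(z,v)$) by transferring Lemma~\ref{Gamma-state} across the orthogonal pairing $(\ref{fock-scalar-product})$. The paper phrases that transfer as a consistency check with an auxiliary factor $2^t$ that is forced to vanish, whereas you read off the coefficients directly, which is the cleaner direction of the same argument.
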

	\begin{proof}
		On account of  Eqs.$(\ref{M-B1-B2-N})$ and $(\ref{M-C1-C2-N})$, we get
		\begin{align}
			&\tilde{\mathcal{M}} \left[ \lim_{M_{1},M_{2} \to \infty} \mathbb{B}_{1}(z) \mathbb{B}_{2}(v)|\tilde{n}^{1}\rangle^{(1)}\bigotimes|\tilde{n}^{2}\rangle^{(2)} \right] \notag\\
			=&\sum_{{\tilde{\nu}^1 \prec \tilde{\mu}^1}	\atop{\tilde{\nu}^2 \prec \tilde{\mu}^2}}
			2^{\#(\tilde{\mu}^{1}|\tilde{\nu}^{1}) - l(\tilde{\mu}^{1})}
			2^{\#(\tilde{\mu}^{2}|\tilde{\nu}^{2}) - l(\tilde{\mu}^{2})}
			z^{|\tilde{\mu}^{1}| - |\tilde{\nu}^{1}|} v^{|\tilde{\mu}^{2}| - |\tilde{\nu}^{2}|}
			| \tilde{\mu}^{1},\tilde{\mu}^{2} \rangle,  \\
			&\left[ \lim_{M_{1},M_{2}  \to \infty}
			\prescript{(2)}{}{\langle \tilde{n}^{2}|}\bigotimes \prescript{(1)}{}{\langle \tilde{n}^{1}|}
			\mathbb{C}_{2}(v)\mathbb{C}_{1}(z) \right] \tilde{\mathcal{M}}^* \notag\\
			=&\sum_{{\tilde{\nu}^1 \prec \tilde{\mu}^1}	\atop{\tilde{\nu}^2 \prec \tilde{\mu}^2}}
			2^{\#(\tilde{\mu}^{1}|\tilde{\nu}^{1}) - l(\tilde{\mu}^{1})}
			2^{\#(\tilde{\mu}^{2}|\tilde{\nu}^{2}) - l(\tilde{\mu}^{2})}
			z^{|\tilde{\mu}^{1}| - |\tilde{\nu}^{1}|} v^{|\tilde{\mu}^{2}| - |\tilde{\nu}^{2}|}
			\langle \tilde{\mu}^{2},\tilde{\mu}^{1} |.
		\end{align}
		Then
		\begin{align}
			\tilde{\Gamma}_-(z,v) |\tilde{\nu}^1,\tilde{\nu}^2 \rangle
			=&\sum_{{\tilde{\nu}^1 \prec \tilde{\mu}^1}	\atop{\tilde{\nu}^2 \prec \tilde{\mu}^2}}
			2^{\#(\tilde{\mu}^{1}|\tilde{\nu}^{1}) - l(\tilde{\mu}^{1})+l(\tilde{\nu}^{1})}
			2^{\#(\tilde{\mu}^{2}|\tilde{\nu}^{2}) - l(\tilde{\mu}^{2})+l(\tilde{\nu}^{2})}
			z^{|\tilde{\mu}^{1}| - |\tilde{\nu}^{1}|} v^{|\tilde{\mu}^{2}| - |\tilde{\nu}^{2}|}
			| \tilde{\mu}^{1},\tilde{\mu}^{2} \rangle,  \label{Gamma--right}  \\
			\langle \tilde{\nu}^2,\tilde{\nu}^1|\tilde{\Gamma}_+(z,v)
			=&\sum_{{\tilde{\nu}^1 \prec \tilde{\mu}^1}	\atop{\tilde{\nu}^2 \prec \tilde{\mu}^2}}
			2^{\#(\tilde{\mu}^{1}|\tilde{\nu}^{1}) - l(\tilde{\mu}^{1})+l(\tilde{\nu}^{1})}
			2^{\#(\tilde{\mu}^{2}|\tilde{\nu}^{2}) - l(\tilde{\mu}^{2})+l(\tilde{\nu}^{2})}
			z^{|\tilde{\mu}^{1}| - |\tilde{\nu}^{1}|} v^{|\tilde{\mu}^{2}| - |\tilde{\nu}^{2}|}
			\langle \tilde{\nu}^2,\tilde{\nu}^1|.\label{Gamma+-left}
		\end{align}
		It follows from the orthogonality of Eq.$(\ref{fock-scalar-product})$ that
		\begin{align}
			\langle \tilde{\mu}^2,\tilde{\mu}^1|\tilde{\Gamma}_-(z,v) |\tilde{\nu}^1,\tilde{\nu}^2 \rangle
			&=
			\langle \tilde{\nu}^2,\tilde{\nu}^1|\tilde{\Gamma}_+(z,v)|\tilde{\mu}^1,\tilde{\mu}^2 \rangle\notag\\
			&=	
			2^{\#(\tilde{\mu}^{1}|\tilde{\nu}^{1})+l(\tilde{\nu}^{1})}
			2^{\#(\tilde{\mu}^{2}|\tilde{\nu}^{2}) +l(\tilde{\nu}^{2})}
			z^{|\tilde{\mu}^{1}| - |\tilde{\nu}^{1}|} v^{|\tilde{\mu}^{2}| - |\tilde{\nu}^{2}|}.
		\end{align}
		Assume that
		\begin{align}
			\langle \tilde{\mu}^2,\tilde{\mu}^1|\tilde{\Gamma}_-(z,v)
			=&\sum_{{\tilde{\nu}^1 \prec \tilde{\mu}^1}	\atop{\tilde{\nu}^2 \prec \tilde{\mu}^2}}
			2^{t}2^{\#(\tilde{\mu}^{1}|\tilde{\nu}^{1})}
			2^{\#(\tilde{\mu}^{2}|\tilde{\nu}^{2}) }
			z^{|\tilde{\mu}^{1}| - |\tilde{\nu}^{1}|} v^{|\tilde{\mu}^{2}| - |\tilde{\nu}^{2}|}\langle \tilde{\nu}^2,\tilde{\nu}^1|,\\
			\tilde{\Gamma}_+(z,v)|\tilde{\mu}^1,\tilde{\mu}^2 \rangle
			=&\sum_{{\tilde{\nu}^1 \prec \tilde{\mu}^1}	\atop{\tilde{\nu}^2 \prec \tilde{\mu}^2}}
			2^{t}2^{\#(\tilde{\mu}^{1}|\tilde{\nu}^{1})}
			2^{\#(\tilde{\mu}^{2}|\tilde{\nu}^{2}) }
			z^{|\tilde{\mu}^{1}| - |\tilde{\nu}^{1}|} v^{|\tilde{\mu}^{2}| - |\tilde{\nu}^{2}|}|\tilde{\nu}^1,\tilde{\nu}^2 \rangle .
		\end{align}
		Based on Eq.(\ref{fock-scalar-product}), we have
		\begin{align}
			\langle \tilde{\mu}^2,\tilde{\mu}^1|\tilde{\Gamma}_-(z,v) |\tilde{\nu}^1,\tilde{\nu}^2 \rangle
			=&\langle \tilde{\nu}^2,\tilde{\nu}^1|\tilde{\Gamma}_+(z,v)|\tilde{\mu}^1,\tilde{\mu}^2 \rangle\notag\\
			=&
			2^{t} 2^{l(\tilde{\nu}^{1})+l(\tilde{\nu}^{2})}
			2^{\#(\tilde{\mu}^{1}|\tilde{\nu}^{1})}
			2^{\#(\tilde{\mu}^{2}|\tilde{\nu}^{2}) }
			z^{|\tilde{\mu}^{1}| - |\tilde{\nu}^{1}|} v^{|\tilde{\mu}^{2}| - |\tilde{\nu}^{2}|}\notag\\
			=&
			2^{\#(\tilde{\mu}^{1}|\tilde{\nu}^{1})+l(\tilde{\nu}^{1})}
			2^{\#(\tilde{\mu}^{2}|\tilde{\nu}^{2}) +l(\tilde{\nu}^{2})}
			z^{|\tilde{\mu}^{1}| - |\tilde{\nu}^{1}|} v^{|\tilde{\mu}^{2}| - |\tilde{\nu}^{2}|}.
		\end{align}
		Thus $t=0$, than the following equations hold
		\begin{align}
			\langle \tilde{\mu}^2,\tilde{\mu}^1|\tilde{\Gamma}_-(z,v)
			=&\sum_{{\tilde{\nu}^1 \prec \tilde{\mu}^1}	\atop{\tilde{\nu}^2 \prec \tilde{\mu}^2}}
			2^{\#(\tilde{\mu}^{1}|\tilde{\nu}^{1})}
			2^{\#(\tilde{\mu}^{2}|\tilde{\nu}^{2}) }
			z^{|\tilde{\mu}^{1}| - |\tilde{\nu}^{1}|} v^{|\tilde{\mu}^{2}| - |\tilde{\nu}^{2}|}\langle \tilde{\nu}^2,\tilde{\nu}^1|, \label{Gamma-left}\\
			\tilde{\Gamma}_+(z,v)|\tilde{\mu}^1,\tilde{\mu}^2 \rangle
			=&\sum_{{\tilde{\nu}^1 \prec \tilde{\mu}^1}	\atop{\tilde{\nu}^2 \prec \tilde{\mu}^2}}
			2^{\#(\tilde{\mu}^{1}|\tilde{\nu}^{1})}
			2^{\#(\tilde{\mu}^{2}|\tilde{\nu}^{2}) }
			z^{|\tilde{\mu}^{1}| - |\tilde{\nu}^{1}|} v^{|\tilde{\mu}^{2}| - |\tilde{\nu}^{2}|}|\tilde{\nu}^1,\tilde{\nu}^2 \rangle .\label{Gamma+right}
		\end{align}
From Lemma $\ref{Gamma-state}$ that the proof is complete.
	\end{proof}
Similarly, for $j=1,2$, we have	
	\begin{align}
		\tilde{\Gamma}_-(z_{1},z_{2}) |\tilde{\nu}^j \rangle
		=&\sum_{\tilde{\nu}^j \prec \tilde{\mu}^j	}
		2^{\#(\tilde{\mu}^{j}|\tilde{\nu}^{j}) - l(\tilde{\mu}^{j})+l(\tilde{\nu}^{j})}
		z_{j}^{|\tilde{\mu}^{j}| - |\tilde{\nu}^{j}|}
		| \tilde{\mu}^{j}\rangle,   \label{Gamma-right-1}\\
		\langle \tilde{\nu}^j|\tilde{\Gamma}_+(z_{1},z_{2})
		=&\sum_{\tilde{\nu}^j \prec \tilde{\mu}^j	}
		2^{\#(\tilde{\mu}^{j}|\tilde{\nu}^{j}) - l(\tilde{\mu}^{j})+l(\tilde{\nu}^{j})}
		z_{j}^{|\tilde{\mu}^{j}| - |\tilde{\nu}^{j}|}
		\langle \tilde{\nu}^j|.\label{Gamma+left-1}
	\end{align}

	When the number of lattice site $M_{1},M_{2}$ is sufficiently large, the scalar product will generate boxed  BUC plane partition with unrestricted column heights as follows
	\begin{align}
		&\lim\limits_{M_{1},M_{2}\to \infty}	
		\tilde{S}(\{\mathbf{x^1}\}, \{\mathbf{y^2}\}, \{\mathbf{z^1}\}, \{\mathbf{v^2}\})\notag\\
		=&\lim\limits_{M_{1},M_{2}\to \infty}
		\bigg \langle \prescript{(2)}{}{\langle 0|}\bigotimes \prescript{(1)}{}{\langle 0|}
		\rprod_{j=1}^{N_2} \tilde{\mathbb{C}}_{2}(y_j)
		\prod_{i=1}^{N_1} \tilde{\mathbb{C}}_{1}(x_i) \tilde{\mathcal{M}}^*
		,\tilde{\mathcal{M}}\prod_{l=1}^{N_1} \tilde{\mathbb{B}}_{1}(z_i)
		\prod_{k=1}^{N_2} \tilde{\mathbb{B}}_{2}(v_k)
		|0\rangle^{(1)}\bigotimes|0\rangle^{(2)} \bigg\rangle\notag\\
		=&
		\sum_{{\tilde{\pi}^{1} \subseteq [N_{1}, N_{1},\infty]}
			\atop{\tilde{\pi}^{2} \subseteq [N_{2}, N_{2}, \infty]}}
		B_{\tilde{\pi}^{1} }(\{\mathbf{x^1}\}, \{\mathbf{z^1}\}) B_{\tilde{\pi}^{2} }(\{\mathbf{y^2}\}, \{\mathbf{v^2}\}).
	\end{align}
	Setting $N=\max\{N_{1},N_{2}\}$,
	it can be inferred from
	Eq.$(\ref{Gamme-commutation-relations})$ and the Lemma $\ref{M-wq-n}$ that
	\begin{align}\label{scalar-product-Gamma1-Gamma2-m-wq}
		&\lim\limits_{M_{1},M_{2}\to \infty}
		\bigg \langle \prescript{(2)}{}{\langle 0|}\bigotimes \prescript{(1)}{}{\langle 0|}
		\rprod_{j=1}^{N_2} \tilde{\mathbb{C}}_{2}(y_j)
		\prod_{i=1}^{N_1} \tilde{\mathbb{C}}_{1}(x_i) \tilde{\mathcal{M}}^*
		,\tilde{\mathcal{M}}\prod_{l=1}^{N_1} \tilde{\mathbb{B}}_{1}(z_i)
		\prod_{k=1}^{N_2} \tilde{\mathbb{B}}_{2}(v_k)
		|0\rangle^{(1)}\bigotimes|0\rangle^{(2)} \bigg\rangle\notag\\
		=&
		\langle 0|
		\tilde{\Gamma}_+(x_{N},y_{N})\cdots\tilde{\Gamma}_+(x_{1},y_{1})
		\tilde{\Gamma}_-(z_1,v_1)\cdots\tilde{\Gamma}_-(z_{N},v_{N})
		|0\rangle\notag\\
		=&	\prod_{i,l=1}^{N_1} \frac{1 +  x_{j}z_{l}}{1 -  x_{j}z_{l}} \prod_{j,k=1}^{N_2} \frac{1 + y_{j} v_{k}}{1 - y_{j} v_{k}},
	\end{align}
	which means
	\begin{align}
		\sum_{{\tilde{\pi}^{1} \subseteq [N_{1}, N_{1},\infty]}
			\atop{\tilde{\pi}^{2} \subseteq [N_{2}, N_{2}, \infty]}}
		B_{\tilde{\pi}^{1} }(\{\mathbf{x^1}\}, \{\mathbf{z^1}\}) B_{\tilde{\pi}^{2} }(\{\mathbf{y^2}\}, \{\mathbf{v^2}\})
		=\prod_{i,l=1}^{N_1} \frac{1 +  x_{j}z_{l}}{1 -  x_{j}z_{l}} \prod_{j,k=1}^{N_2} \frac{1 + y_{j} v_{k}}{1 - y_{j} v_{k}}.
	\end{align}
From Eq.$(\ref{scalar-product-S1-S2})$, the Schur $Q$-functions products  of the generating function can be expressed as
	\begin{align}
		\sum_{{ \tilde{\mu}^{1} \subseteq [N_1,\infty]}\atop{\tilde{\mu}^{2} \subseteq [N_2,\infty]}}
		2^{- l(\tilde{\mu}^{1})}	2^{- l(\tilde{\mu}^{2})}	
		Q_{\tilde{\mu}^{1}}\{\mathbf{x^1}\}  Q_{\tilde{\mu}^{1}}\{\mathbf{z^1}\}
		Q_{\tilde{\mu}^{2}}\{\mathbf{y^2}\}  Q_{\tilde{\mu}^{2}}\{\mathbf{v^2}\} 	
		=\prod_{i,l=1}^{N_1} \frac{1 +  x_{j}z_{l}}{1 -  x_{j}z_{l}} \prod_{j,k=1}^{N_2} \frac{1 + y_{j} v_{k}}{1 - y_{j} v_{k}}.
	\end{align}
	
	Consider the double scaling limit assuming that the number of lattice sites $M_{1},M_{2}$ and number of particles $N_{1},N_{2}$ tend to infinity.
	\begin{proposition}
		Setting
		\begin{align}\label{set-z}
			x_{n}=z_{n}=p^{n-\frac{1}{2}}, ~ y_{m}=v_{m}=q^{m-\frac{1}{2}}
			\quad \text{for all } 1 \leq n \leq N_1, ~ 1 \leq m \leq N_2.
		\end{align}
		When $M_{j},N_{j} \to \infty(j=1,2)$, the following equation holds
		
		\begin{align}
			\lim\limits_{{M_{j},N_{j}\to \infty} \atop{j=1,2}	}
			\tilde{S}(\{\mathbf{x^1}\}, \{\mathbf{y^2}\}, \{\mathbf{z^1}\}, \{\mathbf{v^2}\})
			&=
			\sum_{\begin{smallmatrix}\tilde{\pi}^{1}\text{ and }\tilde{\pi}^{2}\text{ are}\\\text{strict plane partitions}\end{smallmatrix}}
			B_{\tilde{\pi}^{1} }(\{\mathbf{x^1}\}, \{\mathbf{z^1}\}) B_{\tilde{\pi}^{2} }(\{\mathbf{y^2}\}, \{\mathbf{v^2}\})\notag\\
			&=	\sum_{\begin{smallmatrix}\tilde{\pi}^{1}\text{ and }\tilde{\pi}^{2}\text{ are}\\\text{strict plane partitions}\end{smallmatrix}}
			2^{p{(\tilde{\pi}^{1})}}2^{p{(\tilde{\pi}^{2})}}
			p^{|\tilde{\pi}^{1}|}q^{|\tilde{\pi}^{2}|}\notag\\
			&=\prod_{n=1}^\infty\left(\frac{1+p^n}{1-p^n}\right)^n
			\prod_{m=1}^\infty\left(\frac{1+q^m}{1-q^m}\right)^m.
		\end{align}
	\end{proposition}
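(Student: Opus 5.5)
The plan is to take the finite-box results already established and pass to the double scaling limit in two stages: first $M_1,M_2\to\infty$, then $N_1,N_2\to\infty$ after the specialization (\ref{set-z}).

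\textbf{Step 1 (the case $M_j\to\infty$).} By the computation leading to (\ref{scalar-product-Gamma1-Gamma2-m-wq}), for fixed $N_1,N_2$,
\begin{align*}
\lim_{M_1,M_2\to\infty}\tilde S=\sum_{{\tilde{\pi}^{1} \subseteq [N_{1}, N_{1},\infty]}\atop{\tilde{\pi}^{2} \subseteq [N_{2}, N_{2}, \infty]}} B_{\tilde{\pi}^{1}}(\{\mathbf{x^1}\},\{\mathbf{z^1}\})\, B_{\tilde{\pi}^{2}}(\{\mathbf{y^2}\},\{\mathbf{v^2}\})
=\prod_{i,l=1}^{N_1}\frac{1+x_iz_l}{1-x_iz_l}\prod_{j,k=1}^{N_2}\frac{1+y_jv_k}{1-y_jv_k}.
\end{align*}
This identity is obtained by commuting each $\tilde\Gamma_+$ past each $\tilde\Gamma_-$ with (\ref{Gamme-commutation-relations}) and using $\langle 0|\tilde\Gamma_-=\langle 0|$ and $\tilde\Gamma_+|0\rangle=|0\rangle$. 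All series involved are formal power series in the variables, so no analytic issue arises at this stage.

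\textbf{Step 2 (specialization of the weights).} Substitute $x_n=z_n=p^{n-\frac12}$ and $y_m=v_m=q^{m-\frac12}$ into $B_{\tilde\pi^1}$. The exponent of $p$ telescopes slice by slice:
\begin{align*}
\sum_{i=1}^{N_1}\Big(i-\tfrac12\Big)\big(|\tilde\pi^1_{-i+1}|-|\tilde\pi^1_{-i}|\big)+\sum_{i=1}^{N_1}\Big(i-\tfrac12\Big)\big(|\tilde\pi^1_{i-1}|-|\tilde\pi^1_{i}|\big).
\end{align*}
Abel summation, using $\tilde\pi^1_{\pm N_1}=\emptyset$, turns this into $\tfrac12|\tilde\pi^1_0|\cdot 2+\sum_{i\ne 0,\,|i|<N_1}|\tilde\pi^1_i|$. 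This equals $\sum_i|\tilde\pi^1_i|=|\tilde\pi^1|$. The same computation for $\tilde\pi^2$ gives $q^{|\tilde\pi^2|}$. Together with the factors $2^{p(\tilde\pi^j)}$ (via Lemma \ref{strict-pi}, already built into $B$), this yields the middle expression for each fixed $N_j$, summed over strict plane partitions with at most $N_j$ rows and columns.

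\textbf{Step 3 (limit $N_j\to\infty$).} The specialized product becomes
\begin{align*}
\prod_{i,l=1}^{N_1}\frac{1+p^{i+l-1}}{1-p^{i+l-1}}.
\end{align*}
For each $n$, the number of pairs $(i,l)$ with $i+l-1=n$ is $n$ once $N_1\ge n$. The product therefore converges coefficientwise, as a formal power series in $p$, to $\prod_n\big((1+p^n)/(1-p^n)\big)^n$, and similarly in $q$. On the left-hand side, any strict plane partition of fixed weight fits in an $N\times N$ base once $N$ exceeds its weight. Hence each coefficient of $p^aq^b$ stabilizes, and the finite sums tend to the full sum over strict plane partitions.

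\textbf{Main obstacle.} The main obstacle is justifying the interchange of limits with the identity of Step 1. I would handle it by working throughout in $\mathbb{Z}[[p,q]]$ and comparing coefficients. For fixed total degree, only boxes with $M_j,N_j$ larger than that degree matter, so both sides are eventually constant in $M_j,N_j$. A secondary check is the bookkeeping of the half-integer exponents on $\tilde\pi_0$ in Step 2, where the two contributions of $\tfrac12|\tilde\pi_0|$ must combine to exactly $|\tilde\pi_0|$.
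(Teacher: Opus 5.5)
Your proposal is correct and follows essentially the same route as the paper. Both arguments rest on the $M_j\to\infty$ identity (\ref{scalar-product-Gamma1-Gamma2-m-wq}), which equates the vertex-operator correlator with $\prod\frac{1+x_iz_l}{1-x_iz_l}\prod\frac{1+y_jv_k}{1-y_jv_k}$ and with the weighted sum over strict plane partitions, then specialize to $p^{n-\frac12},q^{m-\frac12}$ and let $N_j\to\infty$ using that $i+l-1=n$ has exactly $n$ solutions; your explicit telescoping of the half-integer exponents into $|\tilde\pi^j|$ and your coefficientwise stabilization argument for the joint limit make precise steps the paper leaves implicit.
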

	
	\begin{proof}
		Based on Eqs.$(\ref{Gamma--right})$ and $(\ref{Gamma+-left})$, assume that
		\begin{align}
			&\tilde{\Gamma}_-(p^{i-\frac{1}{2}},q^{i-\frac{1}{2}}) |\tilde{\nu}^1,\tilde{\nu}^2 \rangle\notag\\
			=&\sum_{{\tilde{\nu}^1 \prec \tilde{\pi}^1_{i-1}}	\atop{\tilde{\nu}^2 \prec \tilde{\pi}^2_{i-1}}}
			\prod_{j=1}^{2}
			2^{\#(\tilde{\pi}^j_{i-1}|\tilde{\nu}^j) - l(\tilde{\pi}^j_{i-1})+l(\tilde{\nu}^j)}
			p^{(i-\frac{1}{2})(|\tilde{\pi}^1_{i-1}| - |\tilde{\nu}^1|)}  q^{(i-\frac{1}{2})(|\tilde{\pi}^2_{i-1}| - |\tilde{\nu}^2|)} |\tilde{\pi}^1_{i-1},\tilde{\pi}^2_{i-1} \rangle,  \\
			&\langle \tilde{\nu}^2,\tilde{\nu}^1|\tilde{\Gamma}_+(p^{i-\frac{1}{2}},q^{i-\frac{1}{2}})\notag\\
			=&\sum_{{\tilde{\nu}^1 \prec \tilde{\pi}^1_{-i+1}}	\atop{\tilde{\nu}^2 \prec \tilde{\pi}^2_{-i+1}}}
			\prod_{j=1}^{2}	
			2^{\#(\tilde{\pi}^j_{-i+1}|\tilde{\nu}^{j}) - l(\tilde{\pi}^j_{-i+1})+l(\tilde{\nu}^{j})}
			p^{(i-\frac{1}{2})(|\tilde{\pi}^1_{-i+1}| - |\tilde{\nu}^1|)}    q^{(i-\frac{1}{2})(|\tilde{\pi}^2_{-i+1}| - |\tilde{\nu}^2|)}     \langle \tilde{\pi}^2_{-i+1},\tilde{\pi}^1_{-i+1}|.
		\end{align}
		By using Eqs.$(\ref{Gamma-left})-(\ref{Gamma+left-1})$, we obtain
		\begin{align}
			&\tilde{\Gamma}_-(p^{\frac{1}{2}},q^{\frac{1}{2}})\cdots\tilde{\Gamma}_-(p^{N - \frac{1}{2}},q^{N - \frac{1}{2}})
			|0\rangle \notag\\
			=&	\sum_{{ [N_{1}, \infty]  \ni \tilde{\pi}^{1}_0 \succ \cdots \succ \tilde{\pi}^{1}_{N_{1}} = \emptyset}
				\atop{[N_{2}, \infty]  \ni \tilde{\pi}^{2}_0 \succ \cdots \succ \tilde{\pi}^{2}_{N_{2}} = \emptyset}}
			2^{-l(\tilde{\pi}^1_{0})-l(\tilde{\pi}^2_{0})}
			\prod_{j=1}^{N_1}
			2^{\#(\tilde{\pi}^1_{j-1}|\tilde{\pi}^1_{j})}	p^{(j-\frac{1}{2})(|\tilde{\pi}^1_{j-1}| - |\tilde{\pi}^1_{j}|)}  \prod_{k=1}^{N_2}	
			2^{\#(\tilde{\pi}^2_{k-1}|\tilde{\pi}^2_{k})}
			q^{(k-\frac{1}{2})(|\tilde{\pi}^1_{k-1}| - |\tilde{\pi}^2_{k}|)}  |\tilde{\pi}^{1}_0,\tilde{\pi}^{2}_0\rangle,\label{right-scalar-product} \\
			&\langle 0|
			\tilde{\Gamma}_+(p^{N - \frac{1}{2}},q^{N - \frac{1}{2}})\cdots
			\tilde{\Gamma}_+(p^{\frac{1}{2}},q^{\frac{1}{2}})\notag\\
			=&\sum_{{ \emptyset=\tilde{\pi}^{1}_{-N_{1}} \prec \cdots \prec \tilde{\pi}^{1}_0 \in [N_{1}, \infty]  }
				\atop{\emptyset=\tilde{\pi}^{2}_{-N_{2}} \prec \cdots \prec \tilde{\pi}^{2}_0  \in [N_{2}, \infty]  }}
			2^{-l(\tilde{\pi}^1_{0})-l(\tilde{\pi}^2_{0})}
			\prod_{i=1}^{N_1}
			2^{\#(\tilde{\pi}^1_{-i+1}|\tilde{\pi}^1_{-i})}
			p^{(i-\frac{1}{2})(|\tilde{\pi}^1_{-i+1}| - |\tilde{\pi}^1_{i}|)}  \notag\\
			&\cdot \prod_{l=1}^{N_2}
			2^{\#(\tilde{\pi}^2_{-l+1}|\tilde{\pi}^2_{-l})}	
			q^{(l-\frac{1}{2})(|\tilde{\pi}^2_{-l+1}| - |\tilde{\pi}^2_{l}|)} \langle\tilde{\pi}^{2}_0, \tilde{\pi}^{1}_0 |.\label{left-scalar-product}
		\end{align}
		It follows from the Lemma $\ref{strict-pi}$, Eq.$(\ref{fock-scalar-product})$, Eq.$(\ref{right-scalar-product})$ and Eq.$(\ref{left-scalar-product})$ that
		\begin{align}
			&\langle 0|
			\tilde{\Gamma}_+(p^{N - \frac{1}{2}},q^{N - \frac{1}{2}})\cdots\tilde{\Gamma}_+(p^{\frac{1}{2}},q^{\frac{1}{2}})
			\tilde{\Gamma}_-(p^{\frac{1}{2}},q^{\frac{1}{2}})\cdots\tilde{\Gamma}_-(p^{N - \frac{1}{2}},q^{N - \frac{1}{2}})
			|0\rangle \notag\\
			=&\sum_{{\tilde{\pi}^{1} \subseteq [N_{1}, N_{1},\infty]}	\atop{\tilde{\pi}^{2} \subseteq [N_{2}, N_{2}, \infty]}}		
			2^{p{(\tilde{\pi}^{1})}}2^{p{(\tilde{\pi}^{2})}}
			p^{|\tilde{\pi}^{1}|}q^{|\tilde{\pi}^{2}|}.
		\end{align}
		Meanwhile, Eq.$(\ref{scalar-product-Gamma1-Gamma2-m-wq})$ can be expressed as
		\begin{align}
			&\langle 0|
			\tilde{\Gamma}_+(p^{N- \frac{1}{2}},q^{N - \frac{1}{2}})\cdots\tilde{\Gamma}_+(p^{\frac{1}{2}},q^{\frac{1}{2}})
			\tilde{\Gamma}_-(p^{\frac{1}{2}},q^{\frac{1}{2}})\cdots\tilde{\Gamma}_-(p^{N - \frac{1}{2}},q^{N - \frac{1}{2}})
			|0\rangle \notag\\
			=&\prod_{i,j=1}^{N_1} \frac{1 + p^{j+i-1}}{1 - p^{j+i-1}}
			\prod_{l,k=1}^{N_2} \frac{1 + q^{l+k-1}}{1 - q^{l+k-1}}.
		\end{align}
		Then
		\begin{align}
			\sum_{{\tilde{\pi}^{1} \subseteq [N_{1}, N_{1},\infty]}	\atop{\tilde{\pi}^{2} \subseteq [N_{2}, N_{2}, \infty]}}	
			2^{p{(\tilde{\pi}^{1})}}2^{p{(\tilde{\pi}^{2})}}	p^{|\tilde{\pi}^{1}|}q^{|\tilde{\pi}^{2}|}	
			=\prod_{i,l=1}^{N_1} \frac{1 - p^{l+i-1}}{1 - p^{l+i-1}} \prod_{j,k=1}^{N_2} \frac{1 + q^{j+k-1}}{1 + q^{j+k-1}}.
		\end{align}
		Considering the limit $N_{1},N_{2} \to \infty$, we obtain
		\begin{align}
			&\sum_{\begin{smallmatrix}\tilde{\pi}^{1}\text{ and }\tilde{\pi}^{2}\text{ are}\\\text{strict plane partitions}\end{smallmatrix}}
			2^{p{(\tilde{\pi}^{1})}}2^{p{(\tilde{\pi}^{2})}}
			p^{|\tilde{\pi}^{1}|}q^{|\tilde{\pi}^{2}|}  \notag\\
			=&\lim\limits_{N_{1},N_{2}\to \infty}	
			\left(\prod_{i=1}^{N_1} \frac{1 + p^{i}}{1 - p^{i}}  \cdots  \prod_{i=N_1}^{2N_1} \frac{1 + p^{i}}{1 - p^{i}}\right)
			\left(\prod_{j=1}^{N_2} \frac{1 + q^{i}}{1 - q^{j}}  \cdots  \prod_{j=N_2}^{2N_2} \frac{1 + q^{j}}{1 - q^{j}}\right)\notag\\
			=&\prod_{n=1}^\infty\left(\frac{1+p^n}{1-p^n}\right)^n
			\prod_{m=1}^\infty\left(\frac{1+q^m}{1-q^m}\right)^m.
		\end{align}
	\end{proof}
It is proved that the generating function for BUC plane partitions can be derived  with the double scaling limit.

	\section{Conclusions and discussions}
	This paper is concerned with the  relation between  generalized $i$-boson model and boxed BUC plane partitions.
	With the help of maps $\tilde{\mathcal{M}}$ and $\tilde{\mathcal{M}}^*$,
 the actions of  monodromy matrix operators on the basis vectors generate interlacing  strict boxed   $2$-partition, which  can also be produced by the actions of the neutral fermion vertex operators on the state vectors in $\mathcal{\widetilde{F}}$ and  $\mathcal{\widetilde{F}}^{*}$.
	Meanwhile, the generating function for boxed BUC plane partitions have been  expressed as products of Schur $Q$-functions.
	We also discuss the limiting forms of the scalar product of the   generalized $i$-boson model and obtain the  generating function for  BUC plane partitions with the double scaling limit.
	It is worth studying  the relation	between topological field theories and quantum integrable systems.
	It should be pointed out that the  correspondence
	between the G/G gauged Wess-Zumino-Witten (WZW) model and phase model, as well as that between the G/G gauged WZW-matter model
	and  $q$-boson model, has been established\cite{G/G-phase-model,G/G-q-boson-model}.
	It would be interesting to investigate the correspondence between
	the   generalized $i$-boson model and  the G/G gauged WZW model.

	\section{Acknowledgements}
	This article is dedicated to Professor Ke Wu in Capital Normal University in celebration of his 80th birthday. And this work is supported by the National Natural Science Foundation of China (Grant No.12461048) and Nature Science Foundation of Inner Mongolia Autonomous Region (Grant NO.2023MS1003). The authors gratefully acknowledge the support of Professor Ke Wu and Professor Weizhong Zhao at Capital Normal University, China.


\begin{thebibliography}{99}
		\bibitem{quantum-groups}N. Yu. Reshetikhin, L. A. Takhtadzhyan and L. D. Faddeev, Quantization of Lie groups and Lie algebras, Leningrad Math. J. 1 (1990) 193-225.
		
		\bibitem{conformal-field}V. V. Bazhanov, S. L. Lukyanov and A. B. Zamolodchikov, Integrable structure of conformal field theory, quantum KdV theory and thermodynamic Bethe ansatz, Comm. Math. Phys. 177 (1996) 381-398.
		
		\bibitem{Yang-Baxter-1}A. Yu. Volkov and L. D. Faddeev, Yang-baxterization of the quantum dilogarithm, J. Math. Sci. 88 (1998) 202-207.
		
		\bibitem{Yang-Baxter-2}S. \`{E}. Derkachev and V. P. Spiridonov, Yang-Baxter equation, parameter permutations, and the elliptic beta integral, Russ. Math. Surv. 68 (2013) 1027-1072.
		
		\bibitem{Mac}I. G. Macdonald, Symmetric Functions and Hall Polynomials, Clarendon Press, Oxford 1995.
		
		\bibitem{BKP-1}Y. You, Polynomial solutions of the BKP hierarchy and projective representations of symmetric groups, in Infinite-Dimensional Lie Algebras and Groups, Adv. Ser. Math. Phys. 7 (1989) 449-464.
		
		\bibitem{BKP-2} J. Nimmo, Hall-Littlewood symmetric functions and the BKP equation, J. Phys. A 23 (1990) 751760.
		
		\bibitem{V.G.}V. G. Kac, Infinite Dimensional Lie algebras, Cambridge University Press, Cambridge 1990.
		
		\bibitem{Jimbo1981} E. Date, M. Kashiwara, M. Jimbo and T. Miwa, Transformation groups for soliton equations: Nonlinear integrable systems-classical theory and quantum theory (Kyoto, 1981), World Scientific Publishing, Singapore 1983.
		
		\bibitem{QISM-1}E. K. Sklyanin, L. A. Takhtadzhyan and L. D. Faddeev, Quantum inverse problem method. I, Theor. Math. Phys. 40 (1979) 688-706.
		
		\bibitem{QISM-2}V. Korepin, N. M. Bogoliubov and A. Izergin, Quantum Inverse Scattering Method and Correlation Functions, Cambridge University Press,  Cambridge 1993.
		
		\bibitem{q-boson}N. M. Bogoliubov and R. K. Bullough, A $q$-deformed completely integrable Bose gas model, J. Phys. A 25 (1992) 4057-4071.
		
		\bibitem{boshilunwen}M. Wheeler, Free fermions in classical and quantum integrable models, Ph.D thesis, Univ. Melbourne, arXiv:1110.6703v1.
		
		\bibitem{i-boson}N. H. Jing, Z. J. Li and T. W.  Cai, Correlation functions of charged free boson and fermion systems, J. Stat. Mech. 2020 (2020) 083101.
		
		\bibitem{B97}N. M. Bogoliubov, A. G. Izergin and N. A. Kitanine, Correlators of the phase model, Phys. Lett. A 231 (1997) 347-352.
		
		\bibitem{q-boson-hall-little}N. V. Tsilevich, Quantum inverse scattering method for the $q$-boson model and symmetric functions, Funct. Anal. Appl. 40 (2006) 207-217.
		
		\bibitem{Q-boson}T. Araujo, Q-boson model and relations with integrable hierarchies, Nucl. Phys. B 1006 (2024) 116640.
		
	\bibitem{six} 	R. G. Baxter, Exactly Solved Models in Statistical Mechanics,  Academic press, London 1982.	
		\bibitem{five} N. M. Bogoliubov and C. L. Malyshev, Scalar product of the five-vertex model and complete symmetric polynomials, J. Math. Sci. 284 (2024) 654-664.
		
		\bibitem{MacMahon}P. A. MacMahon, Combinatory Analysis,  Cambridge University Press, Cambridge  1915.
		
		\bibitem{C1} F. Bergeron, G. Labelle and P. Leroux, Combinatorial Species and Tree-Like Structures, Cambridge University Press, Cambridge 1998.
		
		\bibitem{C2} Richard Peter Stanley, Enumerative Combinatorics. Volume 2, Volume 62 of Cambridge Studies in Advanced Mathematics,  Cambridge University Press, Cambridge 1999.	
		
		\bibitem{Statistical-1}A. M. Vershik, Statistical mechanics of combinatorial partitions, and their limit shapes, Funct. Anal. Appl. 30 (1996) 90-105.
		
		\bibitem{Statistical-2} R. J. Baxter, Exactly Solved Models in Statistical Mechanics, Academic Press, New York 1982.
		
		\bibitem{ferm}A. Okounkov, Random matrices and random permutations, Int. Math. Res. Not. 2000 (2000) 1043-1095.
		
		\bibitem{bosonic}N. M. Bogolyubov, Enumeration of plane partitions and the algebraic Bethe anzatz, Theor. Math. Phys. 150 (2007) 165-174.
		
		\bibitem{Schur-process} A. Okounkov and N. Reshetikhin, Correlation function of Schur process with application to local geometry of a random 3-dimensional Young diagram, J. Am. Math. Soc. 16 (2003) 581-603.
		
		\bibitem{shifted-Schur-process}  M. Vuleti\'{c}, Shifted Schur process and asymptotics of large random strict plane partitions, Int. Math. Res. Not. 14 (2007) 043.
		
		\bibitem{Foda06} O. Foda and M. Wheeler, BKP plane partitions, J. High Energ. Phys.  01 (2007) 075.
		
		\bibitem{Foda09} O. Foda, M. Wheeler and M. Zuparic, On free fermions and plane partitions, J. Algebra 321 (2009) 3249-3273.
		
		\bibitem{2-parameter} M. Vuleti\'{c}, A generalization of MacMahon's formula, Trans. Am. Math. Soc. 361 (2009) 2789-2804.
		
		\bibitem{Jing-B}N. H. Jing, Vertex operators and Hall-Littlewood symmetric functions, Adv. Math. 87 (1991) 226-248.
		
		\bibitem{Jing-F}N. H. Jing, Boson-fermion correspondence for Hall-Littlewood polynomials, J. Math. Phys. 36 (1995) 7073-7080.
		
		\bibitem{1-parameter} O. Foda and M. Wheeler, Hall-Littlewood plane partitions and KP, Int. Math. Res. Not. 14  (2009) 2597-2619.
		
		\bibitem{UC-BUC}S. Y. Zhang and Z. W. Yan, UC and BUC plane partitions, Eur. Phys. J. C 371 (2024) 84.
		
		\bibitem{Boxed-q-boson-model}N. M. Bogoliubov, Boxed plane partitions as an exactly solvable boson model, J. Phys. A 38 (2005) 9415-9430.	
		
		\bibitem{Boxed-skew-phase-model}K. Shigechi and M. Uchiyama, Boxed skew plane partition and integrable phase model, J. Phys. A 38 (2005) 10287-10306.	
		
		\bibitem{Four-vertex-Model} N. M. Bogoliubov, Four-vertex model and random tilings, Theor. Math. Phys. 155 (2008) 523-535.
	\bibitem{Six-vertex-Model}		K. Motegi, Factorization of rational six vertex model partition functions,  Nucl. Phys. B 1009 (2024) 116743.
		\bibitem{t-g-phase-model-boxed-UC}S. Y. Zhang, J. Z. Liu and Z. W. Yan, Boxed UC plane partitions and the two-site generalized phase model, arXiv:2601.03941.
		
		\bibitem{BUC}Y. Ogawa, Generalized Q-functions and UC hierarchy of B-Type, Tokyo J. Math. 32 (2009) 350-380.
		
		\bibitem{G/G-phase-model}S. Okuda and Y. Yoshida, G/G gauged WZW model and Bethe Ansatz for the phase model, J. High Energ. Phys. 2012 (2012) 146.
		
		\bibitem{G/G-q-boson-model}S. Okuda and Y. Yoshida, G/G gauged WZW-matter model, Bethe Ansatz for $q$-boson model and Commutative Frobenius algebra, J. High Energ. Phys. 2014  (2014) 3.
		
		
	\end{thebibliography}
\end{document}